\documentclass{article}

\usepackage{graphicx}

\usepackage{amsfonts}
\usepackage{amssymb}
\usepackage{amstext}
\usepackage{amsmath}
\usepackage{enumerate}
\usepackage{algorithm}
\usepackage{algorithmic}


\usepackage{graphicx}

\usepackage{bbm}

\usepackage{amsthm}

\usepackage{thmtools}

\usepackage{verbatim} 

\usepackage[dvipsnames,svgnames,table]{xcolor}
\definecolor{dartmouthgreen}{rgb}{0.05, 0.5, 0.06}
\definecolor{ceruleanblue}{rgb}{0.16, 0.32, 0.75}

\usepackage[colorlinks=true,pdfpagemode=UseNone,citecolor=dartmouthgreen,linkcolor=ceruleanblue,urlcolor=BrickRed,pdfstartview=FitH]{hyperref}
\usepackage{cleveref}

\usepackage{framed}
\usepackage{enumitem}

\makeatletter
 \setlength{\textwidth}{6.5in}
 \setlength{\oddsidemargin}{0in}
 \setlength{\evensidemargin}{0in}
 \setlength{\topmargin}{0.25in}
 \setlength{\textheight}{8.25in}
 \setlength{\headheight}{0pt}
 \setlength{\headsep}{0pt}
 \setlength{\marginparwidth}{59pt}

 \setlength{\parindent}{0pt}
 \setlength{\parskip}{5pt plus 1pt}
 \setlength{\abovedisplayskip}{8pt plus 3pt minus 6pt}

 \makeatother


\newtheorem{theorem}{Theorem}[section]
\newtheorem{fact}[theorem]{Fact}
\newtheorem{lemma}[theorem]{Lemma}
\newtheorem{definition}[theorem]{Definition}

\newtheorem{claim}[theorem]{Claim}

\newtheorem*{problem*}{Problem}
\newtheorem{remark}[theorem]{Remark}
\newtheorem*{remark*}{Remark}


\numberwithin{equation}{section}
\numberwithin{table}{section}

\renewcommand{\preceq}{\preccurlyeq}
\renewcommand{\succeq}{\succcurlyeq}

\renewcommand{\tilde}{\widetilde}

\newcommand{\R}{\ensuremath{\mathbb R}}

\newcommand{\K}{\mathcal{K}}

\newcommand{\poly}{\operatorname{poly}}


\newcommand{\junk}[1]{}

\renewcommand{\L}{{\mathcal L}}

\newcommand{\norm}[1]{\left\lVert#1\right\rVert}

\newcommand{\vertiii}[1]{{\left\vert\kern-0.25ex\left\vert\kern-0.25ex\left\vert #1 \right\vert\kern-0.25ex\right\vert\kern-0.25ex\right\vert}}

\newcommand{\one}{\ensuremath{\mathbbm{1}}}

\def\b1{{\bf 1}}
\def\eps{{\epsilon}}

\def\R{\mathbb{R}}

\def\diag{\operatorname{diag}} 
\def\polylog{\operatorname{polylog}} 

\def\tr{\operatorname{tr}}

\global\long\def\R{\mathbb{R}}
\global\long\def\C{\mathbb{C}}

\newcommand{\inner}[2]{\langle #1, #2 \rangle} 

\DeclareMathOperator{\supp}{supp}

\newcommand{\bigip}[2]{\bigl\langle #1, #2 \bigr\rangle}

\title{Spectral Sparsification by Deterministic Discrepancy Walk}

\author{
  Lap Chi Lau\footnote{University of Waterloo, Email: \href{mailto:lapchi@uwaterloo.ca}{lapchi@uwaterloo.ca}}
  \and
  Robert Wang\footnote{University of Waterloo, Email: \href{mailto:robert.wang2@uwaterloo.ca}{robert.wang2@uwaterloo.ca}}
  \and
  Hong Zhou\footnote{Fuzhou University, Email: \href{mailto:hong.zhou@fzu.edu.cn}{hong.zhou@fzu.edu.cn}}
}

\date{}

\begin{document}

\maketitle

\begin{abstract}
Spectral sparsification and discrepancy minimization are two well-studied areas that are closely related. 
Building on recent connections between these two areas, 
we generalize the ``deterministic discrepancy walk'' framework by Pesenti and Vladu [SODA~23] for vector discrepancy to matrix discrepancy, and use it to give a simpler proof of the matrix partial coloring theorem of Reis and Rothvoss [SODA~20].
Moreover, we show that this matrix discrepancy framework provides a unified approach for various spectral sparsification problems, from stronger notions including unit-circle approximation and singular-value approximation to weaker notions including graphical spectral sketching and effective resistance sparsification.
In all of these applications, our framework produces improved results with a simpler and deterministic analysis.
\end{abstract}

\newpage

\section{Introduction}

The aim of this work is to simplify and unify some previous work on spectral sparsification and discrepancy minimization.
For a better perspective of our results, we will first briefly review the main results and techniques in these two areas, and then see recent connections between them and also some recent motivations.

\subsubsection*{Spectral Sparsification}

Karger~\cite{Kar99} introduced the notion of a cut sparisifer of a graph, which is a sparse graph over its vertex set that approximately preserves the weight of all cuts.
Bencz\'ur and Karger~\cite{BK15} designed a non-uniform random sampling algorithm and proved that any graph on $n$ vertices has a cut sparsifier with $O(n \log n)$ edges.
This is a highly influential result that has many applications in designing fast graph algorithms.

Spielman and Teng~\cite{ST11} introduced a stronger notion called spectral sparsification.
Given a graph $G$, a sparse graph $H$ on the same vertex set is called an $\eps$-spectral sparsifer of $G$ if 
$(1-\eps) L_G \preceq L_H \preceq (1+\eps) L_G$
where $L_G$ and $L_H$ are the Laplacian matrices of $G$ and $H$ respectively\footnote{
Note that a spectral sparsifier is a cut sparsifier but not necessarily vice versa.}.
They proved that any graph on $n$ vertices has an $\eps$-spectral sparsifier with $O( (n \polylog n) / \eps^2)$ edges, and used it to design an important algorithm for solving Laplacian equations~\cite{ST14}.

The stronger notion of spectral sparsification admits the following linear algebraic formulation of the problem.
Given vectors $v_1, \ldots, v_m \in \R^n$ such that $\sum_{i=1}^m v_i v_i^\top = I_n$, find a sparse reweighing $s \in \R_{\geq 0}^m$ such that $\sum_{i=1}^m s(i) \cdot v_i v_i^\top \approx I_n$.
Using this formulation, Spielman and Srivastava~\cite{SS11} proved that any graph on $n$ vertices has an $\eps$-spectral sparsifier with $O( (n \log n)/ \eps^2)$ edges, generalizing the result of Bencz\'ur and Karger, by analyzing a natural non-uniform sampling algorithm using matrix concentration inequalities.
Subsequently, Batson, Spielman and Srivastava~\cite{BSS12} proved the striking result that any graph on $n$ vertices has an $\eps$-spectral sparsifier with only $O( n / \eps^2)$ edges\footnote{
Until now, there is no alternative approach to prove the existence of a linear-sized cut sparsifier, without using the spectral sparsification formulation.}.
Their algorithm is quite different from random sampling, based on an ingenious potential function and a deterministic incremental approach. 

Furthermore, Marcus, Spielman and Srivastava~\cite{MSS15} extended the potential function argument significantly to prove Weaver's conjecture~\cite{Weaver} about a discrepancy problem\footnote{Weaver proved that the discrepancy problem is equivalent to a major open problem in mathematics called the Kadison-Singer problem.  We will describe its interpretation as a discrepancy minimization problem.}: Given vectors $v_1, \ldots, v_m$ such that $\sum_{i=1}^m v_i v_i^\top = I_n$ and $\norm{v_i}^2_2 \leq \eps$ for $1 \leq i \leq m$, there is a partitioning $S_1 \cup S_2 = [m]$ such that $(\frac12 - \sqrt{\eps})^2 \cdot I_n \preceq \sum_{i \in S_j} v_i v_i^\top \preceq (\frac12 + \sqrt{\eps})^2 \cdot I_n$ for $j \in \{1,2\}$. 
The results and techniques for spectral sparsification and Weaver's conjecture have far-reaching consequences with many subsequent works.

\subsubsection*{Discrepancy Minimization}

Discrepancy theory is a broad area with various settings~\cite{Chazelle, Matousek}.
We consider the combinatorial setting where there are $m$ subsets $S_1, \ldots, S_m$ of the ground set $[n]$ with $m \geq n$, and the goal is to find a coloring $s \in \{\pm 1\}^m$ to minimize the maximum discrepancy defined as $\max_{i \in [m]} \big|\sum_{j \in S_i} s(j)\big|$.
A famous result by Spencer~\cite{Spe85} proved that there exists a coloring with discrepancy $O(\sqrt{n \log (m/n)})$, beating the bound obtained by a random coloring.
His proof is non-constructive, based on the pigeonhole principle, and it was a long-standing open question to design an efficient algorithm to find such a coloring.

Bansal~\cite{Ban10} made the breakthrough in obtaining the first polynomial time algorithm to find a coloring matching Spencer's result when $m=n$.
Lovett and Meka~\cite{LM15} simplified Bansal's algorithm and fully recovered Spencer's result.  
Both algorithms in~\cite{Ban10,LM15} are based on some interesting random walks similar to Brownian motion with elegant analyses. 
Rothvoss~\cite{Rot17} provided a beautiful alternative algorithmic approach to recover Spencer's result using convex geometry, reducing the problem to lower bounding the volume of a convex body\footnote{If the volume of a convex body in $\R^n$ is ``large enough'', then one can obtain a point in the convex body with $\Omega(n)$ coordinates being $\{\pm 1\}$ by simply projecting a random Gaussian vector to the convex body.} (a polytope in Spencer's setting).
These results laid the foundation for many subsequent developments in algorithmic discrepancy theory in recent years.

The combinatorial discrepancy problem can also be formulated as a vector discrepancy problem:
Given vectors $v_1, \ldots, v_n \in \R^m$, find a coloring $s \in \{\pm 1\}^n$ to minimize $\norm{\sum_i s(i) \cdot v_i}_{\infty}$. 
A natural generalization is the following matrix discrepancy problem: 
Given matrices $A_1, \ldots, A_n \in \R^{m \times m}$, find a coloring $s \in \{\pm 1\}^n$ to minimize $\norm{\sum_i s(i) \cdot A_i}_{{\rm op}}$.
The matrix Spencer conjecture\footnote{It is not difficult to see that it already generalizes Spencer's result when each matrix is a diagonal matrix.}~\cite{Zou12,Mek14} has generated much interest: 
Given symmetric matrices $A_1, \ldots, A_n \in \R^{n \times n}$,
if $\norm{A_i}_{{\rm op}} \leq 1$ for $1 \leq i \leq n$, then there is a coloring $s \in \{\pm 1\}^n$ with $\norm{\sum_i s(i) \cdot A_i}_{\infty} = O(\sqrt{n})$.
Recently, Bansal, Jiang, and Meka~\cite{BJM23} made significant progress in resolving this conjecture and revealed interesting connections to random matrix theory and free probability~\cite{BBvH23}.

\subsubsection*{Connections}

While the setting of matrix discrepancy minimization seems naturally similar to that of spectral sparsification\footnote{Note that Weaver's conjecture can be restated as finding a coloring $\{\pm 1\}^m$ such that $\norm{\sum_{i=1}^m s(i) \cdot v_i v_i^\top}_{{\rm op}} \leq O(\eps)$ }, 
there are no known concrete connections between the two areas until relatively recently.

In one direction, Reis and Rothvoss~\cite{RR20} were the first to use techniques in algorithmic discrepancy theory to construct linear-sized spectral sparsifiers.
The main technical result in~\cite{RR20} is the following matrix partial coloring theorem\footnote{One advantage of this theorem is that it works for any symmetric matrices, not just for rank one matrices of the form $vv^\top$, and this will be useful for our applications.}:
Given symmetric matrices $A_1, \ldots, A_m \in \R^{n \times n}$ that satisfies $\sum_{i=1}^m |A_i| \preceq I_n$, there is a partial fractional coloring $x \in [-1,+1]^m$ such that there are $\Omega(m)$ coordinates in $\{\pm 1\}$ and $\| \sum_{i=1}^m x(i) \cdot A_i \|_{{\rm op}} \leq O(\sqrt{n/m})$.
This matrix partial coloring theorem can be applied recursively to construct a sparse reweighting for spectral sparsification~\cite{RR20}, which we will explain in \autoref{s:SV-sparsification}.
To prove the matrix partial coloring theorem, Reis and Rothvoss used the convex geometric approach and bounded the volume of the operator norm ball $\{x \in \R^m : \| \sum_{i=1}^m x(i) \cdot A_i \|_{{\rm op}} \leq O(\sqrt{n/m}) \}$.
To lower bound the volume of the norm ball, they used the potential function in~\cite{BSS12} as a barrier function to show that a guided Gaussian random walk will stay in the norm ball with high probability.
Recently, this approach was extended by Jambulapati, Reis and Tian~\cite{JRT24} to construct linear-sized spectral sparsifier that are degree preserving.
Furthermore, Sachdeva, Thudi, and Zhao~\cite{STZ24} used the result in~\cite{BJM23} for the matrix Spencer problem to construct better spectral sparsifiers for Eulerian directed graphs.

In the other direction, Bansal, Laddha and Vempala~\cite{BLV22} and Pesenti and Vladu~\cite{PV23} used the potential functions in spectral sparsification for discrepancy minimization.
Both of these algorithms can be understood as ``deterministic walks'', tracking the discrepancy of a continuously evolving partial fractional coloring guided by a potential function.
The potential function used in~\cite{BLV22} is similar to that in~\cite{BSS12}, while the potential function used in~\cite{PV23} is based on a regularized optimization formulation developed by Allen-Zhu, Liao and Orecchia~\cite{AZLO15} for spectral sparsification.
They both showed that their approach provides a unifying algorithmic framework to recover many best known results in discrepancy minimization.

It is emerging from these recent connections that the potential functions in~\cite{BSS12,AZLO15} and the Brownian/deterministic discrepancy walks in~\cite{Ban10,LM15,RR20,BLV22,PV23} are the two underlying components for both spectral sparsification and discrepancy minimization.

\subsubsection*{Motivations}

Besides being a natural goal in developing a simple and unifying algorithmic framework for both spectral sparsification and discrepancy minimization,
there are recent applications that require better techniques for constructing stronger notions of spectral sparsifiers.
Motivated by problems related to solving directed Laplacian equations~\cite{CKPRSV17} and log-space derandomization of directed connectivity~\cite{AKMPSV20,APPSV23}, several new notions of spectral sparsification for directed graphs were proposed, including standard approximation~\cite{CKPRSV17}, unit-circle approximation~\cite{AKMPSV20} and singular-value approximation~\cite{APPSV23}.
We will present the formal definitions and mention some properties of these notions in \autoref{section: sparsifiers}.

One common requirement of these notions is that the sparsifiers need to be degree preserving.
It is not known how to adapt the standard techniques such as independent random sampling in~\cite{SS11} and the potential function based incremental algorithm in~\cite{BSS12,AZLO15} to satisfy this requirement.
The constructions in~\cite{CKPRSV17,APPSV23} and a recent improvement by Sachdeva, Thudi and Zhao are based on the short cycle decomposition technique developed in~\cite{CGP+23}, which is a dependent rounding method that samples alternating edges in even cycles to ensure degree preservation.

There are motivations in replacing the cycle decomposition technique by discrepancy walks.
An open problem is to determine whether there are linear-sized sparsifiers for these stronger notions of spectral sparsification~\cite{APPSV23}.
The cycle decomposition technique inherits some polylogarithmic losses and it is not even known how to use it to construct linear-sized spectral sparsifiers in the standard setting~\cite{BSS12},
while the discrepancy approach was used successfully to construct linear-sized degree-preserving spectral sparsifiers~\cite{JRT24}.
Another motivation is to develop a unifying approach for spectral sparsification and discrepancy minimization, to consolidate our understanding for more challenging problems in these fields.

\subsection{Our Results}

We generalize the deterministic discrepancy walk approach by Pesenti and Vladu~\cite{PV23} for the vector discrepancy setting to the matrix discrepancy setting, and obtain a considerably simpler proof of the matrix partial coloring theorem of Reis and Rothvoss~\cite{RR20}.
Moreover, we show that this provides a unifying framework to derive simpler and sparser constructions for various spectral sparsification problems.
One additional advantage is that this gives the first deterministic algorithms for all these applications, which also leads to more elementary proofs and simpler calculations bypassing the matrix concentration inequalities and probabilistic analyses in previous work.

\subsubsection*{Matrix Partial Coloring and Matrix Sparsification}

The main technical result is a deterministic version of the matrix partial coloring theorem by Reis and Rothvoss~\cite{RR20}, augmented with linear constraints as was done in~\cite{JRT24}. 
The linear constraints provide extra flexibility, such as preserving degrees exactly, that are crucial in all applications in this paper.

\begin{theorem}[Deterministic Matrix Partial Coloring]\label{lemma:deterministic-partial-coloring}
    Let $A_1, \dots, A_m \in \R^{n \times n}$ be symmetric matrices such that $\sum_{i=1}^m |A_i| \preccurlyeq I_d$. 
    Let $\mathcal{C} \subseteq \R^m$ be a set of good partial fractional colorings defined as
    \[
    \mathcal{C} := \left\{x \in \R^m : \norm{\sum_{i=1}^m x(i) \cdot A_i}_{\rm op} \leq 16 \sqrt{\frac{n}{m}} \right\}.
    \]
    In addition, let $\mathcal{H} \subseteq \R^m$ be a linear subspace of dimension $c \cdot m$ for some constant $c \geq \frac45$.
    There is a deterministic polynomial time algorithm that returns a partial fractional coloring $x \in [-1, 1]^m$ such that
    \[
        x \in \mathcal{C} \cap \mathcal{H} \quad \text{and} \quad |\{i \in [m] : x_i = \pm 1\}| = \Omega(m).
    \]
\end{theorem}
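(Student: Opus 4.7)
The plan is to generalize the Pesenti--Vladu deterministic discrepancy walk from the vector to the matrix setting, replacing their scalar barrier by a two-sided smooth barrier for the operator norm of $Y(x) := \sum_{i=1}^m x(i) A_i$. A natural choice is an Allen-Zhu--Liao--Orecchia style regularized potential such as
\[
\Phi(x) \;=\; \tr\!\bigl((\alpha I_n - Y(x))^{-1}\bigr) + \tr\!\bigl((\alpha I_n + Y(x))^{-1}\bigr),
\]
with $\alpha$ a small constant multiple of $\sqrt{n/m}$, so that $\Phi$ blows up precisely when some eigenvalue of $Y(x)$ leaves the open interval $(-\alpha,\alpha)$, while at $x^{(0)} = 0$ it is only $\Phi(0) = 2n/\alpha = \Theta(\sqrt{nm})$. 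The target set $\mathcal{C}$ is then certified by keeping $\Phi$ bounded by a fixed constant multiple of $\Phi(0)$ throughout.

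The walk takes small steps $x^{(t+1)} = x^{(t)} + \delta u^{(t)}$, where $u^{(t)} \in \R^m$ is chosen as any unit vector in the intersection of three linear constraints: (a) $u^{(t)} \in \mathcal{H}$, a subspace of dimension $cm \geq \tfrac{4}{5} m$; (b) $u^{(t)}(i) = 0$ for every frozen coordinate $i \in F_t := \{i : |x^{(t)}(i)| = 1\}$; and (c) $u^{(t)}$ is orthogonal to the gradient vector $g^{(t)} \in \R^m$ whose $i$-th entry is $\langle \nabla_{Y} \Phi(x^{(t)}), A_i \rangle$. Constraint (c) kills the first-order change of $\Phi$ along the step, while (a) and (b) together leave a subspace of dimension at least $cm - |F_t|$; subtracting one more for (c), a valid $u^{(t)}$ exists by dimension count as long as $|F_t| < cm - 1$, and can be found by solving a single linear system. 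The walk freezes every coordinate that hits $\pm 1$ at the end of each step, and terminates once $|F_t| = \Omega(m)$, producing the required number of integer coordinates in $x$.

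The heart of the proof, and its principal obstacle, is the second-order potential estimate showing that $\Phi$ stays below its blow-up threshold throughout the walk. Expanding $\Phi(x^{(t)} + \delta u^{(t)})$ to second order, the increment reduces by constraint (c) to a quadratic form in $u^{(t)}$ whose dominant term is $\delta^2 \sum_{i,j} u^{(t)}(i) u^{(t)}(j) \, \tr\!\bigl(M A_i M A_j\bigr)$, where $M := (\alpha I_n - Y(x^{(t)}))^{-1}$ (together with the symmetric contribution from $\alpha I_n + Y$). Applying Cauchy--Schwarz in the matrix inner product $\langle X, Z \rangle_M := \tr(M X M Z)$ bounds this by $\delta^2 \|u^{(t)}\|_2^2 \sum_i \tr(M A_i M A_i)$, and it is precisely here that the hypothesis $\sum_i |A_i| \preceq I_n$ plays its role: via operator-monotone calculus it collapses $\sum_i \tr(M A_i M A_i)$ to at most $\tr(M^2)$, which is controlled by the current value of $\Phi$. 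Choosing $\delta = \Theta(1/\sqrt{m})$ then keeps the per-step relative increase of $\Phi$ below $O(1/m)$, so that over the $O(m/\delta^2)$ steps needed to freeze $\Omega(m)$ coordinates the potential grows by at most a constant factor, certifying $\|Y(x)\|_{\rm op} \leq 16 \sqrt{n/m}$ by the standard relation between the barrier and the operator norm. The cleanness of this reduction, enabled uniquely by the hypothesis $\sum_i |A_i| \preceq I_n$, is what allows us to bypass the convex-geometric volume argument of Reis--Rothvoss and obtain a deterministic polynomial-time algorithm.
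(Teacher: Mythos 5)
There is a genuine gap at what you yourself call the heart of the proof: the second-order estimate. Bounding the quadratic form $\sum_{i,j} u(i)u(j)\tr(M A_i M A_j)$ by $\|u\|_2^2$ times $\sum_i \tr(M A_i M A_i)$ is a bound by the \emph{trace} of the Hessian-type matrix $N$ with $N_{ij}=\tr(MA_iMA_j)$, i.e.\ by something at least as large as its largest eigenvalue over the whole space of directions. This is off by a factor of roughly $m^2$ from what is needed. Even granting your claim that $\sum_i \tr(MA_iMA_i)\leq \tr(M^2)$ (and note the Hessian of $\tr((\alpha I-Y)^{-1})$ is really $2\tr(MA_iMA_jM)$, with three factors of $M$), with $\alpha=\Theta(\sqrt{n/m})$ one has $\tr(M^2)\approx n/\alpha^2=m$, so the per-step second-order increase is $\delta^2\cdot m=1$, while over the $\Theta(m/\delta^2)=\Theta(m^2)$ steps needed to freeze $\Omega(m)$ coordinates the potential must grow by at most $O(\Phi(0))=O(\sqrt{nm})$ in total, i.e.\ by $O(\sqrt{n/m^3})$ per step. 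The missing idea is that the update direction must be restricted to the \emph{low eigenspace} of $N$: one shows that after discarding the $O(m)$ coordinates $i$ for which $\tr(M^{1/2}|A_i|)$ exceeds a few times its average (a Markov argument using $\sum_i\tr(M^{1/2}|A_i|)\leq\tr(M^{1/2})$), the remaining large principal submatrix of $N$ has trace $O(\sqrt{n}/m)$ (via $\tr(ACBC)\leq\tr(A|C|)\tr(B|C|)$), hence by averaging and Cauchy interlacing $\lambda_{m/3}(N)=O(\sqrt{n}/m^2)$; constraining $u^{(t)}$ to the span of the bottom $m/3$ eigenvectors (still a subspace of codimension $O(m)$, so compatible with your constraints (a)--(c)) is what makes the quadratic form small enough. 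Without this restriction the argument does not close, and no hypothesis on $\sum_i|A_i|$ alone can save the trace bound, since a single direction $u$ aligned with the top eigenvector of $N$ can make the quadratic form as large as $\tr(N)$.

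Two smaller issues: you omit the constraint $u^{(t)}\perp x^{(t)}$, which is what guarantees $\|x^{(t)}\|_2^2$ increases by $\delta^2$ each step and hence both bounds the number of iterations and gives $\sum_t\delta_t^2\leq m$, the quantity that multiplies the per-step second-order bound; without it your count of $O(m/\delta^2)$ steps is unjustified. Also, the paper avoids the two-sided barrier altogether by replacing each $A_i$ with $\mathrm{diag}(A_i,-A_i)$ and tracking a single regularized $\lambda_{\max}$ potential (whose closed form is a trace inverse with a self-adjusting shift $u_x$ rather than your fixed $\alpha$); this is only a simplification, not a correctness issue, but with a fixed $\alpha$ you additionally need to argue that boundedness of $\tr((\alpha I-Y)^{-1})$ together with the step-size bound keeps all eigenvalues strictly inside $(-\alpha,\alpha)$.
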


We simplify the proof of Reis and Rothvoss' result both conceptually and technically.
In~\cite{RR20}, a random walk guided by the potential function in~\cite{BSS12} was used to lower bound the Gaussian volume of the operator norm ball $\mathcal{C}$\footnote{This is actually an over-simplication of what was done in~\cite{RR20}, which is only made precise in~\cite{JRT24} through nontrivial convex geometric arguments. The original statements in~\cite{RR20} are more involved.}, which implies the existence of a good partial coloring through the convex geometric argument in~\cite{Rot17}.
In contrast, we directly use a deterministic discrepancy walk to construct a good partial coloring, bypassing all convex geometric concepts.
Also, we used the regularized optimization framework developed in~\cite{AZLO15,PV23} instead of the closely related potential function in~\cite{BSS12}.
This allows us to bound the matrix discrepancy through a standard step-by-step optimization process (like Newton steps), without manually shifting the barrier and dealing with some complicated higher order terms as in~\cite{RR20}.
This framework is more systematic and leads to conceptually simpler analysis with considerably easier calculations.
We will elaborate more about this point in \autoref{s:overview}.

By the similar reduction as in~\cite{RR20}, we obtain a deterministic matrix sparsification result with additional subspace requirement.

\begin{theorem}[Deterministic Matrix Sparsification] \label{theorem:sparsification-alg}
    Given positive semidefinite matrices $A_1, A_2,...A_m\in \mathbb{R}^{n\times n}$ such that $\sum_{i=1}^m A_i\preceq I_n$ and a linear subspace $\mathcal{H} \subseteq \mathbb{R}^m$ of dimension $m-O(n)$, there is a polynomial time deterministic algorithm to construct a sparse reweighting $s \in \R_{\geq 0}^m$ with $|\supp(s)| = O(n / \eps^2)$ such that $s - \vec{\one}_m\in \mathcal{H}$ and 
\[        
\bigg\| \sum_{i=1}^m s(i) \cdot A_i - \sum_{i=1}^m A_i \bigg\|_{\rm op} \leq \eps.
\]
\end{theorem}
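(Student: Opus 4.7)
Proof proposal. I plan to prove \autoref{theorem:sparsification-alg} by iteratively invoking \autoref{lemma:deterministic-partial-coloring} on the weighted matrices $s_i^{(t)} A_i$, following the Reis--Rothvoss reduction from~\cite{RR20} while carrying the subspace constraint through the iterations as in~\cite{JRT24}. Initialize $s^{(0)} := \vec{\one}_m$ and $S_0 := [m]$. In phase $t$, let $C_t := \norm{\sum_i s_i^{(t)} A_i}_{\rm op}$ and form the PSD matrices $B_i^{(t)} := s_i^{(t)} A_i / C_t$ for $i \in S_t$, so that $\sum_{i \in S_t} B_i^{(t)} \preccurlyeq I_n$. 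Apply \autoref{lemma:deterministic-partial-coloring} to $\{B_i^{(t)}\}_{i \in S_t}$ with the pulled-back subspace
\[
\mathcal{H}_t := \bigl\{ y \in \R^{|S_t|} \;:\; \diag\bigl(s^{(t)}|_{S_t}\bigr)\, y \text{ (extended by zero) lies in } \mathcal{H} \bigr\},
\]
negate the output if necessary so that at least as many coordinates equal $-1$ as equal $+1$, and call the result $y^{(t)}$. Update $s_i^{(t+1)} := s_i^{(t)}(1 + y_i^{(t)})$ on $S_t$, set $S_{t+1} := S_t \setminus \{i : y_i^{(t)} = -1\}$, stop as soon as $|S_T| = O(n/\eps^2)$, and output $s := s^{(T)}$. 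Since a coordinate only leaves $S_t$ by being set to $0$, the support of $s$ equals $|S_T|$.

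A joint induction establishes the three required properties. Subspace: because $s^{(t)} > 0$ on $S_t$, the scaling map is invertible, so $\dim \mathcal{H}_t = \dim \mathcal{H} - (m - |S_t|) \geq |S_t| - O(n) \geq \tfrac{4}{5}|S_t|$ as long as $|S_t| = \Omega(n)$, which is maintained by the stopping rule; and since each increment $s^{(t+1)} - s^{(t)}$ lies in $\mathcal{H}$ by construction, so does the telescoped difference $s - \vec{\one}_m$. Support: the negation step guarantees that $\Omega(|S_t|)$ coordinates are killed per phase, so $|S_t|$ shrinks geometrically and $T = O(\log(m\eps^2/n))$ phases suffice. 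Error: telescoping, the total matrix error is $\sum_{t=0}^{T-1} C_t \sum_{i \in S_t} y_i^{(t)} B_i^{(t)}$, whose operator norm is bounded by $16\sum_t C_t \sqrt{n/|S_t|}$; since $|S_t|$ decays geometrically, this sum is dominated by its final term and evaluates to $O(\sqrt{n/|S_T|}) = O(\eps)$. A parallel induction using $C_{t+1} \leq C_t (1 + 16\sqrt{n/|S_t|})$ then shows $C_t \leq \exp(O(\eps)) \leq 2$ throughout, closing the loop.

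The main obstacle is threading the two coupled inductive invariants---the lower bound on $\dim \mathcal{H}_t$ and the upper bound on the operator-norm scale $C_t$---consistently through every phase; this is precisely what forces both the normalization by $C_t$ inside $B_i^{(t)}$ and the choice of $\mathcal{H}_t$ as the \emph{pullback} of $\mathcal{H}$ under the diagonal scaling rather than merely its coordinate restriction to $S_t$. A secondary point is that \autoref{lemma:deterministic-partial-coloring} does not distinguish $+1$ from $-1$ entries; the negation trick exploits the fact that $\mathcal{H}_t$ is a linear subspace to extract $\Omega(|S_t|)$ killed coordinates per phase, which is what drives the geometric shrinkage of the support down to the target $O(n/\eps^2)$.
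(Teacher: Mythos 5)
Your proposal is correct and follows essentially the same reduction as the paper: iterated partial coloring on the reweighted matrices, the pullback subspace $\{y : \diag(s_{t-1})y \in \mathcal{H}\}$ restricted to the current support, the negation trick to kill $\Omega(m_t)$ coordinates per phase, and the telescoping/geometric-decay bound on the accumulated operator-norm error. The only cosmetic difference is that you normalize by the exact scale $C_t$ and track it via an explicit product induction, whereas the paper simply maintains the invariant $\sum_i s_{t-1}(i)A_i \preccurlyeq 2I_n$ and rescales by $\frac{1}{2}$; these are interchangeable.
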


\subsubsection*{Stronger Notions of Spectral Sparsification}

We demonstrate how to use~\autoref{theorem:sparsification-alg} to obtain improved constructions for stronger spectral sparsifiers.
One easy corollary of \autoref{theorem:sparsification-alg} is that we can deterministically find degree-preserving $\eps$-spectral sparsifiers of size $O(n/\eps^2)$, which derandomizes a result in~\cite{JRT24} (see \autoref{s:spectral-subspace}).

The unit-circle (UC) approximation introduced in~\cite{AKMPSV20} satisfies the property that it preserves the spectrum of all the powers of the random walk matrix of a directed graph (see \autoref{section: sparsifiers} for the formal statement),
and it was used to study low-space Laplacian solvers for undirected and Eulerian directed graphs.
For undirected graphs, this definition simplifies to the following (see \autoref{lemma: undirected-uc-sparsifier}):
$H$ is an $\eps$-UC approximation of $G$ if (i) $D_H=D_G$, (ii) $D_H-A_H \approx_\eps D_G-A_G$, and (iii) $D_H+A_H \approx_\eps D_G+A_G$, where $D$ denotes the diagonal degree matrix, $A$ denotes the adjacency matrix, and $\approx_\eps$ denotes the usual $\eps$-spectral approximation of matrices.
We show that a simple reduction to \autoref{theorem:sparsification-alg} gives linear-sized UC-sparsifiers in deterministic polynomial time, where such linear-sized sparsifiers were not known to exist previously\footnote{
We remark that no explicit constructions of UC-sparsifiers for general directed graphs were given in~\cite{AKMPSV20}, not even for general undirected graphs, as it is enough to only sparsify squares of graphs for their applications.
In an earlier work~\cite{CCLPT15} that studies similar properties for undirected graphs, there were also no explicit constructions for general undirected graphs, but just for squares of graphs.}  
and can be seen as a strengthening of the linear-sized spectral sparsifiers in~\cite{BSS12}.

\begin{theorem}[Linear-Sized UC-Sparsifiers of Undirected Graphs] \label{theorem: undirected-uc-sparsifier}
There is a polynomial time deterministic algorithm to compute an $\eps$-UC-approximation with $O(n/\eps^2)$ edges for any undirected graph on $n$ vertices.
\end{theorem}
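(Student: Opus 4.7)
The plan is to reduce \autoref{theorem: undirected-uc-sparsifier} to the deterministic matrix sparsification result \autoref{theorem:sparsification-alg}, by encoding the Laplacian and signless-Laplacian approximation requirements jointly as a block-diagonal matrix sparsification problem while enforcing the exact degree condition (i) through the linear subspace $\mathcal{H}$.

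For each edge $e = \{u,v\}$ with weight $w_e$, let $L_e = w_e(\chi_u-\chi_v)(\chi_u-\chi_v)^{\top}$ and $Q_e = w_e(\chi_u+\chi_v)(\chi_u+\chi_v)^{\top}$ be the rank-one PSD edge contributions, so that $L_G=\sum_e L_e$ and $Q_G=\sum_e Q_e$. Since (ii) and (iii) are relative spectral approximations, we normalize and stack: let $\widetilde{A}_e \in \R^{2n\times 2n}$ be the PSD block-diagonal matrix with diagonal blocks $L_G^{+/2} L_e L_G^{+/2}$ and $Q_G^{+/2} Q_e Q_G^{+/2}$, where $M^{+/2}$ denotes the pseudoinverse square root. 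Each $L_e$ (respectively $Q_e$) is supported on the image of $L_G$ (respectively $Q_G$), so $\sum_e \widetilde{A}_e$ is itself block-diagonal and equals the orthogonal projection onto those two images, giving $\sum_e \widetilde{A}_e \preceq I_{2n}$ as required by \autoref{theorem:sparsification-alg} with dimension parameter $2n=O(n)$.

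Define the degree-preservation subspace $\mathcal{H} = \{x\in\R^m : \sum_{e \ni v} w_e \cdot x(e) = 0 \text{ for all } v\in V\}$, which has dimension at least $m-n$. We may assume $m \geq C n/\eps^2$ for a sufficiently large constant $C$, since otherwise $G$ itself already satisfies the conclusion; under this assumption $\dim\mathcal{H} = m - O(n)$ as needed. Applying \autoref{theorem:sparsification-alg} to $\{\widetilde{A}_e\}_{e\in E}$ with this $\mathcal{H}$ produces a deterministic nonnegative reweighting $s$ of support size $O(n/\eps^2)$ satisfying $s - \vec{\one}_m \in \mathcal{H}$ and $\|\sum_e (s(e)-1)\widetilde{A}_e\|_{\rm op} \leq \eps$. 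The first condition gives $D_H = D_G$, establishing (i); reading off the two diagonal blocks of the operator-norm bound and conjugating back by $L_G^{1/2}$ and $Q_G^{1/2}$ yields $L_H \approx_\eps L_G$ and $Q_H \approx_\eps Q_G$, which are (ii) and (iii).

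The main subtlety is that simultaneously controlling $L$ and $Q$ is genuinely necessary: exact degree preservation alone does not convert an $L$-approximation into a $Q$-approximation, because $x^{\top} L_G x$ and $x^{\top} Q_G x$ can differ by unbounded ratios on nearly bipartite graphs. The block-diagonal packaging sidesteps this in a single application of the matrix sparsification theorem, and the remaining items (pseudoinverse handling of the kernels of $L_G$ and $Q_G$, counting $\dim\mathcal{H}$, and the WLOG reduction on $m$) are routine.
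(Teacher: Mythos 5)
Your reduction is exactly the paper's: the same block-diagonal stacking of the normalized Laplacian and signless-Laplacian edge matrices, the same degree-preservation subspace $\mathcal{H}$, and the same single application of \autoref{theorem:sparsification-alg}. (Your remark that $\sum_e \widetilde{A}_e$ is an orthogonal projection rather than the identity is a slightly more careful version of what the paper writes, and your closing point about why $L$ and $U=Q$ must both be controlled is also the paper's motivation.)

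The one place you are too quick is the step you dismiss as routine: certifying that conditions (i)--(iii) actually imply $\eps$-UC approximation in the sense of \autoref{def: uc-approx}. UC approximation is defined through matrix approximation (\autoref{def:matrix-approx}), which in addition to the two operator-norm bounds requires the kernel conditions $\ker(L_G) \subseteq \ker(A_G - A_{\widetilde{G}})$ and $\ker(U_G) \subseteq \ker(A_G - A_{\widetilde{G}})$. The first follows from degree preservation. The second is the only genuinely graph-theoretic point in the whole argument: $U_G$ has a nontrivial kernel precisely when $G$ is bipartite (\autoref{lemma: max-eigenvalue-bipartite}), in which case $\ker(U_G) = \mathrm{span}(\vec{\one}_X - \vec{\one}_Y)$ for the bipartition $(X,Y)$ of a connected component, and one must verify $(A_G - A_{\widetilde{G}})(\vec{\one}_X - \vec{\one}_Y) = 0$; this holds because $\widetilde{G}$ is a subgraph (hence inherits the bipartition) and is degree-preserving, so that both $A_G$ and $A_{\widetilde{G}}$ map $\vec{\one}_X - \vec{\one}_Y$ to $D_G(\vec{\one}_X - \vec{\one}_Y)$. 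This verification (together with restricting to connected components) is the content of the paper's \autoref{lemma: undirected-uc-sparsifier}, which your write-up relies on but does not supply.
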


The singular-value (SV) approximation introduced in~\cite{APPSV23} is a strictly stronger notion than UC-approximation, and satisfies the stronger property that it is preserved under arbitrary products (see \autoref{section: sparsifiers} for the formal statement), which has applications in approximating higher powers of directed random walks and solving directed Laplacian systems.
Using the cycle decomposition technique in~\cite{CGP+23} and expander decomposition, it was proved in~\cite{APPSV23} that any directed graph has an $\eps$-SV-sparsifier (and hence an $\eps$-UC-sparsifier) with $O((n \log^{12}{n})/\eps^2)$ edges.
We show that a simple reduction to \autoref{theorem:sparsification-alg} gives linear-sized $\eps$-SV-sparsifier when the bipartite lift is an expander graph.
This simplifies and improves the corresponding $O(n \log^2 n)$ bound obtained using the cycle decomposition technique in~\cite{APPSV23}, and implies the following improved bound for general graphs through the same standard expander decomposition technique as in~\cite{APPSV23}.

\begin{theorem}[Improved SV-Sparsifiers of Directed Graphs] \label{theorem: SV-sparsfication-general}
There is a polynomial time deterministic algorithm to compute an $\eps$-SV-sparsifier with $O((n\log^5{n})/\eps^2)$ edges for any directed graph on $n$ vertices.
\end{theorem}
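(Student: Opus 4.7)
The plan is to combine two ingredients: a new linear-sized $\eps$-SV-sparsifier for directed graphs whose bipartite lift is an expander (obtained by a clean reduction to \autoref{theorem:sparsification-alg}), together with the expander decomposition framework for directed graphs developed in~\cite{APPSV23}, which I would invoke as a black box. The first ingredient is where the new technical content lies; the second is borrowed and contributes the polylogarithmic overhead.

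First, I would encode $\eps$-SV-sparsification as an instance of the matrix sparsification problem in \autoref{theorem:sparsification-alg}. Using the definition from \autoref{section: sparsifiers}, $\eps$-SV-approximation of a directed graph $G$ reduces to finding a nonnegative reweighting $s$ such that $\sum_{e} s(e)\cdot A_e \approx_\eps \sum_{e} A_e$ in operator norm, where each directed edge $e$ contributes a rank-$O(1)$ PSD matrix $A_e$ on the $2n$-dimensional bipartite-lift space, and where the reweighting must be degree-preserving on both the in-degree and out-degree side. The degree-preservation requirement translates directly into a linear constraint on $s-\vec{\one}_m$, so I set $\mathcal{H}$ to be the subspace of perturbations annihilating all $2n$ in/out-degree linear functionals; its dimension is $m - O(n)$, matching the hypothesis of \autoref{theorem:sparsification-alg}.

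Second, under the assumption that the bipartite lift of $G$ is an expander with constant spectral gap, I would verify that after the natural $D^{-1/2}$ normalization (well-defined because the expander property bounds the smallest nontrivial eigenvalue away from $0$) the matrices $A_e$ satisfy $\sum_e A_e \preceq I_{2n}$ on the relevant subspace, and that the total trace $\sum_e \tr(A_e)$ is $O(n)$. Applying \autoref{theorem:sparsification-alg} with error parameter $\eps$ then produces a degree-preserving reweighting supported on $O(n/\eps^2)$ edges whose operator-norm error is at most $\eps$, yielding a deterministic linear-sized $\eps$-SV-sparsifier in the expander case. This is already a strict improvement over the $O(n \log^2 n / \eps^2)$ bound derived via cycle decomposition in~\cite{APPSV23}.

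Finally, I would invoke the expander decomposition reduction from~\cite{APPSV23} verbatim: it decomposes an arbitrary directed graph into a hierarchy of subgraphs whose bipartite lifts are expanders, with the total edge count of the union blown up only by polylogarithmic factors. Running our expander-case sparsifier on each piece and concatenating the outputs gives the claimed $O(n \log^5 n / \eps^2)$-size sparsifier for general directed graphs, the saving over the $O(n \log^{12} n / \eps^2)$ bound of~\cite{APPSV23} coming precisely from replacing their per-level cycle-decomposition sparsifier (with polylogarithmic loss) by our linear-sized sparsifier (with no loss). The main obstacle is the second step: one must carefully set up the bipartite-lift normalization so that SV-approximation corresponds exactly to operator-norm approximation of $\sum_e A_e$ \emph{and} so that the expander assumption yields $\sum_e A_e \preceq I_{2n}$; once this translation is in place, \autoref{theorem:sparsification-alg} does all the work and the reduction to the general case is routine.
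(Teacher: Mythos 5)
Your high-level route is the same as the paper's: bipartite lift, reduction to \autoref{theorem:sparsification-alg} with a degree-preserving subspace, then expander decomposition. But the step you defer as ``the main obstacle'' is precisely the technical content of the expander case, and your sketch of it is not correct as stated. SV approximation (\autoref{def:SV-approx}) measures the error against $E = D - AD^{-1}A$, not against $D$; so the matrices fed to \autoref{theorem:sparsification-alg} must be $A_e = \lambda\, E^{\dagger/2} b_e b_e^\top E^{\dagger/2}$, not the $D^{-1/2}$-normalized ones. With the $D^{-1/2}$ normalization you would indeed get $\sum_e A_e \preceq 2I$ trivially, but the resulting operator-norm guarantee would be a normalized-Laplacian approximation, which does not imply SV approximation. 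With the correct normalization, $\sum_e A_e = \lambda E^{\dagger/2} L_G E^{\dagger/2}$, and bounding this by $I$ is nontrivial: the paper uses the factorization $E = (D-A)D^{-1}(D+A)$ to reduce it to $\lambda \norm{(2I-\L_G)^\dagger}_{\rm op} = \lambda/(2-\lambda_{2n-1}(\L_G))$, and then needs the \emph{bipartite} spectral symmetry $\lambda_{2n-1}(\L_G)=2-\lambda_2(\L_G)$ (\autoref{lemma: symmetric-spectrum}) to convert the expansion hypothesis $\lambda_2(\L_G)\geq\lambda$ into the needed upper bound at the top of the spectrum. Your justification (``the expander property bounds the smallest nontrivial eigenvalue away from $0$'') addresses the wrong end of the spectrum; without bipartiteness the bound fails. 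One also needs \autoref{fact:SV-conditions} to handle the kernel of $E$, which is two-dimensional (spanned by $\vec{\one}$ and the bipartition indicator difference) and is taken care of by degree preservation.

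A second, smaller gap is the loss accounting. The expander-case theorem yields an $(\eps/\lambda)$-SV sparsifier with $O(n/\eps^2)$ edges, i.e.\ the expansion enters the \emph{approximation quality}, not the sparsity. Since the decomposition pieces are only $\Omega(1/\log^2 n)$-expanders, each piece must be run at accuracy $\eps' = \Theta(\eps/\log^2 n)$, costing $O(n_i\log^4 n/\eps^2)$ edges, and $\sum_i n_i = O(n\log n)$ contributes the remaining factor, giving $\log^5 n$. You assert the final bound without this accounting, and your claim of a linear-sized $\eps$-SV sparsifier ``in the expander case'' is only correct for constant $\lambda$ — which is not the regime the decomposition provides.
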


\subsubsection*{Graphical Spectral Sketches and Effective Resistance Sparsifiers}

We also find applications of \autoref{theorem:sparsification-alg} to obtain improved constructions for weaker notions of spectral sparsification.
An $\eps$-spectral sketch of a graph $G$ is a randomized data structure that, for any vector $x$, preserves the quadratic form $x^\top L_G x$ within a $1\pm \eps$ multiplicative factor with high probability~\cite{ACK+16,JS18}. 
Chu, Gao, Peng, Sachdeva, Sawlani, and Wang~\cite{CGP+23} considered the notion of a graphical spectral sketch, a spectral sketch that is a reweighted subgraph of $G$. 
They proved the interesting result that any graph has a graphical $\eps$-spectral sketch with only $O( (n \polylog n) / \eps)$ edges, with a better dependency on $\eps$ than an $\eps$-spectral sparsifier which requires $\Omega(n / \eps^2)$ edges~\cite{BSS12}.

In this work, we consider a deterministic version of graphical $\eps$-spectral sketch, which is a sparsifier that $\eps$-approximates an input graph's Laplacian quadratic forms with respect to a specified set of constraint vectors $\mathcal K$ (see \autoref{s:sketch} for a formal definition).
Using the discrepancy approach, we obtain a deterministic polynomial time algorithm to construct sparser graphical $\eps$-spectral sketches.

\begin{theorem}[Improved Graphical Spectral Sketching] \label{theorem: deterministic-eps-sketch}
There is a polynomial time deterministic algorithm to construct a graphical $\eps$-spectral sketch with $O\big(\big(n \log^{3}{n} \cdot \max\big\{1,\sqrt{\log(|\K|/n)}\big\}\big)/\eps\big)$ edges for any unweighted undirected graph on $n$ vertices, any set of vectors $\K$ in $\R^n$, and any $\eps > 0$. 
\end{theorem}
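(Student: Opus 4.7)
The plan is to reduce the graphical spectral sketch problem to an iterated application of the Matrix Sparsification theorem (\autoref{theorem:sparsification-alg}) via a resistance-based chaining argument, in the spirit of the graphical spectral sketch construction of Chu, Gao, Peng, Sachdeva, Sawlani, and Wang~\cite{CGP+23}, but carried out deterministically.  The starting point is the standard normalization: writing $L_G = \sum_e b_e b_e^\top$ for the signed edge indicators $b_e$, set $v_e = L_G^{+/2} b_e$ so that $\sum_e v_e v_e^\top$ is the projection onto $\operatorname{range}(L_G)$ and $\|v_e\|_2^2 = \ell_e$ is the leverage score of $e$, with $\sum_e \ell_e = n - 1$.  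For each constraint $x_k \in \K$ with $x_k^\top L_G x_k > 0$, let $\tilde x_k = L_G^{+/2} x_k / \sqrt{x_k^\top L_G x_k}$.  The sketch requirement becomes the following: find a nonnegative reweighting $s$ of small support such that $\bigl|\sum_e (s_e - 1) (\tilde x_k^\top v_e)^2\bigr| \leq \eps$ for every $x_k \in \K$.

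Next, I would stratify the edges into $L = O(\log n)$ levels by leverage score, with level $i$ containing edges having $\ell_e \in (2^{-i-1}, 2^{-i}]$; since $\sum_e \ell_e = n-1$, each level has at most $m_i = O(2^{i} n)$ edges of roughly uniform importance.  Within level $i$, I would set up an instance of~\autoref{theorem:sparsification-alg} whose matrices $A_e$ are block diagonal: one block is a scaled copy of $v_e v_e^\top$, so that the reweighted graph still $\eps$-spectrally approximates $L_G$, and the remaining diagonal block records the scalar contributions $(\tilde x_k^\top v_e)^2$ for the constraints active at this level.  The subspace $\mathcal{H}$ of~\autoref{theorem:sparsification-alg} would be used to enforce exact preservation of any linearly-spanned subset of constraints, in particular preserving vertex degrees for free.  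A single invocation halves the edges in the level with per-constraint additive error $\eps_i$ and output sparsity $O((n + d_i)/\eps_i^2)$, where $d_i$ is the effective dimension of the diagonal block.  Iterating this halving $O(\log n)$ times per level and then summing over the $O(\log n)$ levels, with a geometric budget $\eps_i$ chosen so that $\sum_i \eps_i \leq \eps$, yields the final sketch; determinism is inherited from~\autoref{theorem:sparsification-alg}.

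The main obstacle, and the step I expect to be the most delicate, is controlling the effective dimension $d_i$ when $|\K| \gg n$: a naive encoding forces $d_i = |\K|$ and destroys the bound.  To keep $d_i = O(n \log(|\K|/n))$, I would further partition the constraints active at level $i$ into $O(\log(|\K|/n))$ sub-classes according to the magnitude of $(\tilde x_k^\top v_e)^2 / \ell_e$, run~\autoref{theorem:sparsification-alg} separately on each sub-class, and argue that the per-constraint error still telescopes across sub-classes because each constraint is dominated by one class per level.  Optimizing the trade-off between this $\log(|\K|/n)$ dimension blow-up and the $1/\eps_i^2$ sparsity penalty then produces the $\sqrt{\log(|\K|/n)}$ factor in the bound, and accumulating the sparsity contributions over all levels and halving rounds gives the final edge count of $O\bigl((n \log^3 n \cdot \max\{1,\sqrt{\log(|\K|/n)}\})/\eps\bigr)$.
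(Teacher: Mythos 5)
There is a genuine gap, and it is exactly the one the paper flags as the crux of this result: your reduction cannot produce the $1/\eps$ dependence in the edge count. Every invocation of \autoref{theorem:sparsification-alg} controls the signed sum of your block-diagonal matrices in operator norm by $O(\sqrt{d/m})$, where $d$ is the block dimension and $m$ the current support size; telescoping these per-round errors forces you to stop only when $\sqrt{d/m_T}\lesssim\eps$, i.e.\ at $m_T = \Omega(d/\eps^2)$. You acknowledge this yourself by writing the per-level sparsity as $O((n+d_i)/\eps_i^2)$, and no allocation of a budget with $\sum_i \eps_i \le \eps$ can turn a sum of $1/\eps_i^2$ terms into $O(1/\eps)$; the trade-off you invoke against the $\log(|\K|/n)$ dimension blow-up only tunes logarithmic factors in front of an unavoidable $1/\eps^2$. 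The resulting bound, $O(n\,\polylog(n)/\eps^2)$, does not improve on ordinary spectral sparsification, which is precisely the failure mode the paper points out for the ``straightforward approach.''

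What is missing is a mechanism that makes the \emph{per-iteration} relative error for each constraint $z$ scale like $n f(n)/m_t$ rather than $\sqrt{nf(n)/m_t}$. The paper achieves this by treating the constraints as a \emph{vector} discrepancy problem (via the Levy--Ramadas--Rothvoss subspaces, \autoref{theorem:LRR-main-thm}, whose error is $\|a_z\|_2\sqrt{\log(|\K|/m)}$) and then driving $\|a_z\|_2$ down to $O\big(\tfrac{n}{m}\,\bar z^\top D\bar z\big)$. That $\ell_2$-norm bound is the heart of the argument and requires two ideas absent from your proposal: (i) restricting the walk to the degree-preserving subspace so that $\sum_e x(e)s(e)\langle z,b_e\rangle^2$ collapses to $-2\sum_{uv}x(u,v)s(u,v)\bar z(u)\bar z(v)$ for a degree-centered shift $\bar z$ of $z$ (the cross terms, unlike the nonnegative terms $(z(u)-z(v))^2$, have small $\ell_2$ mass); and (ii) freezing edges of large weight or incident to low-degree vertices so that $s(u,v)^2\lesssim \tfrac{n^2}{m^2}d(u)d(v)$. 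Finally, relating $\bar z^\top D\bar z$ back to $z^\top L_G z$ needs the host graph to be an expander, which is why the paper works on an expander decomposition of $G$ rather than on your leverage-score strata; the stratification by $\ell_e$ gives no control of $\bar z^\top D\bar z$ against $z^\top L_Gz$ within a level. Your block-diagonal encoding and constraint sub-classing are reasonable for a $1/\eps^2$-type matrix-discrepancy argument, but they do not reach the stated theorem.
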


Graphical spectral sketches were also used to construct effective resistance sparsifiers in~\cite{CGP+23}, which preserve all pairwise effective resistances between vertices in a graph.
Note that this is also a notion weaker than spectral sparsification, as the effective resistance between two vertices $i$ and $j$ in $G$ can be written as $b_{i,j}^\top L_G^\dagger b_{i,j}$ where $b_{i,j} := \vec{\one}_i - \vec{\one}_j$ and $L_G^\dagger$ is the pseudo-inverse of the Laplacian matrix $L_G$. 
Following the reduction in~\cite{CGP+23}, we can also extend \autoref{theorem: deterministic-eps-sketch} to produce sparser effective resistance sparsifiers.

\begin{theorem}[Improved Effective Resistance Sparsification] \label{cor:resistance-sparsifier}
There is a polynomial time deterministic algorithm to construct a reweighted subgraph $\Tilde{G}$ with $O((n\log^{3.5}{n})/\eps)$ edges such that
    \[
        (1-\eps) \cdot b_{i,j}^\top L_G^\dagger b_{i,j} 
	\leq b_{i,j}^\top L_{\Tilde{G}}^\dagger b_{i,j} 
	\leq (1+\eps) \cdot b_{i,j}^\top L_G^\dagger b_{i,j}
    \]
for all $i,j \in V(G)$ for any unweighted undirected graph $G$.
\end{theorem}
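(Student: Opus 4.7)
The plan is to reduce to \autoref{theorem: deterministic-eps-sketch} by choosing a set of constraint vectors that encodes all pairwise effective resistance queries, following the reduction of Chu, Gao, Peng, Sachdeva, Sawlani, and Wang~\cite{CGP+23}. The starting point is the identity $b_{i,j}^\top L_G^\dagger b_{i,j} = v_{i,j}^\top L_G v_{i,j}$ for $v_{i,j} := L_G^\dagger b_{i,j}$, which holds because $b_{i,j}$ lies in the range of $L_G$. This rewrites a pairwise effective resistance as a single Laplacian quadratic form on a fixed vector, after which the graphical sketch of \autoref{theorem: deterministic-eps-sketch} applies directly.

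Concretely, I would invoke \autoref{theorem: deterministic-eps-sketch} with
\[
\K \;=\; \bigl\{\, L_G^\dagger b_{i,j} \,:\, i,j \in V(G),\ i \neq j \,\bigr\},
\]
so that $|\K| = \binom{n}{2}$ and hence $\max\{1,\sqrt{\log(|\K|/n)}\} = O(\sqrt{\log n})$. With the accuracy parameter set to a sufficiently small constant multiple of $\eps$, this produces a reweighted subgraph $\tilde{G}$ with
\[
O\!\left(\frac{n\log^3 n\cdot\sqrt{\log n}}{\eps}\right) \;=\; O\!\left(\frac{n\log^{3.5} n}{\eps}\right)
\]
edges, satisfying $\bigl|b_{i,j}^\top L_G^\dagger (L_{\tilde{G}}-L_G) L_G^\dagger b_{i,j}\bigr| \le O(\eps) \cdot b_{i,j}^\top L_G^\dagger b_{i,j}$ for every pair $i \neq j$.

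To promote this linear-in-$(L_{\tilde{G}}-L_G)$ guarantee into the desired pseudoinverse approximation, I would use the resolvent identity
\[
L_{\tilde{G}}^\dagger - L_G^\dagger \;=\; -\,L_G^\dagger (L_{\tilde{G}} - L_G) L_G^\dagger \;+\; L_G^\dagger (L_{\tilde{G}} - L_G) L_{\tilde{G}}^\dagger (L_{\tilde{G}} - L_G) L_G^\dagger,
\]
valid on the common kernel $\spn(\vec{\one})$ provided $\tilde{G}$ remains connected (arranged by including a spanning tree in the sparsifier's support). Taking the quadratic form against $b_{i,j}$, the linear term is exactly the quantity controlled by the sketch, while the quadratic remainder depends on the action of $L_{\tilde{G}}-L_G$ on vectors that need not lie in $\K$.

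The main obstacle is controlling that quadratic remainder. The standard fix, as in~\cite{CGP+23}, is to first compute a coarse $\tfrac12$-spectral sparsifier of $G$ via \autoref{theorem:sparsification-alg} with $O(n)$ edges, and apply the sketching step on top of it. This coarse approximation yields an operator-norm bound $\bigl\|L_G^{\dagger/2}(L_{\tilde{G}}-L_G)L_G^{\dagger/2}\bigr\|_{\rm op} = O(1)$, which by a Cauchy--Schwarz-type estimate reduces the quadratic remainder to $O(\eps)\cdot b_{i,j}^\top L_G^\dagger b_{i,j}$. After rescaling constants, this yields the $(1\pm\eps)$ multiplicative guarantee for every pair $i,j \in V(G)$, with the total edge count dominated by the sketching step's $O((n\log^{3.5} n)/\eps)$ bound.
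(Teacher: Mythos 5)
Your overall reduction---take $\K=\{L_G^\dagger b_{i,j}\}_{i\neq j}$ so that $|\K|=\binom{n}{2}$ and $\sqrt{\log(|\K|/n)}=O(\sqrt{\log n})$, control the linear term $b_{i,j}^\top L_G^\dagger(L_{\tilde G}-L_G)L_G^\dagger b_{i,j}$ by the sketch, and kill the quadratic remainder using an auxiliary spectral-sparsifier property---is exactly the structure of the paper's argument (this is \autoref{lem:reff-sparsifier-conds} from~\cite{CGP+23}), and your edge-count arithmetic is correct. But there is a genuine gap in how you dispose of the quadratic remainder. Writing $E:=L_G^{\dagger/2}(L_{\tilde G}-L_G)L_G^{\dagger/2}$ and $w:=L_G^{\dagger/2}b_{i,j}$, the remainder is (up to a constant, after replacing $L_{\tilde G}^\dagger$ by $O(1)\cdot L_G^\dagger$) equal to $\|Ew\|_2^2\le \|E\|_{\rm op}^2\cdot b_{i,j}^\top L_G^\dagger b_{i,j}$. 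A coarse $\tfrac12$-sparsifier only gives $\|E\|_{\rm op}=O(1)$, so this bound is $O(1)\cdot b_{i,j}^\top L_G^\dagger b_{i,j}$, not $O(\eps)\cdot b_{i,j}^\top L_G^\dagger b_{i,j}$; and the sketch guarantee $|w^\top E w|\le \eps\|w\|_2^2$ does not control $w^\top E^2 w$ since $E$ is not positive semidefinite ($Ew$ can be orthogonal to $w$ and still have norm $\|E\|_{\rm op}\|w\|_2$). This is precisely why \autoref{lem:reff-sparsifier-conds} demands an $O(\sqrt{\eps})$-spectral sparsifier, not a constant-factor one.

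The second, related gap is that you need a \emph{single} graph $\tilde G$ that is simultaneously an $O(\sqrt{\eps})$-spectral sparsifier and an $O(\eps)$-sketch with respect to $\K$, and your two-stage composition (sparsify first, then sketch "on top") does not deliver this: \autoref{theorem: deterministic-eps-sketch} only preserves the quadratic forms of vectors in $\K$, so its output need not be a spectral sparsifier of its input at all, and condition (1) of \autoref{lem:reff-sparsifier-conds} is lost in the second stage. The paper resolves this by running one combined discrepancy walk per expander piece whose update directions lie in the intersection of the matrix-discrepancy subspace from \autoref{lemma:deterministic-partial-coloring} and the vector-discrepancy/degree-preserving subspaces from \autoref{corollary:expander-spectral-sketch}, and by stopping only when $m_t\le cn\sqrt{\log n}/(\lambda\eps)$, at which point the accumulated matrix discrepancy is $O(\sqrt{n/m_T})=O(\sqrt{\eps})$ while the accumulated vector discrepancy is $O(\eps)$ (\autoref{t:Reff-expander}); expander decomposition then gives the stated $O((n\log^{3.5}n)/\eps)$ bound. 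To repair your proof you would need to either adopt this simultaneous construction or otherwise argue that both guarantees can be enforced on the same support, which the black-box use of \autoref{theorem: deterministic-eps-sketch} does not provide.
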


Both \autoref{theorem: deterministic-eps-sketch} and \autoref{cor:resistance-sparsifier} improve the previous results by many $\log n$ factors: the graphical spectral sketches and the effective resistance sparsifiers in~\cite{CGP+23} have $\Omega( (n \log^{16} n) / \eps)$ edges even assuming optimal short cycle decomposition and optimal expander decomposition.

\subsection{Technical Overview} \label{s:overview}

The proof of \autoref{lemma:deterministic-partial-coloring} is quite straightforward conceptually, once the potential function is fixed. 
Building on the regularized optimization framework for vector discrepancy by Pesenti and Vladu~\cite{PV23} and for spectral sparsification by Allen-Zhu, Liao, and Orecchia~\cite{AZLO15}, we use the potential function
\[
\Phi(x) := \max_{M \in \Delta_n} \bigip{A(x)}{M} + \frac{2}{\eta} \tr\big(M^{\frac12}\big),
\]
where $A(x) := \sum_{i=1}^m x(i) \cdot A_i$ is a shorthand of the current solution, $\Delta_n := \{M \succeq 0 \mid \tr(M)=1\}$ is the set of density matrices , and $\eta$ is a parameter.
Note that the first term is simply $\lambda_{\max}( A(x) )$, the maximum eigenvalue of our current solution.
The term $2\tr(M^{\frac12})$ is known as the $\ell_{1/2}$-regularizer, and the potential function $\Phi(x)$ can be seen as a regularized smooth proxy of $\lambda_{\max}(A(x))$.
We remark that the optimizer $M$ can be computed and the closed form characterization of $\Phi(x)$ is very similar to the potential function used in~\cite{BSS12}; see \autoref{l:potential} and (\ref{eq:Phi}) in the appendix.

To find the partial fractional coloring, we start from the initial point $x_0 = 0$.
In each iteration $t \geq 1$, the natural step is to find a small perturbation $y \in \R^m$ and set $x_{t} = x_{t-1} + y$, so that $\norm{x_{t}}_2 > \norm{x_{t-1}}_2$ and $\Phi(x_t)$ is not much larger than $\Phi(x_{t-1})$.
We prove in \autoref{l:potential} that
\[
\Phi(x+y) - \Phi(x) \leq \tr(M A(y)) + c \eta \tr\big( M^{\frac12} A(y) M^{\frac12} A(y) M^{\frac12} \big),
\]
for some $|c| \leq 2$ when $\norm{y}$ is small enough where $M$ is the optimizer.
This bound can be derived from~\cite{AZLO15} using concepts from mirror descent and Bregman divergence.
We provide a simpler proof of this bound using elementary convexity argument and Taylor expansion (in which we use a second-order approximation from~\cite{RR20}), bypassing these more abstract concepts from optimization; see \autoref{a:omitted}.

To choose the updated perturbation $y$, we use standard arguments in discrepancy minimization to restrict to the subspace that satisfies:
\begin{enumerate}\setlength{\itemsep}{0pt} 
\item $y$ has support only on ``active'' coordinates, those coordinates in $x_{t-1}$ that are not in $\{\pm1\}$ yet,
\item $y \in \mathcal H$ so that $x_t \in \mathcal H$ assuming $x_{t-1} \in \mathcal H$, 
\item $y \perp x_{t-1}$ so that $\norm{x_t}_2^2 = \norm{x_{t-1}}_2^2 + \norm{y}_2^2$ is increasing, in order to upper bound the number of iterations, 
\item the linear term $\tr(M A(y))$ is zero, and 
\item the second order term $\tr( M^{\frac12} A(y) M^{\frac12} A(y) M^{\frac12} )$ is small.
\end{enumerate}
The nontrivial step is to bound the second order term, for which we use the idea in~\cite{PV23} to rewrite it as the quadratic form $y^T N y$ of a matrix, where $N_{ij} = \tr( M^{\frac12} A_i M^{\frac12} A_j M^{\frac12} )$.
With this formulation, we just need to argue that $N$ has a large eigenspace with small eigenvalues.
To do so, we show that a large principal submatrix of $N$ has small trace (in which we borrow another lemma from~\cite{RR20}), from which we conclude the theorem by Cauchy interlacing.

For the applications in spectral sparsification, the linear subspace $\mathcal H$ in the discrepancy framework provides much flexibility in incorporating additional constraints, such as the degree constraints for degree-preserving sparsifiers.
We believe that this perturbation step in the discrepancy framework is a more versatile and powerful dependent rounding method than the cycle decomposition technique.
We demonstrate this (together with the recent works~\cite{JRT24,STZ24}) by providing simpler and sparser constructions using the discrepancy approach, for all applications previously using the cycle decomposition technique, including degree-preserving sparsifiers, Eulerian sparsifiers, graphical spectral sketches, effective resistance sparsifiers, and singular-value sparsifiers.
The reductions to matrix partial coloring for UC-approximation in \autoref{theorem: undirected-uc-sparsifier} and SV-approximation in \autoref{theorem: SV-sparsfication-general} are very simple and direct, while the reduction for graphical spectral sketches in \autoref{theorem: deterministic-eps-sketch} requires some problem-specific insights.

We emphasize that almost all technical ideas used in this work were known in the literature, but we put them together in a more streamlined manner, considerably simplifling many calculations as well as bypassing many advanced concepts (such as matrix concentration inequalities, probabilistic analyses, convex geometry, mirror descent and Bregman divergence, etc).
Our proofs are basically self-contained without using any blackbox (except two basic lemmas from~\cite{RR20}), and are arguably easier than~\cite{BSS12,AZLO15} even in the basic setting (without keeping track of two potential functions).
Also, it is the same approach that works for both discrepancy minimization (as shown in~\cite{PV23}) and spectral sparsification.
We believe that the simplifications and the unification consolidate our understanding and will stimulate further progress.

\section{Background} \label{s:preliminaries}

Given two functions $f, g$, we use $f \lesssim g$ to denote the existence of a positive constant $c > 0$, such that $f \le c \cdot g$ always holds.
For positive integers $n$, we use $[n]$ to denote the set $\{1, 2, \dots, n\}$.

\subsection{Linear Algebra} \label{s:linear-algebra}

Throughout this paper, we use subscripts to index different objects, $v_1, v_2, \dots$ for vectors and $A_1, A_2, \dots$ for matrices.
To avoid confusion, we write the $i$-th entry of a vector $v$ as $v(i)$, and write the $(i,j)$-th entry of a matrix $A$ as $A(i,j)$. 
Given matrices $A_1, \dots, A_m$ and a vector $y \in \R^m$, we will use the shorthand $A(y) := \sum_{i=1}^m y(i) \cdot A_i$.

We write $\vec{\one}$ as the all-one vector and write $\vec{\one}_S$ as the indicator vector of a set $S$.
For a vector $v \in \R^n$ and a set $S \subseteq [n]$, we write $v|_S$ as the $|S|$-dimensional vector restricting $v$ to the coordinates in $S$.
Given a vector $v \in \R^n$, we write $\diag(v) \in \R^{n \times n}$ as the diagonal matrix with entries from $v$ on the diagonal.

Let $M \in \R^{n \times n}$ be a real symmetric matrix
and $M = \sum_{i=1}^n \lambda_i v_i v_i^\top$ be its eigen-decomposition 
with real eigenvalues and orthonormal basis $\{v_i\}_{i=1}^n$.
The largest eigevalue of $M$ is written as $\lambda_{\max}(M) := \max_{i \in [n]} \lambda_i$
and the operator norm of $M$ is written as $\norm{M}_{\rm op} := \max_{i \in [n]} |\lambda_i|$.
We denote $M^\dagger := \sum_{\lambda_i \neq 0} \lambda_i^{-1} v_i v_i^\top$ as the pseudoinverse of $M$ 
and denote $|M| := \sum_{i=1}^n |\lambda_i| v_i v_i^\top $ as the matrix obtained by taking the absolute value over all eigenvalues of $M$.
We say that $M$ is positive semidefinite, denoted by $M \succeq 0$, if all eigenvalues of $M$ are non-negative.
We denote $\Delta_n := \{M \succcurlyeq 0 \mid \tr(M) = 1 \} \subseteq \R^{n \times n}$ as the set of all $n$-dimensional density matrices,
where $\tr(M)$ is defined as the sum of the diagonal entries of $M$.
It is well-known that $\tr(M) = \sum_{i=1}^n \lambda_i$, 
and so the following inequality follows from a simple application of Cauchy-Schwartz.

\begin{lemma} \label{l:root-density}
$\tr(M^{\frac12}) \leq \sqrt{n}$ for any $M \in \Delta_n$.
\end{lemma}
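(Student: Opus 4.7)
The plan is to reduce the matrix inequality to a scalar one via the spectral theorem and then close with a one-line application of Cauchy--Schwarz. Since $M \succeq 0$, we can write $M = \sum_{i=1}^n \lambda_i v_i v_i^\top$ with $\lambda_i \geq 0$ and orthonormal $\{v_i\}$, and then $M^{1/2} = \sum_{i=1}^n \sqrt{\lambda_i}\, v_i v_i^\top$. Taking the trace gives $\tr(M^{1/2}) = \sum_{i=1}^n \sqrt{\lambda_i}$, while the condition $M \in \Delta_n$ gives $\sum_{i=1}^n \lambda_i = \tr(M) = 1$.

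From here, the plan is to apply Cauchy--Schwarz to the pairing of the vector $(\sqrt{\lambda_1}, \ldots, \sqrt{\lambda_n})$ against the all-ones vector:
\[
    \sum_{i=1}^n \sqrt{\lambda_i} \;=\; \sum_{i=1}^n \sqrt{\lambda_i} \cdot 1 \;\leq\; \Bigl(\sum_{i=1}^n \lambda_i\Bigr)^{\!1/2} \Bigl(\sum_{i=1}^n 1\Bigr)^{\!1/2} \;=\; \sqrt{n},
\]
which yields the claim. There is no real obstacle: the only thing to verify is that the inequality is tight (it is, taking $M = \frac{1}{n} I_n$, giving $\tr(M^{1/2}) = n \cdot \frac{1}{\sqrt{n}} = \sqrt{n}$), which reassures us that no sharper bound holds in this generality. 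Since the proof is this short, I would present it as a single display without further commentary in the paper.
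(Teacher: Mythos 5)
Your proof is correct and is exactly the argument the paper has in mind: diagonalize $M$, note $\tr(M^{1/2}) = \sum_i \sqrt{\lambda_i}$ with $\sum_i \lambda_i = 1$, and apply Cauchy--Schwarz against the all-ones vector. The tightness check with $M = \frac{1}{n} I_n$ is a nice sanity check but not needed.
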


The following three basic facts will also be used.
One is that the maximum eigenvalue of a matrix $A$ can be formulated as $\lambda_{\max}(A) = \max_{M \in \Delta_n} \inner{A}{M}$.
Another is that if $A \succeq 0$ then
$\tr(AB) \leq \norm{B}_{\rm op} \cdot \tr(A)$.
The last one is the well-known Cauchy interlacing theorem.

\begin{theorem}[Cauchy Interlacing Theorem] \label{t:interlacing}
Let $A \in \R^{n \times n}$ be a symmetric matrix with eigenvalues $\alpha_1 \leq \cdots \leq \alpha_n$ and $B \in \R^{m \times m}$ be a principal submatrix of $A$ with eigenvalues $\beta_1 \leq \cdots \leq \beta_m$.
For any $1 \leq i \leq m$,
\[\alpha_i \leq \beta_i \leq \alpha_{n-m+i}.\] 
\end{theorem}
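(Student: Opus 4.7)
The plan is to deduce both inequalities directly from the Courant--Fischer min-max characterization of eigenvalues, which I would treat as a black box. Since eigenvalues are invariant under conjugation by a permutation matrix, I may assume without loss of generality that $B$ is the top-left $m \times m$ block of $A$. Let $W \subseteq \R^n$ denote the subspace of vectors supported on the first $m$ coordinates. The embedding $\iota : \R^m \hookrightarrow W$ (pad with zeros) preserves Rayleigh quotients: for any $\tilde x \in \R^m$ with image $x = \iota(\tilde x)$, we have $x^\top A x = \tilde x^\top B \tilde x$ and $\|x\|_2 = \|\tilde x\|_2$. This single identity is the one technical ingredient that links the two spectra.

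For the lower bound $\alpha_i \le \beta_i$, I would use the min-max form
\[
\alpha_i \;=\; \min_{\substack{S \subseteq \R^n \\ \dim S = i}} \; \max_{0 \ne x \in S} \frac{x^\top A x}{x^\top x}.
\]
Pick an $i$-dimensional subspace $S^\star \subseteq \R^m$ that attains the analogous minimum for $\beta_i$, and use $\iota(S^\star) \subseteq W \subseteq \R^n$ as a candidate in the outer minimum above. By the Rayleigh-quotient identity the inner maximum on $\iota(S^\star)$ equals $\beta_i$, so $\alpha_i \le \beta_i$.

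For the upper bound $\beta_i \le \alpha_{n-m+i}$, I would dualize and use
\[
\alpha_{n-m+i} \;=\; \max_{\substack{T \subseteq \R^n \\ \dim T = m-i+1}} \; \min_{0 \ne x \in T} \frac{x^\top A x}{x^\top x},
\]
together with the analogous max-min characterization $\beta_i = \max_{T' \subseteq \R^m,\,\dim T' = m-i+1} \min_{0 \ne \tilde x \in T'} (\tilde x^\top B \tilde x)/(\tilde x^\top \tilde x)$. Embedding an optimizer $T'^\star$ via $\iota$ produces an admissible subspace for the outer maximum of dimension $m-i+1$, and the Rayleigh-quotient identity again shows that its inner minimum equals $\beta_i$, hence $\alpha_{n-m+i} \ge \beta_i$.

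I do not anticipate any real obstacle; the whole argument is essentially one line per inequality once Courant--Fischer is invoked. The only thing to be careful about is the bookkeeping of dimensions: the min-form uses dimension $i$ while the max-form uses dimension $m-i+1 = n-(n-m+i)+1$, so one must match the right dual characterization to the right side of the interlacing inequality. If I wanted to avoid invoking Courant--Fischer, an alternative would be to induct on $n-m$, peeling off one row/column at a time and establishing the single-step interlacing via a sign-change argument on the characteristic polynomial (or via Sylvester's law of inertia); but this is strictly longer and less transparent than the min-max route.
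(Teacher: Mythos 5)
Your proof is correct; both halves are the standard Courant--Fischer argument, with the dimension bookkeeping ($\dim T = m-i+1 = n-(n-m+i)+1$ for the max--min form) handled properly and the key Rayleigh-quotient identity for zero-padded vectors justified. The paper itself gives no proof of this statement --- it is cited as the well-known Cauchy interlacing theorem and used as a background fact --- so there is nothing to compare against; your argument is the canonical one.
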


We will also need the following two lemmas from~\cite{RR20}.
The first one can be established by applying the Cauchy-Schwarz inequality twice.

\begin{lemma}[{\cite[Lemma 10]{RR20}}] \label{l:RR}
    Let $A, B, C$ be symmetric matrices with $A, B \succeq 0$.  Then 
\[\tr(ACBC) \leq \tr(A|C|) \cdot \tr(B|C|).\]
\end{lemma}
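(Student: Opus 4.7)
The plan is to establish the claim by two applications of Cauchy--Schwarz, as the hint suggests. The first is the trace (Hilbert--Schmidt) form $|\tr(X^\top Y)|^2 \leq \tr(X^\top X)\cdot\tr(Y^\top Y)$ applied with $X = AC$ and $Y = CB$. Using the symmetry of $A, B, C$ and the cyclic property of the trace, one gets $\tr(X^\top Y) = \tr(ACBC)$, $\tr(X^\top X) = \tr(A^2 C^2)$, and $\tr(Y^\top Y) = \tr(B^2 C^2)$, so
\[
\tr(ACBC)^2 \leq \tr(A^2 C^2) \cdot \tr(B^2 C^2).
\]
The left-hand side is nonnegative because $\tr(ACBC) = \tr(B^{1/2}\, CAC\, B^{1/2})$ and $CAC = (A^{1/2}C)^\top(A^{1/2}C) \succeq 0$, so taking square roots at the end is justified.

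The second step reduces each factor to a trace squared: I would show $\tr(A^2 C^2) \leq \tr(A|C|)^2$ (and symmetrically for $B$). Using $C^2 = |C|^2$ and cyclic permutation, $\tr(A^2 C^2) = \tr((A|C|)^2)$. The matrix $D := A|C|$ is not symmetric, but its nonzero eigenvalues coincide with those of the PSD matrix $A^{1/2}|C|A^{1/2}$ (the standard fact that $XY$ and $YX$ share nonzero spectra), so every $\lambda_i(D) \geq 0$. Hence
\[
\tr(D^2) = \sum_i \lambda_i(D)^2 \leq \Big(\sum_i \lambda_i(D)\Big)^2 = \tr(D)^2 = \tr(A|C|)^2,
\]
where the middle inequality is the trivial $\sum a_i^2 \leq (\sum a_i)^2$ for $a_i \geq 0$, which one can read as a degenerate Cauchy--Schwarz against the all-ones vector. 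Combining this bound with the analogous one for $B$ and taking square roots finishes the proof.

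I expect the argument to be essentially obstacle-free; the only mild subtlety is justifying $\lambda_i(A|C|) \geq 0$ when $A$ is singular, but this follows from the nonzero-spectrum identity $\lambda(XY)\setminus\{0\} = \lambda(YX)\setminus\{0\}$ (or, alternatively, from replacing $A$ by $A+\delta I$ and letting $\delta \downarrow 0$). Everything else is routine trace manipulation.
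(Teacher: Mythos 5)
Your first Cauchy--Schwarz step is fine as far as it goes, but the second step contains an error that kills the argument: the identity $\tr(A^2C^2)=\tr((A|C|)^2)$ is false. Cyclicity gives $\tr(A^2C^2)=\tr(A^2|C|^2)=\tr\big((A|C|)^\top(A|C|)\big)=\sum_i\sigma_i(A|C|)^2$, the squared Frobenius norm of $A|C|$, not $\tr((A|C|)^2)=\sum_i\lambda_i(A|C|)^2$; for a non-symmetric matrix these differ, and by Schur's inequality the sum of squared singular values \emph{dominates} the sum of squared eigenvalues, so the comparison points the wrong way. Worse, the conclusion of your second step, $\tr(A^2C^2)\le\tr(A|C|)^2$, is simply false: take $A=B=\left(\begin{smallmatrix}1&1\\1&1\end{smallmatrix}\right)$ and $C=|C|=\left(\begin{smallmatrix}1&0\\0&0\end{smallmatrix}\right)$; then $\tr(A^2C^2)=2$ while $\tr(A|C|)^2=1$. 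The same example shows your first step is already too lossy to be repaired: it yields $\tr(ACBC)\le\sqrt{\tr(A^2C^2)\tr(B^2C^2)}=2$, whereas the target bound is $\tr(A|C|)\tr(B|C|)=1$. So the split $X=AC$, $Y=CB$ is the wrong decomposition.

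The intended two applications of Cauchy--Schwarz act elsewhere. Write $C=\sum_i\lambda_i v_iv_i^\top$, so that
\[
\tr(ACBC)=\sum_{i,j}\lambda_i\lambda_j\,(v_j^\top Av_i)(v_i^\top Bv_j).
\]
The first Cauchy--Schwarz is applied to the PSD bilinear forms, $|v_j^\top Av_i|\le (v_i^\top Av_i)^{1/2}(v_j^\top Av_j)^{1/2}$ and likewise for $B$, which bounds the double sum by $\big(\sum_i|\lambda_i|(v_i^\top Av_i)^{1/2}(v_i^\top Bv_i)^{1/2}\big)^2$; the second Cauchy--Schwarz, applied to this single sum, bounds it by $\big(\sum_i|\lambda_i|\,v_i^\top Av_i\big)\big(\sum_i|\lambda_i|\,v_i^\top Bv_i\big)=\tr(A|C|)\cdot\tr(B|C|)$.
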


The second one is a second-order approximation of the trace inverse function under perturbations.

\begin{lemma}[{\cite[Lemma 11]{RR20}}] \label{l:RR2}
    Let $A, B$ be symmetric matrices with $A \succ 0$ and $\norm{\eta A^{-1} B}_{\rm op} \leq \frac12$ for some $\eta > 0$.
    Then there is a value $c \in [-2,2]$ so that
    \[
        \tr\big( (A - \eta B)^{-1} \big) = \tr(A^{-1}) + \eta \tr(A^{-1} B A^{-1}) + c \eta^2 \tr(A^{-1} B A^{-1} B A^{-1}).
    \]
\end{lemma}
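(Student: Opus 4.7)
The plan is to expand $(A - \eta B)^{-1}$ as a Neumann-type resolvent series in $\eta$, identify the three leading terms with the expressions in the statement, and bound the remainder by a geometric tail. First I would symmetrize: write $A - \eta B = A^{1/2}(I - X) A^{1/2}$ with $X := \eta A^{-1/2} B A^{-1/2}$, so that $(A - \eta B)^{-1} = A^{-1/2}(I - X)^{-1} A^{-1/2}$. Since $X$ is symmetric and similar to $\eta A^{-1}B$ via conjugation by $A^{1/2}$, the two matrices share eigenvalues, so $\|X\|_{\rm op}$ equals the spectral radius of $\eta A^{-1} B$ and is bounded by $\|\eta A^{-1} B\|_{\rm op} \le \tfrac{1}{2}$; this guarantees convergence and pins the geometric ratio at $\tfrac{1}{2}$.

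Next I would expand $(I-X)^{-1} = \sum_{k \ge 0} X^k$, push the outer $A^{-1/2}$'s inside via cyclicity, and obtain
\[
\tr\big((A - \eta B)^{-1}\big) \;=\; \sum_{k \ge 0} \eta^k \tr\big(Y^k A^{-1}\big), \qquad Y := A^{-1/2} B A^{-1/2}.
\]
Using cyclicity together with $A^{-1/2} A^{-1} A^{-1/2} = A^{-2}$, the $k=0,1,2$ terms simplify to exactly $\tr(A^{-1})$, $\eta \tr(A^{-1}BA^{-1})$, and $\eta^2 \tr(A^{-1}BA^{-1}BA^{-1})$. So the statement reduces to showing that the tail $R := \sum_{k \ge 3} \eta^k \tr(Y^k A^{-1})$ satisfies $|R| \le \eta^2 \tr(A^{-1}BA^{-1}BA^{-1})$; this forces $c = 1 + R/\bigl(\eta^2 \tr(A^{-1}BA^{-1}BA^{-1})\bigr) \in [0,2]$, comfortably inside $[-2,2]$.

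The main obstacle is the tail bound, and the key idea is to ``peel off two copies of $Y$'' from each term: by cyclicity,
\[
\tr(Y^k A^{-1}) \;=\; \tr\big(A^{-1/2} Y^k A^{-1/2}\big) \;=\; \tr\big(Y^{k-2}\, Z\big), \qquad Z := Y A^{-1} Y \succeq 0.
\]
Since $Z$ is PSD and $Y^{k-2}$ is symmetric, sandwiching $Y^{k-2}$ between $\pm \|Y\|_{\rm op}^{k-2} I$ and applying the basic fact $\tr(AB) \le \|B\|_{\rm op} \tr(A)$ for $A \succeq 0$ (already invoked in the paper) yields $|\tr(Y^k A^{-1})| \le \|Y\|_{\rm op}^{k-2} \tr(Y^2 A^{-1})$ for all $k \ge 2$.

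Finally, summing the resulting geometric series with ratio $\eta \|Y\|_{\rm op} = \|X\|_{\rm op} \le \tfrac12$ gives
\[
|R| \;\le\; \eta^2 \tr(Y^2 A^{-1}) \cdot \sum_{j \ge 1} \bigl(\tfrac12\bigr)^j \;=\; \eta^2 \tr(A^{-1}BA^{-1}BA^{-1}),
\]
which closes the argument. Beyond the resolvent symmetrization, the only bit of algebra that needs care is the identity $\tr(Y^2 A^{-1}) = \tr(A^{-1}BA^{-1}BA^{-1})$, but this is just two cyclic shifts past the definition of $Y$.
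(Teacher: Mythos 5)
Your proof is correct. Note that the paper itself does not prove this lemma --- it imports it as a black box from \cite[Lemma~11]{RR20} --- so there is no in-paper argument to compare against; your write-up is a valid self-contained substitute, and it follows the natural route (a symmetrized resolvent/Neumann expansion with the tail controlled by a geometric series), which is essentially the same resolvent-expansion strategy used in \cite{RR20}. All the key steps check out: $X=\eta A^{-1/2}BA^{-1/2}$ is symmetric and similar to $\eta A^{-1}B$, so $\|X\|_{\rm op}=\rho(\eta A^{-1}B)\le\|\eta A^{-1}B\|_{\rm op}\le\tfrac12$, which justifies both invertibility and convergence; the $k=0,1,2$ terms match by cyclicity; and the peeling identity $\tr(Y^kA^{-1})=\tr(Y^{k-2}Z)$ with $Z=YA^{-1}Y\succeq 0$ together with $-\|Y\|_{\rm op}^{k-2}I\preceq Y^{k-2}\preceq\|Y\|_{\rm op}^{k-2}I$ gives exactly the tail bound you claim, with the geometric sum equal to $1$ at ratio $\tfrac12$, so $c\in[0,2]$. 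The only cosmetic caveat is the degenerate case $\tr(A^{-1}BA^{-1}BA^{-1})=0$: since $Z\succeq0$, this forces $Z=0$, hence $Y=0$ and $B=0$, so the identity holds with any $c$ and your division by that trace should simply be guarded by this observation.
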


\subsection{Spectral Graph Theory}

Let $G = (V,E,w)$ be an undirected graph with weight $w_e \in \R_+$ on each edge $e \in E$. 
For two vertices $u, v \in V$, we write $u \sim v$ to denote that $u$ and $v$ are neighbors.
Given a vertex $v\in V$, we use $d_G(v) := \sum_{u: u \sim v} w_{uv}$ to denote its weighted degree, and we will refer to its number of neigbors as its combinatorial degree. 

Let $A_G\in \mathbb{R}^{n\times n}$ be the adjacency matrix of $G$, with $A_G(u,v) = w_{uv}$ if $uv\in E$ and $0$ otherwise. 
Let $D_G$ be the diagonal degree matrix of $G$ with $D_G(v,v) = d_G(v)$. 
The Laplacian matrix of $G$ is defined as $L_G := D_G - A_G$.
It is well-known that the Laplacian matrix is positive semidefinite,
and can be written as the sum of rank-one matrices $L_G = \sum_{e\in E} w_e b_eb_e^\top$, where $b_e = \vec{\one}_u - \vec{\one}_v$ is the (signed) incidence vector of an edge.
For a connected graph $G$, the random walk matrix is defined as $W_G := D_G^{-1} A_G$, which is a row-stochastic matrix with row sums equal to one.
The normalized Laplacian matrix is defined as $\mathcal{L}_G := D_G^{-1/2} L_G D_G^{-1/2}$, where all eigenvalues of $\mathcal{L}_G$ are between $0$ and $2$. 
Note that $\mathcal{L}_G$ and $W_G$ are similar matrices with the same eigenvalues.
We may drop the subscript $G$ from our matrices when the context is clear.

One class of graphs that will be of particular importance for our sparsification results is the expander graphs. 
Let $\lambda_i(\mathcal{L})$ be the $i$-th smallest eigenvalue of $\mathcal{L}$.  
For any scalar $\lambda$, we say that $G$ is a $\lambda$-expander if $\lambda_2(\mathcal{L}_G)\geq \lambda$.
We will often show that sparsifiers are easier to construct for expander graphs, and then apply the following well-known expander decomposition result (see e.g.~\cite{KVV04}) to construct sparsifiers for general graphs. 

\begin{theorem}[Expander Decomposition]  \label{fact: expander-decomp}
Given an unweighted undirected graph $G$ with $n$ vertices, 
there is a polynomial time deterministic algorithm to decompose its edges into subgraphs $G_1, G_2,...G_k$ for some $k$, such that each $G_i$ is an $\Omega(1/\log^2{n})$-expander and each vertex is contained in at most $O(\log{n})$ subgraphs.
\end{theorem}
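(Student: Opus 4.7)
The plan is to prove this by a recursive spectral decomposition driven by Cheeger's inequality. At each step I maintain a current subgraph $H$ (initially $H = G$) and compute (or approximate via power iteration) the second smallest eigenvalue $\lambda_2(\mathcal{L}_H)$ of its normalized Laplacian. If $\lambda_2(\mathcal{L}_H) \geq c / \log^2 n$ for the target constant $c$, I declare $H$ to be one of the output expanders $G_i$ and stop recursing on it. Otherwise, I apply the standard sweep-cut procedure to the second eigenvector to produce a cut $(S, \bar S)$ whose conductance is $O(\sqrt{\lambda_2(\mathcal{L}_H)}) = O(1/\log n)$, which I then use to split $H$.

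The main structural step is choosing how to split along a low-conductance cut so that the resulting pieces are again reasonable inputs for the recursion. Since the goal is a decomposition of \emph{edges} (with vertices allowed to appear in several pieces), I would split $H$ into the induced subgraphs $H[S]$ and $H[\bar S]$, and assign each cut edge $uv$ to both $H[S]$ and $H[\bar S]$ by treating $v$ as an additional pendant on the $S$ side and $u$ as an additional pendant on the $\bar S$ side (or, equivalently, place each cut edge into a small-sized expander bucket at that level). Because the cut conductance is $O(1/\log n)$ and the total depth will be $O(\log n)$, the number of cut edges charged to any vertex across all levels remains $O(\log n)$, preserving the expansion of the pieces up to a $1/\log^2 n$ factor.

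The central obstacle is bounding the recursion depth so that each vertex ends up in at most $O(\log n)$ output subgraphs. A Cheeger cut is not a priori balanced, so naive recursion could create a chain of length $\Omega(n)$. I would resolve this with the standard ``peel small side'' trick: if the cut is unbalanced ($|S| \le |V(H)|/2$, say), treat the smaller side $H[S]$ as a terminal branch that recurses independently, and continue peeling from $H[\bar S]$ on the same level until either the remaining graph becomes an $\Omega(1/\log^2 n)$-expander or its volume has dropped by a constant factor; only then do I descend one level. This guarantees that each vertex drops down at most $O(\log n)$ levels, which directly yields the $O(\log n)$ multiplicity bound.

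The remaining pieces are standard: (i) polynomial-time computation of an approximate second eigenvector and sweep cut (achievable by power iteration on $I - \mathcal{L}_H$), which makes the algorithm deterministic and polynomial; (ii) a charging argument showing that placing cut edges back into the two induced pieces degrades the conductance by only a constant factor at each level, so the bound $\lambda_2 \ge \Omega(1/\log^2 n)$ survives over the $O(\log n)$ levels; and (iii) verifying that the termination condition is reached in each branch, since volumes strictly decrease. Together these yield the claimed edge decomposition into $\Omega(1/\log^2 n)$-expanders with $O(\log n)$ vertex multiplicity.
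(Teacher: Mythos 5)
The paper does not actually prove this statement; it cites it as a well-known result (e.g.\ \cite{KVV04}), and the standard proof is the recursive sweep-cut argument you are reconstructing: cut with conductance threshold $\phi = \Theta(1/\log n)$, convert to $\lambda_2 = \Omega(\phi^2) = \Omega(1/\log^2 n)$ via the easy direction of Cheeger, and bound the number of levels by volume halving. Your skeleton (Cheeger cuts, peel-the-small-side, $O(\log n)$ levels) is the right one, and the determinism/polynomiality via approximate eigenvector computation is fine.

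The genuine gap is your treatment of the cut edges, and it matters because the theorem is used in the paper precisely as an \emph{edge partition}: the applications sparsify each $G_i$ and take the union, relying on $L_G = \sum_i L_{G_i}$. Assigning each cut edge $uv$ to \emph{both} $H[S]$ and $H[\bar S]$ double-counts it, so the pieces no longer sum to $G$; moreover, attaching pendant edges to a piece \emph{after} you have certified $\lambda_2(\mathcal L_{H}) \ge c/\log^2 n$ means the piece you output is not the piece you tested, and your claim that this ``degrades the conductance by only a constant factor'' is unsupported. The alternative you mention in passing --- putting each cut edge into its own small bucket --- fails the multiplicity bound, since a vertex can be incident to up to $\deg(v)$ cut edges. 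The standard fix, which is missing from your write-up, is to collect \emph{all} edges cut during one full round of decomposition into a single residual graph and recurse on that graph as a whole. For this to terminate in $O(\log n)$ rounds you need the charging argument: each edge lies on the smaller side (by volume) of a cut at most $O(\log m)$ times within a round, and each conductance-$\phi$ cut removes at most $\phi \cdot \vol(S)$ edges, so the total number of cut edges in a round is $O(\phi\, m \log m) \le m/2$ for $\phi = c/\log m$ with $c$ small. That bound is what gives both the edge partition and the $O(\log n)$ vertex multiplicity (one cluster per vertex per round), and it is the step your proposal needs to make explicit.
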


We will use the following simple facts about spectral properties of bipartite graphs.

\begin{fact}\label{lemma: symmetric-spectrum}
For any bipartite graph $G$ on $2n$ vertices, $\lambda_{2n-i+1}(\L_G) = 2-\lambda_i(\L_G)$ for $i \in [n]$.
\end{fact}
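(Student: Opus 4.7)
The plan is to exploit the block structure of the normalized adjacency matrix of a bipartite graph, from which the symmetry of the normalized Laplacian spectrum around $1$ follows almost immediately. Let $(L, R)$ be a bipartition of $G$ with $|L| + |R| = 2n$. Writing the vertex set in the order $L, R$, the normalized adjacency matrix takes the block form
\[
\mathcal{A}_G \;:=\; D_G^{-1/2} A_G D_G^{-1/2} \;=\; \begin{pmatrix} 0 & B \\ B^\top & 0 \end{pmatrix}
\]
for some matrix $B \in \mathbb{R}^{|L| \times |R|}$, since $G$ has no edges within $L$ or within $R$. Because $\mathcal{L}_G = I - \mathcal{A}_G$, it suffices to show that the spectrum of $\mathcal{A}_G$ is symmetric around $0$, which by the substitution $\mu \mapsto 1 - \mu$ will translate to the spectrum of $\mathcal{L}_G$ being symmetric around $1$.

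The key step is the standard sign-flip trick. If $(u; v) \in \mathbb{R}^{|L|} \oplus \mathbb{R}^{|R|}$ is an eigenvector of $\mathcal{A}_G$ with eigenvalue $\mu$, then $Bv = \mu u$ and $B^\top u = \mu v$; applying $\mathcal{A}_G$ to $(u; -v)$ gives $(-Bv; B^\top u) = (-\mu u; \mu v) = -\mu\,(u; -v)$, so $(u; -v)$ is an eigenvector with eigenvalue $-\mu$. The map $(u; v) \mapsto (u; -v)$ is a linear involution on $\mathbb{R}^{2n}$ that carries the $\mu$-eigenspace isomorphically onto the $(-\mu)$-eigenspace, so the two have equal dimension. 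Hence the multiset of eigenvalues of $\mathcal{A}_G$ is symmetric about $0$.

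Finally I transfer this symmetry to $\mathcal{L}_G$. Its eigenvalues are $\{1 - \mu : \mu \in \mathrm{spec}(\mathcal{A}_G)\}$, and this multiset is therefore symmetric about $1$. Listing them in nondecreasing order as $\lambda_1(\mathcal{L}_G) \le \cdots \le \lambda_{2n}(\mathcal{L}_G)$, the reflection $\lambda \mapsto 2 - \lambda$ reverses the ordering, so it must send $\lambda_i(\mathcal{L}_G)$ to $\lambda_{2n-i+1}(\mathcal{L}_G)$. This yields $\lambda_{2n-i+1}(\mathcal{L}_G) = 2 - \lambda_i(\mathcal{L}_G)$ for every $i \in [n]$ (indeed for every $i \in [2n]$), as claimed. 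There is no real obstacle in the argument; the only bookkeeping point is to note that the sign-flip map preserves multiplicities, so that the ordered eigenvalues pair up correctly under the reflection.
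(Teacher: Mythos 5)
Your proof is correct, and it is the standard argument: the paper states this as a well-known Fact without giving a proof, so there is nothing to compare against. The block form of $D_G^{-1/2}A_G D_G^{-1/2}$, the sign-flip involution $(u;v)\mapsto(u;-v)$ showing the normalized adjacency spectrum is symmetric about $0$, and the transfer to $\mathcal{L}_G = I - D_G^{-1/2}A_G D_G^{-1/2}$ with the order-reversing reflection $\lambda\mapsto 2-\lambda$ are all exactly right (the only implicit hypothesis, shared with the paper's definition of $\mathcal{L}_G$, is that all degrees are positive so that $D_G^{-1/2}$ exists).
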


\begin{fact}\label{lemma: max-eigenvalue-bipartite}
A graph $G$ is bipartite if and only if $\lambda_{\max}(\mathcal{L}_G)=2$.
\end{fact}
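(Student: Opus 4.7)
The plan is to prove both directions via a direct eigenvector construction, exploiting the identification $\mathcal{L}_G = D^{-1/2}(D-A)D^{-1/2}$. Throughout, assume $G$ is connected (the standing hypothesis whenever the random walk matrix is discussed), so that $D$ is invertible.

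For the easy direction ($G$ bipartite $\Rightarrow \lambda_{\max}(\mathcal{L}_G) = 2$), I would take a bipartition $V = L \sqcup R$ and set $f \in \R^V$ with $f(v) = +1$ for $v \in L$ and $f(v) = -1$ for $v \in R$. For every edge $uv$ the endpoints lie on opposite sides, so $(Af)(u) = \sum_{v \sim u} w_{uv} f(v) = -f(u) \cdot d_G(u) = -(Df)(u)$. Hence $L_G f = (D-A)f = 2 D f$. Setting $g := D^{1/2} f$ then yields $\mathcal{L}_G g = D^{-1/2} L_G D^{-1/2} g = D^{-1/2} L_G f = 2 D^{1/2} f = 2g$, so $2$ is an eigenvalue of $\mathcal{L}_G$. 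Since it was stated in the excerpt that all eigenvalues of $\mathcal{L}_G$ lie in $[0,2]$, this forces $\lambda_{\max}(\mathcal{L}_G) = 2$.

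For the converse, suppose $\mathcal{L}_G g = 2g$ for some nonzero $g$, and let $f := D^{-1/2} g \neq 0$. Reversing the computation above gives $L_G f = 2 D f$, equivalently $(D+A)f = 0$. Taking the inner product with $f$ and expanding diagonally yields the key identity
\[
0 = f^\top (D + A) f = \sum_{uv \in E} w_{uv}\bigl(f(u) + f(v)\bigr)^2,
\]
so $f(v) = -f(u)$ for every edge $uv \in E$. Because $G$ is connected and $f \not\equiv 0$, a simple walk argument shows $f$ is nonzero at every vertex, alternating in sign along any edge. Then $L := \{v : f(v) > 0\}$ and $R := \{v : f(v) < 0\}$ partition $V$, and every edge connects $L$ to $R$, exhibiting $G$ as bipartite.

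The argument is essentially routine; the only subtlety worth stating explicitly is the use of connectivity in the converse to rule out vertices with $f(v) = 0$ (otherwise the conclusion would only be that some connected component is bipartite). No deep obstacle is anticipated; the proof fits comfortably in a few lines and relies only on the standard identity $\mathcal{L}_G = I - D^{-1/2} A D^{-1/2}$ and the bound $\lambda_{\max}(\mathcal{L}_G) \leq 2$ already stated in the excerpt.
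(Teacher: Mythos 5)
Your proof is correct. The paper states this as a standard fact without proof, and your argument is the classical one: the alternating $\pm 1$ vector (conjugated by $D^{1/2}$) gives the eigenvalue $2$ in the forward direction, and the identity $f^\top(D+A)f = \sum_{uv\in E} w_{uv}(f(u)+f(v))^2$ forces sign alternation in the converse. Your explicit attention to connectivity is warranted, since the paper only defines $\mathcal{L}_G$ (and uses this fact, e.g.\ in the proof of \autoref{lemma: undirected-uc-sparsifier}) for connected graphs, and without connectivity the converse would only show that some component is bipartite.
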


We will also consider sparsfying random walks on directed graphs.
Given a directed graph $G$ with adjacency matrix $A$, let $D_{\rm in}$ and $D_{\rm out}$ be the diagonal in-degree and out-degree matrices of $G$ respectively. 
The random walk matrix of $G$ is defined as $W = D_{\rm out}^{-1}A$. 
We say that $G$ is Eulerian if $D_{\rm in} = D_{\rm out}$.

\subsection{Spectral Sparsification for Directed Graphs} \label{section: sparsifiers}

We follow the presentation in~\cite{APPSV23} to define the new notions of spectral sparsification for directed graphs~\cite{CKPRSV17,AKMPSV20,APPSV23} in a unifying way.
The following is a definition of matrix approximation for general non-symmetric matrices, in which the aim is to preserve $x^\top A y$ for all vectors $x$ and $y$ (but not just quadratic form $x^\top A x$ as in the symmetric case) with respect to some appropriate error matrices.

\begin{definition}[Matrix Approximation] \label{def:matrix-approx} 
Let $A, \Tilde{A} \in \C^{n \times n}$ and $E, F \in \C^{n \times n}$ be positive semidefinite matrices.
We say that $\Tilde{A}$ is an $\eps$-matrix approximation of $A$ with respect to error matrices $E$ and $F$ if any of the following equivalent conditions hold:
\begin{enumerate}
\item  $|x^* (A-\Tilde{A})y| \leq \frac{\eps}{2} (x^* Ex + y^* Fy)$ for all $x,y\in \C^n$, 
\item $\big\|E^{\dagger/2}(A-\Tilde{A})F^{\dagger/2}\big\|_{\rm op} \leq \eps$ and $\ker(E)\subseteq \ker((A-\Tilde{A})^\top)$ and $\ker(F)\subseteq \ker(A-\Tilde{A})$.
\end{enumerate}
Here $\ker(M)$ denotes the (right) kernel of a matrix $M$.
\end{definition}

\subsubsection*{Standard Approximation}

Cohen, Kelner, Peebles, Peng, Rao, Sidford and Vladu~\cite{CKPRSV17} introduced the following definition of spectral sparsification for directed graphs, and used it to design fast algorithms for solving directed Laplacian equations.

\begin{definition}[Standard Approximation of Matrices and of Graphs] \label{def:standard-approx}
Let $A, \Tilde{A} \in \C^{n \times n}$.
We say that $\Tilde{A}$ is an $\eps$-standard approximation of $A$ if $\Tilde{A}$ is an $\eps$-matrix-approximation of $A$ with respect to the error matrices $E= F= D-\frac{1}{2}(A+A^*)$, where $D := \frac{1}{2}(D_{A} + D_{A^*})$ and $D_{A}, D_{A^*}$ are the diagonal matrix of the row sums of $A, A^*$ respectively. 

Let $G$ and $\Tilde{G}$ be directed graphs with adjacency matrices $A$ and $\Tilde{A}$.
We say that $\Tilde{G}$ is an $\eps$-standard approximation of $G$ if $\Tilde{A}$ is an $\eps$-standard approximation of $A$.
Note that $D = \frac12 (D_{\rm out} + D_{\rm in})$ in this case and thus the error matrix is exactly the Laplacian matrix of the underlying undirected graph of $G$. 
\end{definition}

Note that as the kernel of any Laplacian matrix contains the all-ones vector, the kernel condition in \autoref{def:matrix-approx} implies that standard approximation is always degree-preserving, 
as $(A-\Tilde{A})\vec{\one} = 0$ implies the outdegrees are the same
and $(A-\Tilde{A})^\top \vec{\one} = 0$ implies the indegrees are the same.
Thus, even for undirected graphs, it is a stronger definition than the spectral approximation of Spielman and Teng. 
Recently, Sachdeva, Thudi, and Zhao~\cite{STZ24} give a polynomial time algorithm to construct an $\eps$-standard sparsifier with $\hat{O}( (n\log^2{n}) / \eps^2 )$ edges (ignoring $\poly \log \log n$ factors) for Eulerian directed graphs, using the discrepancy result in~\cite{BJM23} for the Matrix Spencer problem.

\subsubsection*{Unit-Circle Approximation}

Ahmadinejad, Kelner, Murtagh, Peebles, Sidford, and Vadhan~\cite{AKMPSV20} introduced a stronger notion called unit-circle (UC) approximation, and used it to design deterministic low-space algorithm for estimating random walk probabilities of Eulerian directed graphs.
 
\begin{definition}[Unit-Circle Approximation] \label{def: uc-approx}
Let $G$ and $\Tilde{G}$ be directed graphs with adjacency matrices $A$ and $\Tilde{A}$ respectively.
We say that $\Tilde{G}$ is an $\eps$-UC approximation of $G$ if for all $z \in \C$ with $|z|=1$, the matrix $z\Tilde{A}$ is an $\eps$-standard approximation of the matrix $zA$. 
\end{definition}

It is immediate from the definition that UC approximation implies standard approximation by taking $z=1$.
Suppose $\Tilde{G}$ is an $\eps$-UC approximation of $G$, and $W$ and $\Tilde{W}$ are the random walk matrices of $G$ and $\Tilde{G}$ respectively.
An interesting property of UC approximation is that $\Tilde{W}^k$ is an $O(\eps)$-UC approximation of $W^k$ for all integers $k \geq 1$. 
In addition to preserving powers, UC approximation preserves a stronger property called the cycle lift of random walk matrices, which was a key property in analyzing the low-space Laplacian solver; see Lemma~1.4 in~\cite{AKMPSV20} for these two properties.

For undirected graphs, the adjacency matrix has only real eigenvalues, and it turns out, this means it suffices to check the condition in \autoref{def: uc-approx} for only $z \in \{\pm 1\}$. 
Thus UC approximation admits a simple definition yet still carries strong properties that are not captured by standard spectral approximations.

\begin{lemma}[UC Approximation for Undirected Graphs {\cite[Lemma 3.7]{AKMPSV20}}] \label{fact: uc-properties-undirected}
Let $G$ and $\Tilde{G}$ be undirected graphs with adjacency matrices $A$ and $\Tilde{A}$ respectively.
Then $\Tilde{A}$ is an $\eps$-UC-approximation of $A$ if and only if $\Tilde{A}$ is an $\eps$-standard approximation of $A$ and $-\Tilde{A}$ is an $\eps$-standard-approximations of $-A$.
\end{lemma}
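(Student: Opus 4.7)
I will prove the biconditional by handling the two implications separately. The forward direction is immediate: specializing the defining condition of UC-approximation to $z = 1$ gives that $\tilde{A}$ is an $\eps$-standard approximation of $A$, and specializing to $z = -1$ (so that $z\tilde{A} = -\tilde{A}$ and $zA = -A$) gives that $-\tilde{A}$ is an $\eps$-standard approximation of $-A$.

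For the converse, my plan is a single convex-combination argument. The key observation is that $|z| = 1$ forces $|x^{*}(zA - z\tilde{A})y| = |x^{*}(A - \tilde{A})y|$, so the left-hand side of the matrix-approximation inequality does not depend on $z$. Hence all the $z$-dependence sits in the PSD error matrix $E_{z}$ governing $\eps$-standard approximation of $zA$. For an undirected $A$ (real symmetric with nonnegative entries, so $A^{*} = A$), the Hermitian part of $zA$ equals $\operatorname{Re}(z) \cdot A$, and the natural PSD degree matrix is $D_{A}$ (obtained from the row sums of $|zA| = A$); a direct computation then yields
\[
E_{z} \;=\; D_{A} - \operatorname{Re}(z) \cdot A \;=\; \tfrac{1 + \operatorname{Re}(z)}{2}\, L_{G} \;+\; \tfrac{1 - \operatorname{Re}(z)}{2}\, Q_{G},
\]
where $L_{G} = D_{A} - A$ and $Q_{G} = D_{A} + A$ are the Laplacian and signless Laplacian of $G$. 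Since $\operatorname{Re}(z) \in [-1, 1]$, both coefficients lie in $[0, 1]$ and $E_{z}$ is a PSD convex combination of $E_{1} = L_{G}$ and $E_{-1} = Q_{G}$. The required bound for a general $z$ then follows by taking the same convex combination of the two inequalities supplied by the hypotheses.

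The kernel requirement from \autoref{def:matrix-approx} follows because $\ker(E_{z}) = \ker(L_{G}) \cap \ker(Q_{G})$ whenever $\operatorname{Re}(z) \in (-1, 1)$, while each individual kernel is contained in $\ker(A - \tilde{A}) = \ker((A - \tilde{A})^{*})$ (these coincide since $A, \tilde{A}$ are symmetric) by the corresponding hypothesis; the boundary cases $\operatorname{Re}(z) = \pm 1$ are covered by the respective hypothesis directly. I expect the subtlest step to be pinning down that the PSD error matrix implicit in ``$-\tilde{A}$ is an $\eps$-standard approximation of $-A$'' is the signless Laplacian $Q_{G}$: a mechanical application of $D - \tfrac{1}{2}(M + M^{*})$ to $M = -A$ would give the non-PSD matrix $-L_{G}$ and yield a vacuous condition, so one must adopt the convention (implicit in~\cite{AKMPSV20}) that $D$ is formed from the row sums of $|M|$ rather than $M$. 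Once that identification is fixed, the rest of the argument is essentially one line of convex-combination algebra together with a routine kernel check.
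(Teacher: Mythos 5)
The paper does not prove this lemma --- it is imported verbatim from \cite{AKMPSV20} (Lemma 3.7) --- so there is no in-paper argument to compare against. Your proof is correct and is essentially the standard one. The forward direction by specializing to $z=\pm1$ is right, and your converse via the identity $D_A-\operatorname{Re}(z)A=\frac{1+\operatorname{Re}(z)}{2}(D_A-A)+\frac{1-\operatorname{Re}(z)}{2}(D_A+A)$, combined with the observation that $|x^*(zA-z\tilde A)y|$ is independent of $z$ when $|z|=1$, is exactly the convex-combination argument one wants; verifying condition~1 of \autoref{def:matrix-approx} already suffices (the kernel conditions in condition~2 follow from it by the stated equivalence, via the scaling $y\mapsto ty$, $t\to0$), so your separate kernel check is sound but not strictly needed. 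Your flagged subtlety is also a genuine one: applying \autoref{def:standard-approx} literally to $M=zA$ would produce the degree matrix $\operatorname{Re}(z)D_A$ and the degenerate error matrix $\operatorname{Re}(z)L_G$, so the degree matrix must indeed be read as that of the underlying graph (row sums of $|M|$), which is the convention of \cite{AKMPSV20} and is the identification the paper itself uses later in \autoref{lemma: undirected-uc-sparsifier}, where the two error matrices are exactly $L_G=D_G-A_G$ and $U_G=D_G+A_G$.
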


For directed graphs, however, UC approximation is not easy to certify, and it was not known how to construct UC sparsifiers directly for arbitrary directed graphs. 
In~\cite{AKMPSV20}, UC sparsifiers are only constructed for squares of Eulerian graphs, which is sufficient for the purpose of designing Eulerian Laplacian solvers using small space.

\subsubsection*{Singular Value Approximation}

More recently, Ahmadinejad, Peebles, Pyne, Sidford, and Vadhan~\cite{APPSV23} proposed an even stronger notion called singular-value (SV) approximation, and used it to design nearly linear time algorithms for estimating stationary probabilities of general directed graphs.

\begin{definition}[Singular Value Approximation] \label{def:SV-approx}
Let $G$ and $\Tilde{G}$ be directed graphs with adjacency matrices $A$ and $\Tilde{A}$ respectively.
We say that $\Tilde{G}$ is an $\eps$-SV approximation of $G$ if $\Tilde{A}$ is an $\eps$-matrix approximation of $A$ with respect to the error matrices $E = D_{\rm out} - AD_{\rm in}^{\dagger}A^\top$ and $F = D_{\rm in} - A^\top D_{\rm out}^{\dagger}A$, where $D_{\rm out}$ and $D_{\rm in}$ are the diagonal outdegree and indegree matrices of $G$. 
We may also say that $\Tilde{A}$ is an $\eps$-SV approximation of $A$ when it is clear from the context.
\end{definition}

It was proved in~\cite{APPSV23} that SV approximation implies UC approximation and is strictly stronger.
One useful property of SV approximation is being preserved under arbitrary matrix product operations: Given two sequences of directed random walk matrices $W_1,...W_k$ and $\Tilde{W}_1,...\Tilde{W}_k$, if each $\Tilde{W}_i$ is an $\eps$-SV approximation of $W_i$, then the matrix $\prod_{i=1}^k\Tilde{W}_i$ is an $(\eps + O(\eps^2))$-SV approximation of $\prod_{i=1}^kW_i$. 
Moreover, SV approximation is preserved under the following operations that allow for efficient constructions.

\begin{lemma}
\label{fact:SV-properties}
SV approximation is preserved under the following operations:
    \begin{enumerate}
        \item (Decomposition:) If $\Tilde{A}_1$ and $\Tilde{A}_2$ are $\eps$-SV-approximations of matices $A_1$ and $A_2$ respectively, then $\Tilde{A}_1  + \Tilde{A}_2$ is an $\eps$-SV approximation of $A_1 + A_2$.
        \item (Bipartite Lift:) $\Tilde{A}$ is an $\eps$-SV approximation of $A$ if and only if the symmetric matrix 
        \[\begin{bmatrix}
            0&\Tilde{A}\\
            \Tilde{A}^\top &0
        \end{bmatrix} \text{ is an $\eps$-SV approximation of }
        \begin{bmatrix}
            0&A\\
            A^\top &0
        \end{bmatrix}. \]
    \end{enumerate}
\end{lemma}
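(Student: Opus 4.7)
The plan is to establish the two claims separately, in both cases working directly from condition~(1) of \autoref{def:matrix-approx} (the bilinear form characterization), since matching vectors on both sides avoids manipulating operator norms and pseudoinverses. Throughout I will write $E_B := D_{\rm out}^B - B(D_{\rm in}^B)^\dagger B^\top$ and $F_B := D_{\rm in}^B - B^\top (D_{\rm out}^B)^\dagger B$ for the SV error matrices associated to $B$, and recall that both are PSD since they are Schur complements of the (PSD) bipartite Laplacian $\bigl[\begin{smallmatrix} D_{\rm out}^B & B \\ B^\top & D_{\rm in}^B\end{smallmatrix}\bigr]$.

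For the decomposition part, I would first apply the triangle inequality to the bilinear bounds guaranteed for $\tilde A_1$ and $\tilde A_2$ to obtain
\[
  |x^*((A_1+A_2)-(\tilde A_1+\tilde A_2))y| \;\leq\; \tfrac{\eps}{2}\bigl(x^*(E_{A_1}+E_{A_2})x + y^*(F_{A_1}+F_{A_2})y\bigr).
\]
Since out- and in-degrees are additive when summing adjacency matrices, it then suffices to prove $E_{A_1}+E_{A_2} \preceq E_{A_1+A_2}$ (and the symmetric statement for $F$), which after cancelling the diagonal degree terms reduces to the matrix inequality
\[
  (A_1+A_2)(D_{\rm in}^{A_1}+D_{\rm in}^{A_2})^\dagger (A_1+A_2)^\top \;\preceq\; A_1 (D_{\rm in}^{A_1})^\dagger A_1^\top + A_2 (D_{\rm in}^{A_2})^\dagger A_2^\top.
\]
This is exactly the joint operator convexity of the map $(X,Y) \mapsto X Y^\dagger X^\top$; I would give a short Schur-complement proof by observing that $\bigl[\begin{smallmatrix} D_{\rm in}^{A_i} & A_i^\top \\ A_i & A_i (D_{\rm in}^{A_i})^\dagger A_i^\top\end{smallmatrix}\bigr] \succeq 0$ for each $i$ and summing.

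For the bipartite lift part, I would compute the SV error matrices of $M := \bigl[\begin{smallmatrix} 0 & A \\ A^\top & 0\end{smallmatrix}\bigr]$ by direct block multiplication. The row and column sums of $M$ immediately give $D_{\rm out}^M = D_{\rm in}^M = \bigl[\begin{smallmatrix} D_{\rm out}^A & 0 \\ 0 & D_{\rm in}^A\end{smallmatrix}\bigr]$, and expanding $M (D_{\rm in}^M)^\dagger M$ then yields the clean block-diagonal form $E_M = F_M = \bigl[\begin{smallmatrix} E_A & 0 \\ 0 & F_A\end{smallmatrix}\bigr]$. Applying the bilinear bound for $M$ to vectors $(x_1,x_2)^\top$ and $(y_1,y_2)^\top$ then splits as $|x_1^*(A-\tilde A)y_2 + x_2^*(A-\tilde A)^\top y_1|$ on the left and $\tfrac{\eps}{2}(x_1^* E_A x_1 + x_2^* F_A x_2 + y_1^* E_A y_1 + y_2^* F_A y_2)$ on the right. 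The forward direction follows by applying the bound for $A$ twice (once to each summand) and using the triangle inequality, while the reverse direction follows instantly by specializing $x_2 = y_1 = 0$.

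The main obstacle I anticipate is the joint-convexity step in part~(1), which is short but requires care when the in-degree matrices have nontrivial kernels (isolated vertices). I would handle this by first checking that $\ker(D_{\rm in}^{A_i}) \subseteq \ker(A_i)$, so the Schur-complement identity remains valid with pseudoinverses, and then verifying that the kernel inclusions $\ker(F_{A_1+A_2}) \subseteq \ker((A_1+A_2)-(\tilde A_1+\tilde A_2))$ required in condition~(2) of \autoref{def:matrix-approx} follow from the PSD domination $F_{A_1}+F_{A_2} \preceq F_{A_1+A_2}$ together with $F_{A_i} \succeq 0$ and the corresponding kernel conditions assumed for each $\tilde A_i$. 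All remaining steps are routine block algebra and the triangle inequality.
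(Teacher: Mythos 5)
Your proposal is correct. The paper itself gives no proof of \autoref{fact:SV-properties} (it is imported from the singular-value approximation paper of Ahmadinejad et al.), so there is nothing to compare against; your argument is a sound, self-contained derivation. Both halves check out: the block computation giving $E_M = F_M = \mathrm{diag}(E_A, F_A)$ for the lift is exactly right, and the superadditivity $E_{A_1}+E_{A_2}\preceq E_{A_1+A_2}$ via the Schur-complement/joint-convexity argument is the correct key step for decomposition, with the kernel condition $\ker(D_{\rm in}^{A_i})\subseteq\ker(A_i)$ holding because the $A_i$ are nonnegative adjacency matrices. One small simplification: since conditions (1) and (2) of \autoref{def:matrix-approx} are equivalent, establishing the bilinear bound for all $x,y$ already forces the kernel inclusions (scale $x\mapsto tx$ and let $t\to 0$), so the separate verification you anticipate at the end is unnecessary, though harmless.
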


The first property says that SV approximation is linear, which is crucial for decomposition-based construction algorithms. 
The second property says that to construct an SV sparsifier of a directed graph $G = (V, \vec{E})$, it is equivalent to construct an SV sparisifier of its bipartite lift, which is an undirected bipartite graph. 
Thus, unlike for standard or UC approximation, SV sparsification of directed graphs can be reduced to SV sparsification of undirected bipartite graphs. 
This will allow us to apply the partial coloring algorithm in \autoref{lemma:deterministic-partial-coloring} to construct SV sparsifiers for both directed and undirected graphs.


\section{Deterministic Matrix Partial Coloring} \label{s:partial-coloring}

The goal of this section is to prove \autoref{lemma:deterministic-partial-coloring}, with a proof overview given in \autoref{s:overview}.
We will first present the deterministic discrepancy walk framework used in previous work in \autoref{s:general-framework}.
Then, we will describe the potential function and use it to bound the maximum eigenvalue in \autoref{s:potential}.
Finally, we will describe the restricted subspace in \autoref{s:subspace} and bound the second-order term using a spectral argument in \autoref{s:trace} to complete the proof.

\subsection{The Deterministic Discrepancy Walk Framework} \label{s:general-framework}

We use the deterministic discrepancy walk framework in~\cite{LRR17,BLV22,PV23} for computing a partial coloring with small discrepancy.
The idea is to use a potential function that bounds the discrepancy of our current solution to guide a deterministic walk in a suitable high dimensional subspace.

 \begin{framed}{\textbf{Deterministic Discrepancy Walk for Matrix Partial Coloring}}
 
 \textbf{Input:} $A_1, \cdots, A_m \in \R^{n \times n}$ such that $\sum_{i=1}^m |A_i| \preccurlyeq I_n$ and a potential function $\Phi: \R^m \mapsto \R_+$

 \textbf{Output:} $x \in [-1, 1]^m$ such that $x$ has small discrepancy $\norm{A(x)}_{\rm op}$ and many coordinates in $\{\pm1\}$.
 \begin{enumerate}
     \item Initialization: Set $x_0 = 0_m$ as the initial point.  
Set $H_1 = [m]$ be the initial set of active coordinates. 
Set $t = 1$. Let $\alpha = 1/\poly(m)$ be the maximal step size. 
     \item \textbf{While} $m_t := |H_t| > \frac34 m$ \textbf{do}
         \begin{enumerate}
           \item  Pick $y_t$ to be a unit vector from an appropriate subspace $\mathcal{U}_t \subseteq \R^m$ satisfying $y_t\perp x_{t-1}$, $\supp(y_t)\subseteq H_t$, and $\Phi(x_{t-1} + y_t) - \Phi(x_{t-1})$ is bounded.
           \item Let $\delta_t$ be the largest step size such that $\delta_t \leq \alpha$ and $x_{t-1} + \delta_t y_t \in [-1,1]^m$. 
            \item Update $x_t \gets x_{t-1} + \delta_t y_t$. Update $t \gets t+1$.
            \item Update $H_{t} := \{ i \in [n] \mid |x_{t-1}(i)| < 1 \}$ be the set of fractional coordinates. 
       \end{enumerate}
   
    \item Return $x:=x_T$ where $T$ is the last iteration. 
 \end{enumerate}
 \end{framed}

To bound the discrepancy of the final solution $x_T$, we use a potential function $\Phi$ such that $\Phi(x)$ is a smooth estimation of the discrepancy of the solution $x$. 
We then bound the final discrepancy by $\Phi(x_T) = \Phi(0) + \sum_{t=1}^T \big( \Phi(x_{t}) - \Phi(x_{t-1}) \big)$. 
So, if $\Phi(0)$ is small and $\Phi(x_{t}) - \Phi(x_{t-1})$ is bounded for all $t$, then we can upper bound the discrepancy of $x_T$. 
The main task of the analysis under this framework then is to show that for a suitable potential function, and any $x_{t-1} \in [-1,1]^m$, there is a large enough subspace $\mathcal{U}_t$ such that for any unit vector $y_t \in \mathcal{U}_t$, the potential increase $\Phi(x_{t-1} + \delta_ty_t) - \Phi(x_{t-1})$ is small if the step size $\delta_t$ is at most $\alpha$. 
This implies that if the dimension of the allowed subspace $\mathcal H$ is also large enough, then there is always an update direction $y_t \in \mathcal{U}_t \cap \mathcal H$, which would imply that the final solution has small discrepancy and is in the allowed subspace $\mathcal H$ as well.

\subsection{Potential Function and Maximum Eigenvalue Bound} \label{s:potential}

Allen-Zhu, Liao, Orecchia~\cite{AZLO15} developed a regularized optimization framework to derive the spectral sparsification result in~\cite{BSS12} in a more principled way.
Recall that the maximum eigenvalue $\lambda_{\max}(A)$ of a matrix $A$ can be formulated as $\max_{M \in \Delta_n} \inner{A}{M}$ where $\Delta_n$ is the set of density matrices.
In this framework, the potential function is a regularized version of the maximum eigenvalue 
\[
\Phi(x) :=\max_{M \in \Delta_n} \inner{A(x)}{M} - \frac{1}{\eta} \cdot \phi(M),
\]
where the regularizer $\phi(M)$ is a nonpositive strongly convex function
and $\eta$ is a parameter controlling the contribution of the regularization term. 
They showed that the negative entropy regularizer $\phi(M) := \inner{M}{\log M}$ can be used to obtain a deterministic algorithm to recover the $O( (n \log n) / \eps^2)$ spectral sparsification result in~\cite{SS11},
and the $\ell_{1/2}$-regularizer $\phi(M) = -2\tr(M^{\frac12})$ can be used to recover the $O(n / \eps^2)$ spectral sparsification result in~\cite{BSS12}.

\subsubsection*{Potential Function}

Naturally, we set our potential function to be
\begin{equation} \label{eq:potential}
\Phi(x) := \max_{M \in \Delta_n} \inner{A(x)}{M} + \frac{2}{\eta} \cdot \tr(M^\frac12).
\end{equation}
It follows from \autoref{l:root-density} that $\Phi(x)$ satisfies 
$\lambda_{\max}(A(x)) \leq \Phi(x) \leq \lambda_{\max}(A(x)) + 2\sqrt{n}/\eta$.
For the final partial coloring $x_T = \sum_{t=1}^T \delta_t y_t$ returned by the general framework,
\begin{align*}
\lambda_{\max}(A(x_T)) \leq \Phi(x_T) 
& = \Phi(0) + \sum_{t=1}^T \big( \Phi(x_{t-1} + \delta_t y_t) - \Phi(x_{t-1}) \big)
\leq  \frac{2\sqrt{n}}{\eta} + \sum_{t=1}^T \big( \Phi(x_{t-1} + \delta_t y_t) - \Phi(x_{t-1}) \big).
\end{align*}

\subsubsection*{Maximum Eigenvalue and Operator Norm}

The potential function is used to upper bound $\lambda_{\max}(A(x))$, 
but the goal in \autoref{lemma:deterministic-partial-coloring} is to upper bound $\|A(x)\|_{\rm op}$. 
A standard trick in discrepancy theory is to handle the operator norm via the reduction
\[
\norm{A(x)}_{\rm op} = \lambda_{\max} \left( \sum_{i=1}^m  x(i) \cdot \begin{pmatrix} A_i & 0 \\ 0 & -A_i \end{pmatrix} \right).
\]
This only increases the dimension of the matrices by a factor of two.
We will abuse notation to still use $A_i$ to denote the block diagonal matrix $\begin{pmatrix} A_i & 0 \\ 0 & -A_i \end{pmatrix}$ throughout this section.

We remark that this is an advantage of the framework by Reis and Rothvoss~\cite{RR20} that allows the input matrices $A_1, \ldots, A_m$ to be arbitrary symmetric matrices (not just for rank one symmetric matrices of the form $v v^\top$) so that the above reduction works.
In previous work~\cite{BSS12,AZLO15} for spectral sparsification, two potential functions are used to keep track of the maximum eigenvalue and the minimum eigenvalue separately, and thus the algorithm and the analysis are more involved.
This is one place where the discrepancy walk framework provides simplification even for the standard setting in~\cite{BSS12,AZLO15}, as keeping track of only one potential function makes the algorithm and the analysis conceptually and technically simpler.

\subsubsection*{Maximum Eigenvalue Bound}

The next task is to bound the potential increase.
The following bound essentially\footnote{In Theorem 3.3 of~\cite{AZLO15}, the matrices are assumed to be either positive semidefinite or negative semidefinite, while our matrices $\begin{pmatrix} A_i & 0 \\ 0 & -A_i \end{pmatrix}$ do not satisfy this assumption.  But their arguments may be adapted to give the same result.}
follows from~\cite{AZLO15}.

\begin{lemma}[Potential Increase] \label{l:potential}
Given symmetric matrices $A_1, \cdots, A_m \in \R^{n \times n}$ and $x \in \R^m$, the unique optimizer in~(\ref{eq:potential}) is $M = (u I_n - \eta A(x))^{-2}$ where $u \in \R$ is the unique value such that $M \in \Delta_n$.
For any $y \in \R^m$, the change of potential function can be bounded by
\[
\Phi(x + y) - \Phi(x) 
\leq \tr(M A(y)) + c \eta \tr\big(M^{\frac12} A(y) M^{\frac12} A(y) M^{\frac12}\big),
\]
where $|c| \leq 2$ as long as $\norm{M^{\frac12} \cdot \eta A(y)}_{\rm op} \leq \frac12$.
\end{lemma}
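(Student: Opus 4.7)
\textbf{Proof plan for Lemma~\ref{l:potential}.}

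The plan is to reduce the matrix maximization defining $\Phi(x)$ to a scalar dual problem via Lagrangian duality, and then feed the resulting scalar trace-inverse expression into the second-order perturbation bound provided by \autoref{l:RR2}. First, I would establish the form of the unique optimizer. Writing the Lagrangian for the equality constraint $\tr(M) = 1$ as $L(M,\mu) = \inner{A(x)}{M} + \tfrac{2}{\eta}\tr(M^{\frac12}) - \mu\bigl(\tr(M) - 1\bigr)$, the first-order condition $A(x) + \tfrac{1}{\eta} M^{-\frac12} = \mu I$ yields $M^{-\frac12} = \eta(\mu I - A(x))$. Reparametrizing $u := \eta\mu$ gives $M = (uI - \eta A(x))^{-2}$, with $u$ determined uniquely by the trace normalization $\tr(M)=1$ (uniqueness follows from the strict concavity of $\tr(\cdot^{\frac12})$ on the positive cone).

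Second, I would get a clean closed-form for $\Phi(x)$. Substituting $\eta A(x) = uI - M^{-\frac12}$ back into the objective and using $\tr(M) = 1$ and $\tr(M^{-\frac12} M) = \tr(M^{\frac12})$, one obtains $\Phi(x) = \tfrac{u}{\eta} + \tfrac{1}{\eta}\tr(M^{\frac12})$. Moreover, the same calculation applied at an arbitrary dual variable (without imposing the normalization) yields the weak-duality inequality
\[
\Phi(z) \;\leq\; \frac{u}{\eta} + \frac{1}{\eta}\,\tr\bigl((uI - \eta A(z))^{-1}\bigr)
\]
valid for every $z \in \R^m$ and every $u$ with $uI - \eta A(z) \succ 0$ (this is where I would check the sign on $\mu \geq \lambda_{\max}(A(z))$ so the inner sup is finite).

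Third, I would specialize this inequality to $z = x+y$, using the same $u$ coming from the optimizer at $x$. Then
\[
uI - \eta A(x+y) \;=\; M^{-\frac12} - \eta A(y),
\]
so $\tr\bigl((uI - \eta A(x+y))^{-1}\bigr) = \tr\bigl((M^{-\frac12} - \eta A(y))^{-1}\bigr)$. Under the hypothesis $\norm{\eta M^{\frac12} A(y)}_{\rm op} \leq \tfrac12$, \autoref{l:RR2} (applied with $A = M^{-\frac12}$, $B = A(y)$, so $A^{-1} = M^{\frac12}$) gives
\[
\tr\bigl((M^{-\frac12} - \eta A(y))^{-1}\bigr) \;=\; \tr(M^{\frac12}) + \eta\,\tr\bigl(M^{\frac12} A(y) M^{\frac12}\bigr) + c\eta^2\,\tr\bigl(M^{\frac12} A(y) M^{\frac12} A(y) M^{\frac12}\bigr)
\]
for some $c \in [-2,2]$. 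Multiplying by $\tfrac{1}{\eta}$, using $\tr(M^{\frac12} A(y) M^{\frac12}) = \tr(M A(y))$, and subtracting the closed form for $\Phi(x)$ from the previous step cancels the $\tfrac{u}{\eta} + \tfrac{1}{\eta}\tr(M^{\frac12})$ terms and leaves exactly the claimed bound.

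The main obstacle I anticipate is in the duality step: I need to justify that the weak-duality inequality can be applied with the specific $u$ coming from the optimizer at $x$ rather than the optimizer at $x+y$, and that the shifted matrix $M^{-\frac12} - \eta A(y)$ is actually invertible (indeed positive definite) so that \autoref{l:RR2} is applicable. Both follow from the operator-norm hypothesis $\norm{\eta M^{\frac12} A(y)}_{\rm op} \leq \tfrac12$, which ensures $M^{-\frac12} - \eta A(y) \succeq \tfrac12 M^{-\frac12} \succ 0$ after a short congruence-type manipulation. Everything else is essentially bookkeeping of constants between the two parametrizations of the Lagrange multiplier.
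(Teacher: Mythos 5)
Your proposal is correct and follows essentially the same route as the paper's appendix proof: KKT conditions give the optimizer and the closed form $\Phi(x) = \frac{u_x}{\eta} + \frac{1}{\eta}\tr\big((u_x I_n - \eta A(x))^{-1}\big)$, the trace-inverse at $x+y$ is evaluated at the frozen dual value $u_x$ rather than $u_{x+y}$, and \autoref{l:RR2} finishes the computation. The only difference is cosmetic: you justify the key inequality $\Phi(x+y) \leq \frac{u_x}{\eta} + \frac{1}{\eta}\tr\big((u_x I_n - \eta A(x+y))^{-1}\big)$ by weak Lagrangian duality, whereas the paper proves the identical bound (\autoref{cl:trace-inverse}) via convexity of $u \mapsto \tr\big((u I_n - \eta A(x+y))^{-1}\big)$ together with the derivative computation $f'(u_{x+y}) = -1$ --- two phrasings of the same fact, and your positivity check $M^{-\frac12} - \eta A(y) \succeq \frac12 M^{-\frac12}$ correctly covers the applicability of \autoref{l:RR2}.
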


The proof in~\cite{AZLO15} used some advanced concepts in optimization such as mirror descent and Bregman divergence.
We present a shorter and simpler proof in~\autoref{a:omitted}, which is essentially approximating the potential function using the second-order Taylor expansion by some elementary convexity arguments.

With \autoref{l:potential},
we set $x = x_{t-1}$, $y = \delta_t y_t$, and $M = (u_t I_n - \eta A(x_{t-1}))^{-2}$ for each iteration $t$ to bound
\begin{align} \label{eq:regret}
  \lambda_{\max}(A(x_T)) & \leq \frac{2\sqrt{n}}{\eta} + \sum_{t=1}^T \left( \tr\big(M_t A(\delta_t y_t)\big) + c_t \eta \tr\Big(M_t^{\frac12} A(\delta_t y_t) M_t^{\frac12} A(\delta_t y_t) M_t^{\frac12}\Big) \right).
\end{align}
Note that we can set the maximal step size of $\delta_t$ for all $t$ to be $\alpha := \frac{1}{2\eta}$ to ensure that\footnote{To look ahead, we will set $\eta \approx \sqrt{m}$ so that $\alpha \leq 1/\poly(m)$.} 
\begin{equation} \label{eq:step}
\norm{M^{\frac12} \cdot \eta A(\delta_t y_t)}_{\rm op} 
\leq \eta  \norm{\sum\nolimits_{i=1}^m \delta_t \cdot y_t(i) \cdot A_i}_{\rm op} \leq \eta \cdot \delta_t \cdot \norm{y_t}_{\infty} \cdot \norm{\sum\nolimits_{i=1}^m |A_i|}_{\rm op} \leq \eta \cdot \delta_t \leq \frac12,
\end{equation}
where we used the fact that $M$ is a density matrix, $y_t$ is a unit vector and the assumption in \autoref{lemma:deterministic-partial-coloring} that $\sum_{i=1}^m |A_i| \preccurlyeq I_n$.
Therefore, we can assume that $|c_t| \leq 2$ for all $t$ when we apply \autoref{l:potential}.

\subsection{Restricted Subspaces} \label{s:subspace}

The key task is to find an appropriate unit vector $y_t$ so that the potential increase in~\eqref{eq:regret} is bounded.
The essence of many results in discrepancy minimization is to argue that there are not too many bad directions, so that as long as the degree of freedom is large (that is, $m \gg n$) then there is a large subspace of good update directions.

To bound the potential increase in each iteration in~\eqref{eq:regret},
the nontrivial part is to bound the second order term.
We use the idea in~\cite{PV23} to write the second order term as the quadratic form of a matrix $N_t$ and restrict $y_t$ to lie in the small eigenspace of $N_t$.
Formally, let $H_t$ be the active coordinates in the $t$-th iteration and $m_t = |H_t|$.
The second order term in~\eqref{eq:regret} can be written as
\[
\tr\Big(M_t^{\frac12} A(y_t) M_t^{\frac12} A(y_t) M_t^{\frac12}\Big) 
= \sum_{i,j \in H_t} y_t(i) \cdot y_t(j) \cdot \tr\Big(M_t^{\frac12} A_i M_t^{\frac12} A_j M_t^{\frac12}\Big) 
= (y_t|_{H_t})^\top N_t (y_t|_{H_t}),
\]
where $y_t|_{H_t} \in \R^{m_t}$ is obtained by restricting $y_t$ to coordinates in $H_t$, and $N_t$ is an $(m_t \times m_t)$-dimensional matrix defined as
\[
N_t := \Big\{ \tr\Big(M_t^{\frac12} A_i M_t^{\frac12} A_j M_t^{\frac12} \Big) \Big\}_{i,j \in H_t}.
\]
Note that $N_t$ is a positive semidefinite matrix\footnote{
To see this, write $N_t(i,j) = \tr(M_t^{\frac12} A_i M_t^{\frac12} A_j M_t^{\frac12})
= \langle M_t^{\frac14} A_i M_t^{\frac12}, M_t^{\frac14} A_j M_t^{\frac12} \rangle $, and so $N_t$ can be written as $X^\top X$ where the $i$-th column of $X$ is ${\rm vec}( M_t^{\frac14} A_i M_t^{\frac12} )$, the vectorization of the matrix $M_t^{\frac14} A_i M_t^{\frac12}$.
}.
Let $N_t = \sum_{i=1}^{m_t} \lambda_i u_i u_i^\top$
be the eigenvalue decomposition of $N_t$ with $0 \leq \lambda_1 \leq \lambda_2 \leq \cdots \leq \lambda_{m_t}$. 
In order to bound the second order term, we restrict $y_t|_{H_t}$ to lie in the eigenspace spanned by $\{u_1, \ldots, u_{m_t/3}\}$. 
With these notations set up, we can formally define the good subspace $\mathcal{U} := U^0 \cap U^1 \cap U^2 \cap U^3$ for the unit update vector $y_t$ where
\begin{align*}
    & U^0 = \{ y \in \R^m \mid y_i = 0~\forall i \not\in H_t\}, \\
    & U^1 = \{ y \in \R^m \mid y \perp x_{t-1} \}, \\
    & U^2 = \left\{ y \in \R^m \mid \tr(M_t A( y)) = \sum\nolimits_{i=1}^m  y(i) \tr(M_t A_i) = 0 \right\}, \\
    & U^3 = \left\{ y \in \R^m \mid y|_{H_t} \in {\rm span}\{u_1, u_2, \cdots, u_{m_t/3}\} \right\}.
\end{align*}
As discussed in the overview in~\autoref{s:overview},
the subspace $U^0$ is to ensure that only active coordinates are updated,
$U^1$ is to ensure that $\|x_t\|^2_2$ is monotone increasing in order to bound the number of iterations of the algorithm, 
$U^2$ is to ensure that the linear term in~\eqref{eq:regret} is zero, 
and $U^3$ is to ensure that the second order term in~\eqref{eq:regret} is ``small''.
As $m_t \geq \frac34 m$ when the algorithm has not terminated, it follows that
\begin{equation} \label{eq:dim}
{\rm dim}(\mathcal{U}) \geq m_t - 2 - \frac{2m_t}{3} = \frac{m_t}{3} - 2 \geq \frac{m}{4} -2.
\end{equation}

The remaining task is to upper bound the eigenvalue of the low eigenspace in order to upper bound the second order term.

\begin{lemma}[Low Eigenspace] \label{l:quadratic}
    Given $A_1, \cdots, A_m \in \R^{n \times n}$ such that $\sum_{i=1}^m |A_i| \preccurlyeq I_n$, any unit vector $y \in U^0 \cap U^3$ satisfies
    \[
        \tr\Big(M^{\frac12}_t A(y) M_t^{\frac12} A(y) M_t^{\frac12}\Big) \leq \frac{9\sqrt{n}}{m_t^2}.
    \]
\end{lemma}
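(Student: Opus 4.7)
The plan is to reformulate the target trace as a quadratic form in the matrix $N_t$ introduced just above the statement, restrict to its low-eigenvalue subspace via $U^3$, and then bound that eigenvalue by Cauchy interlacing against a principal submatrix whose trace we control using \autoref{l:RR}. Since $y \in U^0$ has support in $H_t$, expanding $A(y) = \sum_{i \in H_t} y(i) A_i$ gives
\[
\tr\Big(M_t^{\frac12} A(y) M_t^{\frac12} A(y) M_t^{\frac12}\Big) = (y|_{H_t})^\top N_t (y|_{H_t}).
\]
Because $y \in U^3$ places $y|_{H_t}$ in the span of the eigenvectors corresponding to the $m_t/3$ smallest eigenvalues of the PSD matrix $N_t$, and $\|y|_{H_t}\|_2 = \|y\|_2 = 1$, this is at most $\lambda_{m_t/3}(N_t)$. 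So it suffices to prove $\lambda_{m_t/3}(N_t) \leq 9\sqrt{n}/m_t^2$.

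I would next bound the diagonal entries of $N_t$ using \autoref{l:RR}. By cyclicity of trace,
\[
N_t(i,i) = \tr\big(M_t^{\frac12} A_i M_t^{\frac12} A_i M_t^{\frac12}\big) = \tr\big(M_t A_i M_t^{\frac12} A_i\big),
\]
so applying \autoref{l:RR} with $A = M_t \succeq 0$, $B = M_t^{\frac12} \succeq 0$, $C = A_i$ yields $N_t(i,i) \leq \tr(M_t |A_i|) \cdot \tr(M_t^{\frac12} |A_i|)$. The hypothesis $\sum_i |A_i| \preceq I_n$ together with \autoref{l:root-density} gives the two global estimates
\[
\sum_{i \in H_t} \tr(M_t |A_i|) \leq \tr(M_t) = 1 \quad \text{and} \quad \sum_{i \in H_t} \tr(M_t^{\frac12} |A_i|) \leq \tr(M_t^{\frac12}) \leq \sqrt{n}.
\]

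From the second estimate, Markov's inequality shows that $S := \{i \in H_t : \tr(M_t^{\frac12} |A_i|) \leq 3\sqrt{n}/m_t\}$ has size at least $2m_t/3$. Since on $S$ we have $N_t(i,i) \leq (3\sqrt{n}/m_t) \cdot \tr(M_t |A_i|)$, summing and using the first global estimate yields $\tr(N_t|_S) \leq (3\sqrt{n}/m_t) \sum_{i \in S} \tr(M_t |A_i|) \leq 3\sqrt{n}/m_t$. Finally, Cauchy interlacing (\autoref{t:interlacing}) gives $\lambda_{m_t/3}(N_t) \leq \lambda_{m_t/3}(N_t|_S)$, and averaging the eigenvalues of $N_t|_S$ from the $(m_t/3)$-th smallest upward bounds this by $\tr(N_t|_S)/(|S| - m_t/3 + 1) \leq (3\sqrt{n}/m_t)/(m_t/3) = 9\sqrt{n}/m_t^2$.

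The step I expect to be least obvious is the trace bookkeeping: the natural attempt is to apply Markov to both factors of $\tr(M_t|A_i|)\tr(M_t^{\frac12}|A_i|)$ via a union bound, but that wastes a factor and does not produce the constant $9$. The cleaner move is to apply Markov only to the second factor and then exploit the global bound on the first factor to beat the trivial $|S| \cdot \max N_t(i,i)$ estimate. Once this is set up, the Markov threshold $3\sqrt{n}/m_t$ and fraction $2/3$ are essentially forced by wanting the interlacing denominator $|S| - m_t/3 + 1$ to sit at $\Omega(m_t)$, and the target $9\sqrt{n}/m_t^2$ drops out exactly.
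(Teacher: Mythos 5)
Your proposal is correct and follows essentially the same route as the paper: the same reduction to $\lambda_{m_t/3}(N_t)$, the same use of \autoref{l:RR} on the diagonal entries, the same Markov-based selection of coordinates with small $\tr(M_t^{\frac12}|A_i|)$, and the same Cauchy interlacing step. Your final eigenvalue-counting argument (dividing $\tr(N_t|_S)$ by the number of eigenvalues at or above the $(m_t/3)$-th smallest) is a slightly more direct phrasing of the paper's average-plus-Markov step, and it lands on the same constant $9\sqrt{n}/m_t^2$.
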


We will use a spectral argument to prove \autoref{l:quadratic} in the next subsection.
In the rest of this subsection, we first assume \autoref{l:quadratic} to finish the proof of \autoref{lemma:deterministic-partial-coloring}.

\subsubsection*{Proof of \autoref{lemma:deterministic-partial-coloring}}

Given the input matrices $A_1, \cdots, A_m$ such that $\sum_{i=1}^m |A_i| \preccurlyeq I_n$ and the linear subspace $\mathcal{H}$, 
we apply the deterministic discrepancy walk algorithm for matrix partial coloring with the subspace $\mathcal{U}_t = \mathcal{U} \cap \mathcal{H}$. 
By \eqref{eq:dim},
\[{\rm dim}(\mathcal{U}_t) 
\geq {\rm dim}(\mathcal{U}) - {\rm dim}(\mathcal{H}^\perp) 
\geq \frac{m}{4} - 2 - \frac{m}{5} > 0\]
as long as $m = \Omega(1)$, 
thus in each iteration there is always a unit vector $y_t \in \mathcal{U}_t$.
As shown in~\eqref{eq:step}, by taking the maximal step size to be $\alpha = \frac{1}{2\eta}$, we can assume that $|c_t| \leq 2$ in~\eqref{eq:regret}, and so
\[
\lambda_{\max}\left( A(x_T) \right)  
\leq \frac{2\sqrt{n}}{\eta} + \sum_{t=1}^T \left( \delta_t \tr(M_t A(y_t)) + 2 \eta \delta_t^2 \tr\Big(M_t^{\frac12} A(y_t) M_t^{\frac12} A(y_t) M_t^{\frac12}\Big) \right).
\]
Since $y_t \in U^2$, the linear term is $\tr(M_t A(y_t)) = 0$. 
As $y_t \in U^0 \cap U^3$, by \autoref{l:quadratic}, the second order term is $\tr(M^{\frac12}_t A(y_t) M_t^{\frac12} A(y_t) M_t^{\frac12}) \leq 9\sqrt{n}/m_t^2$. 
Therefore,
\[
\lambda_{\max}\left( A(x_T) \right)  
\leq \frac{2\sqrt{n}}{\eta} + \frac{18 \eta \sqrt{n}}{m_t^2} \sum_{t=1}^T \delta_t^2 \leq \frac{2\sqrt{n}}{\eta} + \frac{32 \eta \sqrt{n}}{m^2} \sum_{t=1}^T \delta_t^2,
\]
where the last inequality holds as $m_t \geq 3m/4$ before the while loop terminates.
Since $y_t \in U^1$ (such that $y_t \perp x_{t-1}$ for all $t \in [T]$) and $\norm{x_T} \in [-1,1]^m$, it follows that
$m \geq \norm{x_T}_2^2 = \sum_{t=1}^T \norm{\delta_t y_t}_2^2 = \sum_{t=1}^T \delta_t^2$.
Therefore, by setting $\eta = \frac14 \sqrt{m}$, we conclude that
\[
\lambda_{\max}\left( A(x_T) \right)  \leq \frac{2\sqrt{n}}{\eta} + \frac{32 \eta \sqrt{n}}{m} \leq 16 \sqrt{\frac{n}{m}}.
\]
    
Finally, to show the polynomial runtime, we bound the number of iterations of the deterministic walk algorithm. 
Note that in step 2(c), the update either (i) freezes a new coordinate, 
or (ii) the squared length of the solution $\norm{x_t}^2 = \norm{x_{t-1}}^2 + \alpha^2$ increases by $\alpha^2$ as $y_t\perp x_{t-1}$.
Clearly, the number of the first type of iterations is at most $m$. 
The number of the second type of iterations is at most $m/\alpha^2$ as $\norm{x_T}^2\leq m$. 
Therefore, by our choice of $\alpha = \frac{1}{2\eta} = \frac{2}{\sqrt{m}}$, the total number of iterations is at most $m/\alpha^2 + m = O(m^2)$.

\subsection{Spectral Argument} \label{s:trace}

We prove \autoref{l:quadratic} in this subsection using a spectral argument.
We remark that the calculations are similar to that in~\cite[page 13]{RR20}, but we adapt them to a slightly different setting.

For a unit vector $y \in U^0$, the restricted vector $y|_{H_t}$ is a unit vector in $\R^{H_t}$.
Since $y \in U^3$, the second order term is bounded by the eigenvalue of $N_t$ in the low eigenspace so that
\[
\tr \Big( M^{\frac12}_t A(y) M_t^{\frac12} A(y) M_t^{\frac12} \Big) 
= (y|_{H_t})^\top N_t (y|_{H_t}) \leq \lambda_{m_t/3}(N_t).
\]
To bound $\lambda_{m_t/3}(N_t)$, our idea is to upper bound the trace of a large principal submatrix $\Tilde{N_t}$ of $N_t$ and to use Cauchy interlacing theorem to bound $\lambda_{m_t / 3}$.
Let
\[
S = \Bigg\{ i \in H_t~\Big|~\tr(M_t^{\frac12} |A_i|) \geq \frac{3 \tr(M_t^{\frac12})}{m_t} \Bigg\}
\]
be the set of ``large'' active coordinates.  
Note that $\sum_{i \in H_t} \tr(M_t^{\frac12} |A_i|) \leq \tr(M_t^{\frac12}) \cdot \norm{\sum_{i=1}^m |A_i|}_{\rm op} \leq \tr(M_t^{\frac12})$ (see \autoref{s:linear-algebra} for the first inequality) and each $\tr(M_t^{\frac12} |A_i|) \geq 0$, so it follows from Markov's inequality that $|S| \leq \frac13 m_t$.
Let $\tilde{N}_t$ be the principal submatrix of $N$ restricted to the indices in $H_t - S$. 
Note that ${\rm dim}(\tilde{N}_t) \geq m_t - |S| \geq \frac23 m_t$. 
By Cauchy interlacing in \autoref{t:interlacing}, it holds that
$\lambda_{m_t/3}(N_t) \leq \lambda_{m_t/3}(\tilde{N}_t)$. 
To bound $\lambda_{m_t/3}(\tilde{N}_t)$, we will simply compute the trace of $\tilde{N}_t$ and use an averaging argument. 
Applying \autoref{l:RR} to each diagonal entry of $\tilde{N}_t$, we have
\[
 \tr(\tilde{N}_t)  = \sum_{i \in H_t - S} \tr(M_t A_i M_t^{\frac12} A_i) \leq \sum_{i \in H_t - S} \tr(M_t |A_i|) \cdot \tr\big(M_t^{\frac12} |A_i|\big).
\]
Now, by the definition of $S$ and the assumption that $\sum_{i=1}^m |A_i| \preccurlyeq I_n$ and \autoref{l:root-density}, we obtain that
\[
    \tr(\tilde{N}_t) \leq \frac{3\tr(M_t^{1/2})}{m_t} \cdot \sum_{i \in H_t - S} \tr(M_t |A_i|)  \leq \frac{3\tr(M_t^{1/2})}{m_t} \cdot \tr(M_t) \leq \frac{3\sqrt{n}}{m_t}.
\]
Since ${\rm dim}(\tilde{N}_t) \geq \frac23 m_t$, the average value of the eigenvalue of $\tilde{N}_t$ is $\tr(\tilde{N}_t) / {\rm dim}(\tilde{N}_t) \leq \frac92 \sqrt{n}/m_t^2$.
As $\tilde{N}_t$ is positive semidefinite, by Markov's inequality,
at most half of the eigenvalues of $\tilde{N}_t$ can be greater than $9 \sqrt{n}/m_t^2$.
Combining the inequalities, we conclude that 
\[
\tr \Big( M^{\frac12}_t A(y) M_t^{\frac12} A(y) M_t^{\frac12} \Big) 
\leq \lambda_{m_t/3}(N_t)
\leq \lambda_{m_t/3}(\tilde{N}_t) 
\leq \lambda_{{\rm dim}(\tilde{N}_t)/2}(\tilde{N}_t) 
\leq \frac{9 \sqrt{n}}{m_t^2}.
\]

\section{From Matrix Partial Coloring to Spectral Sparsification} \label{s:SV-sparsification}

In this section, we show how to apply the matrix partial coloring \autoref{lemma:deterministic-partial-coloring} to construct spectral sparsifers.
In \autoref{s:spectral-subspace}, we prove \autoref{theorem:sparsification-alg}  using the same reduction as in~\cite{RR20}, and obtain deterministic degree-preserving spectral sparsification as a corollary.
In \autoref{ss:UC-sparsification} and \autoref{ss:SV-sparsification}, we use \autoref{theorem:sparsification-alg} to construct UC sparsifiers as stated in \autoref{theorem: undirected-uc-sparsifier} and SV sparsifiers as stated in \autoref{theorem: SV-sparsfication-general}.

\subsection{Spectral Sparsification with Linear Subspace Constraints} \label{s:spectral-subspace}

The goal of this subsection is to prove \autoref{theorem:sparsification-alg}, a general matrix sparsification result with a linear subspace constraint.
We emphasize that the reduction from matrix sparsification in \autoref{theorem:sparsification-alg} to matrix partial coloring in \autoref{lemma:deterministic-partial-coloring} is exactly the same as in~\cite{RR20}, both the algorithm and the analysis.
We include them for the completeness of this paper and for the inclusion of the linear subspace constraint.

\begin{framed}{\textbf{Spectral Sparsification with Linear Subspace Constraint}} 

\textbf{Input:} positive semidefinite matrices $A_1, A_2, ..., A_m \in \R^{n \times n}$ such that $\sum_{i=1}^m A_i \preccurlyeq I_n$, a linear subspace $\mathcal{H} \subseteq \R^m$ of dimension $m-O(n)$, and target accuracy parameter $\eps$.

\textbf{Output:} $s \in \R^m_{\geq 0}$ such that $|\supp(s)| = \Omega(n/\eps^2)$ and $s - \vec{\one} \in \mathcal{H}$ and $\|\sum_{i=1}^m (s(i)-1) A_i\|_{\rm op} \leq O(\eps)$.

    \begin{enumerate}
        \item Initialize $s_0(i) = 1$ for $i \in [m]$. Let $t = 1$.
        \item \textbf{While} $m_t := |\supp(s_{t-1})| > cn/\eps^2$ for some fixed constant $c$ {\bf do}
        \begin{enumerate}
            \item Apply the deterministic discrepancy walk algorithm in~\autoref{lemma:deterministic-partial-coloring} to find a partial coloring $x_t \in [-1,1]^m$ such that
            \begin{enumerate}
                \item $\norm{\sum_i x_t(i) \cdot s_{t-1}(i) \cdot A_i}_{\rm op} \leq 16\sqrt{\frac{n}{m_t}}$,
                \item $|\{i \in \supp(s_{t-1}) \mid x_t(i) = \pm 1\}| = \Omega(m_t)$ and $x_t(i) = 0$ for all $i \not\in \supp(s_{t-1})$, 
                \item $x_t \in \mathcal{H}_t := \{y \in \R^m \mid \diag(s_{t-1}) \cdot y \in \mathcal{H} \} \cap \{ y \in \R^m \mid y(i) = 0~\forall i \notin \supp(s_{t-1})\}$.
            \end{enumerate}
            \item If there are more $x_t(i) = 1$ than $x_t(i) = -1$ then update $x_t \gets - x_t$.
            \item Update $s_{t}(i) \gets s_{t-1}(i) \cdot (1+x_t(i))$ for all $i \in [m]$.
            \item $t\gets t+1$.
        \end{enumerate}
        \item \textbf{Return:} $s = s_T$ where $T$ is the last iteration.
    \end{enumerate}
\end{framed}

The idea is to find a partial coloring $x$ with small discrepancy (with a constant fraction of entries of $x$ being $-1$) and use it to zero-out a constant fraction of the entries in $s$ in each iteration.
Informally, if $x \in \{\pm1\}^m$ is a full coloring with small discrepancy, then we either double-up or zero-out an entry of $s$ in each iteration.

\subsubsection*{Proof of \autoref{theorem:sparsification-alg}}

We start by assuming that the spectral sparsification algorithm can successfully find a partial coloring $x_t$ that meets all the requirements (i)-(iii) in Step 2(a) in each iteration.
It is immediate that the final reweighting will have support size at most $O(n/\eps^2)$ by the design of the algorithm.
Next, the final reweighting $s$ would satisfy the subspace requirement $s - \vec{\one} \in \mathcal{H}$ as 
\[
s_t = s_{t-1} + \diag(s_{t-1}) \cdot x_t {\rm~~for~all~} t \quad {\rm~and~} \quad s_0 = \vec{\one} 
\quad \implies \quad
s = \vec{\one} + \sum\nolimits_{t=1}^T \diag(s_{t-1}) \cdot x_t
\]
and each $\diag(s_{t-1}) \cdot x_t \in \mathcal{H}$ by the requirement (iii) in Step 2(a) in each iteration. 
For the discrepancy guarantee, first note that Step 2(a)(ii) and Step 2(b) imply that $x_t$ has $\Omega(m_t)$ coordinates with $-1$ in the support of $s_{t-1}$, and thus the support size $m_t$ decreases by a constant factor in each iteration by Step 2(c) of the algorithm.
Therefore, by a telescoping sum and the triangle inequality, the discrepancy is
\begin{equation} \label{eq:discrepancy}
\norm{A(s) - A(\vec{\one}_m)}_{\rm op} 
\leq \sum_{t=1}^T \norm{\sum_{i=1}^m \big(s_t(i) - s_{t-1}(i)\big) \cdot A_i }_{\rm op} 
= \sum_{t=1}^T \norm{\sum_{i=1}^m x_t(i) \cdot s_{t-1}(i) \cdot A_i }_{\rm op} 
\lesssim \sum_{t=1}^T \sqrt{\frac{n}{m_t}} 
\lesssim \eps,
\end{equation}
where the second last inequality follows by requirement (i) in Step 2(a) in each iteration, and the last inequality follows as $m_t$ is a geometric sequence dominated by the last term when $m_t > cn/\eps^2$.
Finally, for the polynomial running time, as the support size $m_t$ decreases by a constant factor in each iteration, the while loop will terminate within $O(\log(\eps^2 m/n))$ iterations.

It remains to show that we can indeed find the desired partial coloring $x_t$ in each iteration. 
We maintain that $\sum_{i=1}^m s_{t-1}(i) \cdot A_i \preccurlyeq 2 I_n$ at each iteration $t$, which obviously holds in the first iteration as $s_0 = \vec{\one}$. 
This implies that $\sum_{i=1}^m \frac12 s_{t-1}(i) \cdot A_i\preccurlyeq I_n$ satisfies the requirement of \autoref{lemma:deterministic-partial-coloring}.
Note that the input subspace $\mathcal{H}_t$ has dimension $m_t - O(n) \geq \frac45 m_t$ for $m_t = \Omega(n / \eps^2)$ when $\eps$ is smaller than a small enough constant.
So we can apply \autoref{lemma:deterministic-partial-coloring} with $\{\frac12 s_{t-1}(i) A_i\}_{i=1}^m$ as the input matrices and $\mathcal{H}_t$ as the input subspace to find a partial coloring $x_t$ that satisfies all the requirements in the matrix sparsification algorithm in polynomial time.
The property $\sum_{i=1}^m s_{t-1}(i) \cdot A_i \preccurlyeq 2 I_n$ is maintained by the same argument in~\eqref{eq:discrepancy} as long as $\eps$ is smaller than a small enough constant.
This completes the proof of \autoref{theorem:sparsification-alg}.

\begin{remark}
Note that the input matrices in \autoref{theorem:sparsification-alg} are assumed to be positive semidefinite (as was done in~\cite{RR20}) while the input matrices in \autoref{lemma:deterministic-partial-coloring} are only assumed to be symmetric.
We can relax the assumption in \autoref{theorem:sparsification-alg} to be symmetric as well, by the simple trick of replacing the input symmetric matrices $A_i$ by $\begin{pmatrix} A_i & 0 \\ 0 & |A_i| \end{pmatrix}$, so that the property $\sum_{i=1}^m s_{t-1}(i) \cdot |A_i| \preceq 2I_n$ can be maintained by the discrepancy bound and the above proof would go through.
\end{remark}

\subsubsection*{Degree-Preserving Spectral Sparsification}

We show an easy application of \autoref{theorem:sparsification-alg} to construct degree-preserving spectral sparsifiers in deterministic polynomial time, derandomizing a result in~\cite{JRT24}.
Given an edge-weighted undirected graph $G=(V,E)$,
construct a positive semidefinite matrix $A_i = L^{\dagger/2}_G b_{e_i} b_{e_i}^\top L_G^{\dagger/2}$ for each edge $e_i \in E$,
so that $\sum_{i=1}^{m} A_i = I_n$ where $n = |V|$ and $m = |E|$.
For the degree constraints, construct the subspace $\mathcal H := \{x \mid \sum_{u: u \sim v} x(uv)=0~\forall v \in V \}$ to ensure that $\sum_{u:u \sim v} s(uv) \cdot w(uv) = \sum_{u:u \sim v} w(uv)$ for each vertex $v \in V$.
Note that $\dim({\mathcal H}) \geq m - n$.
Therefore, we can apply \autoref{theorem:sparsification-alg} with $A_1, \ldots, A_m$ and $\mathcal H$ to obtain a deterministic polynomial time algorithm to construct linear-sized degree-preserving spectral sparsifiers.

\subsection{Unit-Circle Sparsification for Undirected Graphs} \label{ss:UC-sparsification}

The goal of this subsection is to prove \autoref{theorem: undirected-uc-sparsifier} that there are linear-sized UC sparsifiers for undirected graphs.
This is a nice application to illustrate our deterministic discrepancy walk framework, as it follows very easily from our matrix sparsification result,
 while it seems out of reach for the known techniques in~\cite{SS11,BSS12,AZLO15}.
Recall from \autoref{s:preliminaries} that UC approximation enjoys some interesting property that was not satisfied by standard spectral sparsification, 
so \autoref{theorem: undirected-uc-sparsifier} can be seen as a strengthening of the classical results in~\cite{BSS12,AZLO15}.

We mentioned in the introduction that $\Tilde{G}$ is an $\eps$-UC approximation of $G$ if (1) $D_{\Tilde{G}} = D_G$, (2) $(1-\eps)L_G \preceq L_{\Tilde{G}} \preceq (1+\eps)L_G$, and (3) $(1-\eps)U_G \preceq U_{\Tilde{G}} \preceq (1+\eps)U_G$ where $U_G := D_G + A_G$ which is often called the unsigned Laplacian matrix of $G$.
We check this claim formally in the following lemma.

\begin{lemma}\label{lemma: undirected-uc-sparsifier}
Let $\Tilde{G}$ be a degree-preserving reweighted subgraph of a connected undirected graph $G$.
Then $\Tilde{G}$ is an $\eps$-UC approximation of $G$ if
\[
\norm{L_G^{\dagger/2}(L_G-L_{\Tilde{G}})L_G^{\dagger/2}}_{\rm op} \leq \eps
\quad {\rm and} \quad
\norm{U_G^{\dagger/2}(U_G-U_{\Tilde{G}})U_G^{\dagger/2}}_{\rm op} \leq \eps.
\]
\end{lemma}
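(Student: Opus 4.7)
The plan is to apply \autoref{fact: uc-properties-undirected} to reduce the UC-approximation requirement to two standard-approximation statements: $\tilde A$ is an $\eps$-standard approximation of $A$, and $-\tilde A$ is an $\eps$-standard approximation of $-A$. Unpacking \autoref{def:standard-approx} for an undirected graph, the error matrix in the first case is $L_G = D_G - A_G$, and in the second case the corresponding positive semidefinite error matrix is $U_G = D_G + A_G$. The key algebraic observation is that degree preservation, $D_{\tilde G} = D_G$, gives $L_G - L_{\tilde G} = -(A_G - A_{\tilde G})$ and $U_G - U_{\tilde G} = A_G - A_{\tilde G}$, so the two operator-norm hypotheses of the lemma translate directly into the operator-norm halves of \autoref{def:matrix-approx}(2) for the two standard approximations.

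What remains is the kernel inclusion in \autoref{def:matrix-approx}(2) for each of the two reductions. For $L_G$, connectedness of $G$ forces $\ker(L_G) = \sspan(\vec{\one})$, and the identity $(A_G - A_{\tilde G})\vec{\one} = d_G - d_{\tilde G} = 0$ is exactly degree preservation. For $U_G$, I would split on bipartiteness. If $G$ is non-bipartite, then by \autoref{lemma: max-eigenvalue-bipartite} applied to $U_G$ we have $\ker(U_G) = \{0\}$, so the condition is vacuous. If $G$ is bipartite with parts $V_1, V_2$, then $\ker(U_G) = \sspan(\vec{\one}_{V_1} - \vec{\one}_{V_2})$; since $\tilde G$ is a reweighted subgraph of $G$, every edge of $\tilde G$ still crosses this bipartition, and a short per-vertex computation yields $(A_G - A_{\tilde G})(\vec{\one}_{V_1} - \vec{\one}_{V_2}) = -(d_G - d_{\tilde G}) = 0$, once again by degree preservation.

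I expect the only mildly delicate step to be the $U_G$-kernel verification in the bipartite case, as it is the one place where the ``reweighted subgraph'' hypothesis (rather than an arbitrary degree-preserving reweighting on the complete graph) is genuinely used, in order to inherit the bipartition of $G$ by $\tilde G$. Once both kernel conditions are established, \autoref{fact: uc-properties-undirected} immediately yields that $\tilde G$ is an $\eps$-UC approximation of $G$, completing the proof.
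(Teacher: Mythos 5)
Your proposal is correct and follows essentially the same route as the paper's proof: reduce via \autoref{fact: uc-properties-undirected} to two standard approximations, use degree preservation to convert the adjacency-matrix differences into the Laplacian and unsigned-Laplacian differences for the operator-norm conditions, and verify the kernel inclusions, with the same bipartite case analysis for $\ker(U_G)$ (via $U_G = 2D_G - L_G$ and \autoref{lemma: max-eigenvalue-bipartite}). No gaps.
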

\begin{proof}
Let the adjacency matrices of $G$ and $\Tilde{G}$ be $A$ and $\Tilde{A}$ respectively. 
By \autoref{def: uc-approx}, 
$\Tilde{G}$ is an $\eps$-UC approximation of $G$ if and only if (i) $\Tilde{A}$ is an $\eps$-standard approximation of $A$ and (ii) $-\Tilde{A}$ is an $\eps$-standard approximation of $-A$.
For undirected graphs, by \autoref{def:standard-approx}, (i) means that $\Tilde{A}$ is an $\eps$-matrix approximation of $A$ with respect to the error matrix $D-A = L$, and (ii) means that $-\Tilde{A}$ is an $\eps$-matrix approximation of $-A$ with respect to the error matrix $D+A = U$.

To check (i), by \autoref{def:matrix-approx}, it is equivalent to $\big\| L_G^{\dagger/2} (A_G-A_{\Tilde{G}})L_G^{\dagger/2} \big\|_{\rm op} \leq \eps$ and $\ker(L_G) \subseteq \ker(A_G-A_{\Tilde{G}})$.
Since $D_G = D_{\Tilde{G}}$ as $\Tilde{G}$ is degree-preserving, the first condition becomes $\big\| L_G^{\dagger/2}(L_G-L_{\Tilde{G}})L_G^{\dagger/2} \big\|_{\rm op} \leq \eps$, which is satisfied by the assumption.
As $G$ is connected and so $\ker(L_G) = \vec{\one}$, the second condition reduces to checking $(A_G - A_{\Tilde{G}}) \vec{\one} = 0$, which is also satisfied as $\Tilde{G}$ is degree-preserving.

To check (ii), by \autoref{def:matrix-approx}, it is equivalent to $\big\| U_G^{\dagger/2} (-A_G+A_{\Tilde{G}}) U_G^{\dagger/2} \big\|_{\rm op} \leq \eps$ and $\ker(U_G) \subseteq \ker(A_G-A_{\Tilde{G}})$.
Since $D_G = D_{\Tilde{G}}$ and $U_G = D_G + A_G$, the first condition is equivalent to $\big\| U_G^{\dagger/2}(U_G-U_{\Tilde{G}}) U_G^{\dagger/2} \big\|_{\rm op} \leq \eps$, which is satisfied by the assumption.
To check the second condition, we characterize the kernel of $U_G$.
The kernel of $U_G$ is non-empty if and only if $U_G$ has a zero eigenvalue.
Note that $U_G = 2D_G - L_G$, so it has a zero eigenvalue if and only if $D_G^{-\frac12} U_G D_G^{-\frac12} = 2I - \L_G$ has a zero eigenvalue if and only if $\L_G$ has an eigenvalue of $2$.
Hence, by \autoref{lemma: max-eigenvalue-bipartite}, $U_G$ has a zero eigenvalue if and only if $G$ is bipartite, and $\ker(U_G) = {\rm span}(\vec{\one}_X - \vec{\one}_Y)$ where $(X,Y)$ is the unique bipartition of $G$ as $G$ is connected.
Therefore, the second condition reduces to checking $(A_G - A_{\Tilde{G}})(\vec{\one}_X - \vec{\one}_Y) = 0$.
Since $\Tilde{G}$ is a subgraph of $G$ and thus $(X,Y)$ is also a bipartition of $\Tilde{G}$, it is straightforward to verify that $A_G(\vec{\one}_X - \vec{\one}_Y) = D_G( \vec{\one}_X - \vec{\one}_Y )$ and $A_{\Tilde{G}}(\vec{\one}_X - \vec{\one}_Y) = D_{\Tilde{G}}( \vec{\one}_X - \vec{\one}_Y )$, 
and thus the second condition is also satisfied because $D_G = D_{\Tilde{G}}$.
\end{proof}

\subsubsection*{Proof of \autoref{theorem: undirected-uc-sparsifier}}

With \autoref{lemma: undirected-uc-sparsifier}, it is easy to reduce UC approximation of undirected graphs to the matrix sparsification result in~\autoref{theorem:sparsification-alg}.
We already know how to do (1) and (2) simultaneously as it is just degree-preserving spectral sparsification.
Note that we also know to do (3), as $U_G = \sum_{e \in E} \bar{b}_e \bar{b}_e^\top$ where $\bar{b}_{uv} := \vec{\one}_u + \vec{\one}_v$ is the unsigned edge-incidence vector, and so we can write (3) as a matrix sparsification problem when the input matrices sum to the identity matrix.
To do (1), (2), and (3) simultaneously, we just need to do the standard block matrix trick, which is an advantage that the framework in~\cite{RR20} offered as it works for arbitrary rank symmetric matrices.

Given an undirected graph $G=(V,E)$,
for each edge $e \in E$, construct a positive semidefinite matrix 
\begin{align*}
    A_e = \begin{bmatrix}
        L_G^{\dagger/2}b_{e} b_{e}^\top L_G^{\dagger/2} & 0\\
        0 & U_G^{\dagger/2} \bar{b}_{e} \bar{b}_{e}^\top U_G^{\dagger/2}
     \end{bmatrix},
\end{align*}
so that $\sum_{e \in E} A_e = I_{2n}$ where $n = |V|$.
Also, construct the subspace $\mathcal H := \{x \mid \sum_{u: u \sim v} x(uv)=0~\forall v \in V \}$ to ensure degree preservation as in the previous subsection.
Applying \autoref{theorem:sparsification-alg} with $\{A_e\}_{e \in E}$ and $\mathcal H$ will give a deterministic polynomial time algorithm to construct a linear-sized degree-preserving sparsifier $\Tilde{G}$,
where the first block of the discrepancy guarantee implies that 
$\big\| L_G^{\dagger/2}(L_{\Tilde{G}} - L_G)L_G^{\dagger/2} \big\|_{\rm op} \leq \eps$ 
and the second block of the discrepancy guarantee implies that 
$\big\| U_G^{\dagger/2}(U_{\Tilde{G}} - U_G)U_G^{\dagger/2} \big\|_{\rm op} \leq \eps$.
We conclude from \autoref{lemma: undirected-uc-sparsifier} that $\Tilde{G}$ is a linear-sized $\eps$-UC sparisfier of $G$.

\subsection{Singular Value Sparsification for Directed Graphs} \label{ss:SV-sparsification}

The goal of this section is to prove \autoref{theorem: SV-sparsfication-general} about SV sparsifiers for directed graphs. 
By the second item in \autoref{fact:SV-properties}, the problem of constructing an SV sparsifier of a directed graph can be reduced to the problem of constructing an SV sparsifier of its bipartite lift which is an undirected graph. 
Henceforth, we will assume that $G$ is an undirected bipartite graph. 

By the definition of SV approximation in \autoref{def:SV-approx}, we would like to approximate the adjacency matrix $A$ with respect to the error matrix $D - AD^{-1}A$.
Unlike in the case of UC-approximation, we do not know how to reduce the general problem to matrix sparsification where the input matrices sum to the identity matrix.
However, we observe that if the bipartite graph is an expander graph, then we can reduce to matrix sparsification where the sum of input matrices has bounded spectral norm.
This observation leads to linear-sized SV sparsifier for bipartite $\Omega(1)$-expander.

\begin{theorem}[SV Approximation for Bipartite Expanders]
\label{theorem:SV-sparsification-expander}
Let $G$ be a connected bipartite graph with $\lambda_2(\L_G) \geq \lambda$. 
Then there is a polynomial time deterministic algorithm to compute a $(\eps / \lambda)$-SV sparsifier of $G$ with $O(n / \eps^2)$ edges.
\end{theorem}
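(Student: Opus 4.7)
My plan is to reduce SV sparsification of a connected bipartite $\lambda$-expander to the matrix sparsification problem solved by \autoref{theorem:sparsification-alg}. The key structural observation is that the SV error matrix $E = F = D - AD^{-1}A$ factors as $D^{1/2}(I - B^2)D^{1/2}$, where $B := D^{-1/2}AD^{-1/2} = I - \mathcal{L}_G$ is the normalized adjacency. By \autoref{lemma: symmetric-spectrum}, the spectrum of $\mathcal{L}_G$ is symmetric around $1$, so $\lambda_2(\mathcal{L}_G) \geq \lambda$ also forces $\lambda_{n-1}(\mathcal{L}_G) \leq 2-\lambda$; all eigenvalues of $B$ except $\pm 1$ thus lie in $[-1+\lambda, 1-\lambda]$, and $I - B^2$ has kernel $W^\perp := \operatorname{span}\{D^{1/2}\vec{\one},\ D^{1/2}(\vec{\one}_X - \vec{\one}_Y)\}$ (where $(X,Y)$ is the bipartition) and is at least $\lambda I$ on the orthogonal complement $W$.

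I will therefore aim only to construct a degree-preserving reweighted subgraph $\tilde{G}$ with $\|\mathcal{L}_{\tilde{G}} - \mathcal{L}_G\|_{\rm op} \leq \eps$, which I claim already suffices for the SV guarantee. Indeed, since $D_{\tilde{G}} = D_G$ and $\tilde{G}$ inherits the bipartition of $G$, the matrix $B' := D^{-1/2}A_{\tilde{G}}D^{-1/2}$ fixes both $D^{1/2}\vec{\one}$ and $D^{1/2}(\vec{\one}_X - \vec{\one}_Y)$ with the same eigenvalues as $B$. Hence $B$ and $B'$ both preserve $W$ and agree on $W^\perp$, so for any $x, y \in \R^n$, setting $u := D^{1/2}x$ and $v := D^{1/2}y$ and applying AM--GM together with $\|u_W\|^2 \leq \lambda^{-1} u^*(I - B^2) u = \lambda^{-1} x^* E x$,
\[
|x^*(A_G - A_{\tilde{G}})y| = |u_W^*(B - B')v_W| \leq \frac{\|B - B'\|_{\rm op}}{2\lambda}\bigl( x^* E x + y^* F y \bigr),
\]
which is exactly condition (1) of \autoref{def:matrix-approx} with error $\eps/\lambda$.

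Finally, I invoke \autoref{theorem:sparsification-alg} with the rank-one PSD matrices $A_e := \tfrac12 D^{-1/2} b_e b_e^\top D^{-1/2}$, which sum to $\tfrac12 \mathcal{L}_G \preceq I_n$, and with the codimension-$O(n)$ subspace $\mathcal{H} := \{x \in \R^m : \sum_{u : u \sim v} x(uv) = 0 \text{ for all } v \in V\}$ of degree-preserving reweightings. The theorem returns $s \in \R^m_{\geq 0}$ with $|\supp(s)| = O(n/\eps^2)$, $s - \vec{\one}_m \in \mathcal{H}$, and $\|\sum_e (s(e) - 1) A_e\|_{\rm op} \leq \eps/2$, which rearranges to $\|\mathcal{L}_{\tilde{G}} - \mathcal{L}_G\|_{\rm op} \leq \eps$ and yields the claimed $(\eps/\lambda)$-SV sparsifier. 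The main obstacle is the bipartite-spectrum calculation that produces the $\lambda$ in the denominator and the simultaneous role played by degree preservation, both in handling the kernel condition of \autoref{def:matrix-approx} and in forcing $B - B'$ to vanish on $W^\perp$; once these are in place, the reduction to \autoref{theorem:sparsification-alg} is direct.
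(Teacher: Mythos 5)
Your proposal is correct, and it reaches the SV guarantee by a genuinely different route than the paper. The paper normalizes the edge matrices by the SV error matrix itself, setting $A_e = \lambda\, E^{\dagger/2} b_e b_e^\top E^{\dagger/2}$, and uses the factorization $E = (D+A)D^{-1}(D-A)$ together with the symmetric bipartite spectrum (\autoref{lemma: symmetric-spectrum}) to show $\norm{\sum_e A_e} = \lambda / \lambda_2(\L_G) \leq 1$ --- this is where the expander hypothesis and the $1/\lambda$ loss enter. \autoref{theorem:sparsification-alg} then directly yields $\norm{E^{\dagger/2}(L_G - L_{\Tilde{G}})E^{\dagger/2}}_{\rm op} \leq \eps/\lambda$, and \autoref{fact:SV-conditions} (stated without proof in the paper) converts this into the SV guarantee. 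You instead sparsify the ordinary normalized Laplacian (matrices summing to $\tfrac12 \L_G \preceq I$, no expander property needed at this stage) and bring in $\lambda$ only a posteriori, when converting the degree-preserving bound $\norm{\L_{\Tilde{G}} - \L_G}_{\rm op} \leq \eps$ into condition (1) of \autoref{def:matrix-approx} via the spectral lower bound $I - B^2 \succeq \lambda(2-\lambda) P_W \succeq \lambda P_W$ on the complement of $\ker(I-B^2)$. Your kernel and invariance arguments (both $B$ and $B'$ fix $D^{1/2}\vec{\one}$ and $D^{1/2}(\vec{\one}_X - \vec{\one}_Y)$, hence $B - B'$ vanishes on $W^\perp$ and preserves $W$) are exactly the content that the paper delegates to \autoref{fact:SV-conditions}, so your argument is more self-contained; it also isolates a reusable fact, namely that any degree-preserving $\eps$-approximation of the normalized Laplacian of a bipartite $\lambda$-expander is automatically an $(\eps/\lambda)$-SV approximation. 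The paper's version, by contrast, keeps all the work in a single operator-norm identity and reuses its stated lemma. Both yield the same $O(n/\eps^2)$ edge count and the same $\eps/\lambda$ parameter. (One shared imprecision, not specific to you: for a weighted input graph the degree-preserving subspace should read $\sum_{u \sim v} x(uv)\, w(uv) = 0$, and the edge matrices should carry the weights; the paper's proof has the same gloss.)
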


We will use the following lemma for SV approximation which is an analog of \autoref{lemma: undirected-uc-sparsifier} for UC-sparsification.
The proof is to show that $\ker(E) = \text{span}\{\vec{\one}_V, \vec{\one}_X - \vec{\one}_Y\}$ where $(X,Y)$ is the unique bipartition of $G$, and the rest of the argument follows the same way as in the proof of \autoref{lemma: undirected-uc-sparsifier}.  We omit the straightforward proof. 

\begin{lemma}
\label{fact:SV-conditions}
Let $\Tilde{G}$ be a degree-preserving reweighted subgraph of a connected bipartite graph $G$.  Let $E = D_G-A_GD_G^{-1}A_G$. 
Then $\Tilde{G}$ is an $\eps$-SV approximation of $G$ if
\[\norm{E^{\dagger/2}(L_G- L_{\Tilde{G}})E^{\dagger/2}}_{\rm op} \leq \eps.\]
\end{lemma}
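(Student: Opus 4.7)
The plan is to verify the two equivalent conditions of \autoref{def:matrix-approx} with the error matrices specified in \autoref{def:SV-approx}. Since $G$ is undirected, $D_{\rm out} = D_{\rm in} = D_G$ and $A_G^\top = A_G$, so both error matrices coincide with $E = D_G - A_G D_G^{-1} A_G$ and the two kernel conditions collapse into the single requirement $\ker(E) \subseteq \ker(A_G - A_{\Tilde{G}})$. The operator norm condition falls out immediately: because $\Tilde{G}$ is degree-preserving we have $D_G = D_{\Tilde{G}}$ and hence $A_G - A_{\Tilde{G}} = -(L_G - L_{\Tilde{G}})$, so $\big\|E^{\dagger/2}(A_G - A_{\Tilde{G}}) E^{\dagger/2}\big\|_{\rm op} = \big\|E^{\dagger/2}(L_G - L_{\Tilde{G}}) E^{\dagger/2}\big\|_{\rm op} \leq \eps$ by hypothesis.

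The main step, as flagged in the text preceding the lemma, is to identify $\ker(E)$. I would use $D_G^{-1/2} A_G D_G^{-1/2} = I - \L_G$ to rewrite
\[
D_G^{-1/2} E D_G^{-1/2} = I - (I - \L_G)^2 = \L_G (2I - \L_G),
\]
whose kernel is $\ker(\L_G) + \ker(2I - \L_G)$. Connectedness of $G$ gives $\ker(\L_G) = \text{span}(D_G^{1/2} \vec{\one})$, and because $G$ is bipartite, \autoref{lemma: max-eigenvalue-bipartite} guarantees that $2$ is an eigenvalue of $\L_G$; a direct computation with the unique bipartition $(X,Y)$ shows $L_G(\vec{\one}_X - \vec{\one}_Y) = 2 D_G(\vec{\one}_X - \vec{\one}_Y)$, whence $\ker(2I - \L_G) = \text{span}(D_G^{1/2}(\vec{\one}_X - \vec{\one}_Y))$. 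Multiplying each basis vector by $D_G^{-1/2}$ then yields $\ker(E) = \text{span}\{\vec{\one}, \vec{\one}_X - \vec{\one}_Y\}$.

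Finally I would check that both of these spanning vectors lie in $\ker(A_G - A_{\Tilde{G}})$. The vector $\vec{\one}$ works because $A_G \vec{\one} = D_G \vec{\one} = D_{\Tilde{G}} \vec{\one} = A_{\Tilde{G}} \vec{\one}$. For $\vec{\one}_X - \vec{\one}_Y$, since $\Tilde{G}$ is a subgraph of $G$ the bipartition $(X,Y)$ is inherited, so $A_G(\vec{\one}_X - \vec{\one}_Y) = -D_G(\vec{\one}_X - \vec{\one}_Y)$ and analogously $A_{\Tilde{G}}(\vec{\one}_X - \vec{\one}_Y) = -D_{\Tilde{G}}(\vec{\one}_X - \vec{\one}_Y)$; degree preservation then makes these equal. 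The only nontrivial step is the kernel characterization, and once it is in hand every remaining verification is a one-line calculation mirroring the proof of \autoref{lemma: undirected-uc-sparsifier}.
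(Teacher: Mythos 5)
Your proof is correct and follows exactly the route the paper indicates (the paper omits the details, saying only that one should show $\ker(E) = \mathrm{span}\{\vec{\one}, \vec{\one}_X - \vec{\one}_Y\}$ and then mirror the argument of \autoref{lemma: undirected-uc-sparsifier}). Your computation $D_G^{-1/2} E D_G^{-1/2} = \L_G(2I - \L_G)$ supplies precisely the omitted kernel characterization, with the only implicit point being that connectedness forces the eigenvalue-$2$ eigenspace of $\L_G$ to be one-dimensional so that the exhibited eigenvector spans all of $\ker(2I - \L_G)$.
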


\subsubsection*{Proof of \autoref{theorem:SV-sparsification-expander}}

Given a bipartite graph $G=(V,E)$ with $\lambda_2(\L_G) \geq \lambda$,
for each edge $e \in E$, construct a positive semidefinite matrix
$A_e = \lambda E^{\dagger/2}b_eb_e^\top E^{\dagger/2}$.
We claim that $\sum_{e \in E} A_e \preceq I$.
Using the identity $E = (D-A)D^{-1}(D+A) = (D+A)D^{-1}(D-A)$,
we see that
\[
\sum\nolimits_{e} A_e 
= \lambda E^{\dagger/2}L_GE^{\dagger/2} 
= \lambda (D+A)^{\dagger/2}D(D+A)^{\dagger/2},
\] 
where we plug in $E^{\dagger/2} = (D+A)^{\dagger/2}D^{1/2}(D-A)^{\dagger/2}$ on the left and $E^{\dagger/2} = (D-A)^{\dagger/2}D^{1/2}(D+A)^{\dagger/2}$ on the right to obtain the second equality. 
Some simple manipulations show that
\[
\frac{1}{\lambda} \norm{\sum\nolimits_e A_e}
= \norm{(D+A)^{\dagger/2}D(D+A)^{\dagger/2}} 
= \norm{D^{\frac12}(D+A)^{\dagger}D^{\frac12}} 
= \norm{(D^{-\frac12}(D+A)D^{-\frac12})^{\dagger}}
= \norm{(2I-\L_G)^\dagger},
\]
where the second equality is by $\norm{BB^*}=\norm{B^*B}$, the third equality is by $\norm{B} = \norm{(B^\dagger)^\dagger}$, and the final equality is by $D^{-\frac12} A D^{-\frac12} = I - \L$.
Since the eigenvalues of $\L_G$ of a connected bipartite graph $G$ satisfy $0 = \lambda_1 < \lambda_2 \leq \ldots \leq \lambda_{n-1} < \lambda_n = 2$, 
the claim follows as
\[
\frac{1}{\lambda} \norm{\sum\nolimits_e A_e}
= \norm{(2I-\L_G)^\dagger}
= \frac{1}{2 - \lambda_{n-1}(\L_G)}
= \frac{1}{\lambda_2(\L_G)}
\leq \frac{1}{\lambda}
\quad \implies \quad
\sum\nolimits_e A_e \preceq I,
\]
where we used that \autoref{lemma: symmetric-spectrum} in the last equality 
and the implication is because $\sum_e A_e$ is a positive semidefinite matrix.

Applying \autoref{theorem:sparsification-alg} with $\{A_e\}_{e \in E}$ and the degree-constrained subspace $\mathcal H := \{x \mid \sum_{u: u \sim v} x(uv)=0~\forall v \in V\}$ will give a deterministic polynomial time algorithm to construct a linear-sized degree preserving sparsifier $\Tilde{G}$,
where the discrepancy guarantee implies that
$\lambda \big\| E^{\dagger/2}(L_{\Tilde{G}} - L_{G})E^{\dagger/2} \big\| \leq \eps$.
We conclude from \autoref{fact:SV-conditions} that $\Tilde{G}$ is a linear-sized $(\eps/\lambda)$-SV sparsifier of $G$. 

\begin{remark}
We note that some similar calculations were done in~\cite[Lemma 4.3]{APPSV23} and ours were inspired by theirs,
but we would like to point out that their purpose was different and in particular not for the construction of sparsification algorithms. 
\end{remark}

\subsubsection*{Proof of \autoref{theorem: SV-sparsfication-general}}

To derive \autoref{theorem: SV-sparsfication-general},
we will use expander decomposition as was done in \cite{APPSV23}. 
By \autoref{fact: expander-decomp}, every graph $G$ can be decomposed into $G_1, \ldots, G_k$ such that each $G_i$ is an $\Omega(1 / \log^2{n})$-expander and each vertex is contained in at most $O( \log n)$ subgraphs. 
For each $G_i$, we apply \autoref{theorem: SV-sparsfication-general} with error parameter $\eps' := \Omega(\eps / \log^2{n} )$ to find a graph $\Tilde{G}_i$ which is an $O(\eps)$-SV sparsifier of $G_i$, with $O( (n_i \log^4{n}) / \eps^2 )$ edges where $n_i = |V(G_i)|$. 
By the linearity property of SV approximation in \autoref{fact:SV-properties}, 
the graph $\Tilde{G} := \Tilde{G}_1 \cup...\cup \Tilde{G}_k$ is an $\eps$-SV sparsifier of $G$. 
Since each vertex in $G$ is contained in at most $O(\log{n})$ subgraphs, it follows that $\sum_{i=1}^k n_i = O(n \log n)$ and thus the total number of edges in $\Tilde{G}$ is $O( (n\log^5{n}) / \eps^2 )$.

\section{Graphical Spectral Sketch and Resistance Sparsifier} \label{s:sketch}

In this section, we show how to use the discrepancy framework to construct graphical spectral sketches in \autoref{s:graphical} and effective resistance sparsifiers in \autoref{s:resistance}.
We remark that there are some new technical ideas needed for proving \autoref{theorem: deterministic-eps-sketch} for spectral sketches, while \autoref{cor:resistance-sparsifier} for resistance sparsifiers follows easily from a reduction to graphical spectral sketch in~\cite{CGP+23} and \autoref{theorem: deterministic-eps-sketch}.
We assume that the input graphs are undirected and unweighted in this section.

\subsection{Graphical Spectral Sketch} \label{s:graphical}

The definition of spectral sketches in~\cite{ACK+16,JS18,CGP+23} is inherently probabilistic.
To design deterministic algorithms, we formulate the following deterministic version of $\eps$-spectral sketches.

\begin{definition}[Deterministic Graphic Spectral Sketch]
Given a graph $G = (V,E)$ on $n$ vertices and a set of vectors $\K \subseteq \R^n$, we say that a weighted graph $\Tilde{G}$ is an $\eps$-graphical spectral-sketch with respect to $\K$ if 
\begin{align*}
 (1 - \eps) z^\top L_G z \leq z^\top L_{\Tilde{G}} z \leq (1 + \eps) z^\top L_G z \quad {\rm for~all~} z \in \K.
\end{align*}
\end{definition}

\subsubsection*{Spectral Sketch from Partial Coloring}

We reduce sparsification to partial coloring as in \autoref{s:spectral-subspace}. 
In the following deterministic graphical sketch algorithm, the discrepancy requirement of the partial coloring in \eqref{eq:vector} is constructed so that the same argument as in \autoref{theorem:sparsification-alg} would go through to establish the spectral sketch guarantee.

\begin{framed}{\textbf{Deterministic Spectral Sketch Algorithm}} %

\textbf{Input:} an unweighted undirected $G = (V,E)$ on $n$ vertices, a set of vectors $\K \in \R^{n}$, and a target accuracy parameter $\eps$.

\textbf{Output:} a weighted undirected graph $\Tilde{G} = (V, \Tilde{E})$ with $|\Tilde{E}| \leq n \cdot f(n) / \eps$ and $z^\top L_{\Tilde{G}} z \approx_\eps z^\top L_G z~\forall z \in \K$. 

\begin{enumerate}
\item Initialize $s_0(e) = 1$ for $e\in E$. Let $t = 1$.
\item \textbf{While} $m_t := |\supp(s_{t-1})| > nf(n)/\eps$ for some fixed function $f(n)$ \textbf{do}
  \begin{enumerate}
  \item Find a partial coloring $x_t: E \to [-1,1]$ such that 
     \begin{enumerate}
     \item For any $z \in \K$,
        \begin{equation} \label{eq:vector}
          \Big|\sum\nolimits_e x_t(e) \cdot s_{t-1}(e) \cdot \inner{z}{b_e}^2\Big| \leq \frac{n f(n)}{m_t} \cdot z^\top L_G z
        \end{equation}      
     \item $| \{i \in \supp(s_{t-1}) \mid x_t(i) = \pm 1\}| = \Omega(m_t)$ and $x_t(i) = 0$ for all $i \not\in \supp(s_{t-1})$.
     \end{enumerate}
 \item If there are more $x_t(i) = 1$ than $x_t(i) = -1$ then update $x_t \gets - x_t$.
 \item Update $s_{t}(e)\gets s_{t}(e) = s_{t-1}(e) \cdot (1+x_t(e))$ for all $e \in E$.
  \item $t\gets t+1$.
\end{enumerate}
\item \textbf{Return:} $\Tilde{G}$ with edges weight $s_T(e)$ for $e\in E$ where $T$ is the last iteration.
\end{enumerate}
\end{framed}

\begin{lemma}\label{lemma:eps-sketch-algorithm}
Assuming that there is a deterministic polynomial time algorithm to find a partial coloring satisfying the requirements in Step 2(a) in each iteration,
then the deterministic spectral sketch algorithm is a deterministic polynomial time algorithm that outputs an $O(\eps)$-graphical spectral sketch with respect to $\K$ with $O(nf(n)/\eps)$ edges.
\end{lemma}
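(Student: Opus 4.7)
The plan is to follow the template of the proof of \autoref{theorem:sparsification-alg} almost verbatim, since the algorithm has the same overall structure and the bound in \eqref{eq:vector} plays the role that the operator-norm discrepancy played in Step 2(a)(i) there.

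First I would handle running time and edge count. By Step 2(a)(ii), at least $\Omega(m_t)$ coordinates of $x_t$ lie in $\{\pm 1\}$, and the sign flip in Step 2(b) guarantees that at least half of them equal $-1$. The update $s_t(e) = s_{t-1}(e) (1 + x_t(e))$ then zeroes out all such coordinates of $s_t$, so $m_{t+1} \leq (1-c) m_t$ for an absolute constant $c > 0$. Hence the while loop runs for $O(\log m)$ iterations, each of which takes polynomial time by the hypothesized partial coloring subroutine. The termination condition $m_T \leq n f(n) / \eps$ directly gives $|\tilde E| = O(n f(n) / \eps)$.

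Next I would establish the spectral sketch guarantee by a telescoping argument. Fix any $z \in \K$. Writing $s_T - s_0 = \sum_{t=1}^T \diag(s_{t-1})\, x_t$ and using $L_G = \sum_e b_e b_e^\top$, we obtain
\[
z^\top L_{\tilde G} z - z^\top L_G z = \sum_{t=1}^T \sum_{e \in E} x_t(e)\, s_{t-1}(e)\, \inner{z}{b_e}^2.
\]
By \eqref{eq:vector}, the $t$-th summand is bounded in absolute value by $\frac{n f(n)}{m_t}\, z^\top L_G z$. Since $m_t$ decreases geometrically, the series $\sum_t 1/m_t$ is dominated by its last term $1/m_T$, which is $O(\eps / (n f(n)))$ by the termination rule. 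Summing and applying the triangle inequality yields $|z^\top L_{\tilde G} z - z^\top L_G z| \leq O(\eps) \cdot z^\top L_G z$, the claimed two-sided multiplicative approximation.

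The main thing to watch is that the right-hand side of \eqref{eq:vector} is pinned to the \emph{original} Laplacian $L_G$ rather than to $L_{G_{t-1}}$. This is what keeps the telescoping clean: unlike in \autoref{theorem:sparsification-alg}, I do not need to maintain an inductive invariant controlling the current iterate in order to keep the per-step bound self-consistent. Consequently the proof is essentially a repackaging of the two algorithm-design choices — the sign flip in Step 2(b) to force geometric shrinkage of $m_t$, and the normalization by $m_t$ in \eqref{eq:vector} to make the telescoped error geometric — with no further ingredients required beyond the assumed partial coloring subroutine.
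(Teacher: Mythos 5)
Your proof is correct and follows essentially the same route as the paper's: the edge count comes from the termination condition, the loop terminates in $O(\log m)$ iterations because the sign flip forces a constant fraction of support coordinates to be zeroed out each round, and the sketch guarantee follows by telescoping $s_T - s_0$ and bounding each term via \eqref{eq:vector}, with the geometric decrease of $m_t$ making the sum dominated by its last term. Your closing observation — that \eqref{eq:vector} is normalized against the fixed $L_G$ so no inductive invariant on the current iterate is needed — is an accurate reading of why this argument is simpler than the one in \autoref{theorem:sparsification-alg}.
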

\begin{proof}
It is immediate that the graph $\Tilde{G}$ has only $O(n f(n) / \eps)$ edges by the design of the algorithm.
We thus focus on the spectral sketch guarantee.
Note that $z^\top L_{\Tilde{G}} z = z^\top \big( \sum_e s_T(e) \cdot b_eb_e^\top \big) z$ and $z^\top L_G z = z^\top \big( \sum_e s_0(e) \cdot b_eb_e^\top \big) z$ for each $z \in \K$.
Thus, by a telescoping sum and the triangle inequality,  
\[
\big|z^\top L_{\Tilde{G}}z - z^\top L_Gz\big| 
\leq \sum_t \Big|z^\top \Big( \sum_e x_t(e) \cdot s_{t-1}(e) \cdot b_eb_e^\top \Big) z\Big| 
\leq \sum_t \frac{nf(n)}{m_t} \cdot z^\top L_G z,
\]
where in the first inequality we used $s_t(e) - s_{t-1}(e) = x_t(e) \cdot s_{t-1}(e)$ by Step 2(c) of the algorithm, and in the second inequality we used the discrepancy bound \eqref{eq:vector} in Step 2(a)(i) of the algorithm.
Since $m_t$ is a geometrically decreasing sequence (which follows from Step 2(a)(ii), 2(b), and 2(c) as was explained in the proof of \autoref{theorem:sparsification-alg}), 
the sum on the right hand side is at most a constant factor of the last term, which is at most $\eps \cdot z^\top L_G z$ as $m_t > nf(n) / \eps$ in the final iteration. 
\end{proof}

Therefore, the spectral sketch problem is reduced to the partial coloring problem defined in~\eqref{eq:vector}, for which we need some new technical ideas as described below.

\subsubsection*{Vector Discrepancy Algorithm}

Note that the partial coloring in Step 2(a) is a vector discrepancy problem, rather than a matrix discrepancy problem as in \autoref{theorem:sparsification-alg}.
The problem of finding a low discrepancy signing $x\in \mathbb{R}^m$ to minimize its projection in a set of input directions is well studied in the literature.
To keep our algorithm deterministic, we will make use of the deterministic discrepancy minimization algorithm of Levy, Ramadas and Rothvoss in \cite{LRR17}.
We state their main result in a way that fits into the deterministic discrepancy walk algorithm\footnote{
We briefly sketch a simplistic version of the proof in~\cite{LRR17} to illustrate how that fits into the deterministic discrepancy walk algorithm. 
To bound the discrepancy, they used a potential function based on the multiplicative weights update method, which is defined as $\Phi(x) := \sum_{i=1}^m\Phi_i(x) := \sum_{i=1}^m \exp\big( \lambda \cdot \bigip{\frac{a_i}{\norm{a_i}}}{x} - \lambda^2\big)$. 
The subspace $U_t$ defined in \cite[Section 2]{LRR17} guarantees that as long as the update direction $y_t$ is chosen from $U_t$ at each iteration,
then (1) the total potential $\Phi(x_t)$ does not increase much~\cite[Lemma 9]{LRR17} and (2) each $\Phi_i(x_t)$ contributes at most a $O(1/m)$ fraction to the total potential~\cite[Lemma 10]{LRR17}. 
These conditions ensure that the final partial coloring satisfies the required discrepancy bound as stated. 
We remark that the constant $\frac45$ is arbitrary and can be changed to any larger constant at most one without changing the result.
We also remark that the framework in \autoref{s:partial-coloring} can also be used to recover a similar result in \autoref{theorem:LRR-main-thm} using the so called $\ell_q$-regularizer in~\cite{AZLO15,PV23}, but this derivation is omitted to keep the presentation simple.}.

\begin{theorem}[{\cite[Theorem 1 and Lemma 6]{LRR17}}] 
\label{theorem:LRR-main-thm}
Suppose we run the deterministic discrepancy walk algorithm with input vectors $a_1,\ldots,a_k \in \mathbb{R}^{m}$ where $k \geq m$. 
In each iteration $t$, there exists a polynomial time computable subspace $U_t$ of dimension at least $\frac45 m$ such that as long as the update direction $y_t\in U_t$ for all $t=1,2,\ldots,T$, 
then $x_T$ satisfies $|\inner{a_i}{x_T}|\leq O\big( \norm{a_i}_2 \sqrt{\log{\frac{k}{m}}} \big)$ for all $1 \leq i \leq k$.
\end{theorem}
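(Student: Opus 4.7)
The plan is to instantiate the deterministic discrepancy walk framework of \autoref{s:general-framework} with the multiplicative-weights potential of \cite{LRR17}. Writing $\hat{a}_i := a_i/\|a_i\|_2$, I would define
\[
    \Phi_i(x) := \exp\bigl(\lambda \langle \hat{a}_i, x\rangle - \lambda^2\bigr), \qquad \Phi(x) := \sum_{i=1}^k \Phi_i(x),
\]
with $\lambda := c\sqrt{\log(k/m)}$ for a suitable constant $c$, so that $\Phi(0) = k e^{-\lambda^2} = \Theta(m)$. To recover the two-sided bound $|\langle a_i, x_T\rangle|$, it suffices to apply the construction to the doubled list $\{a_i\} \cup \{-a_i\}$, which only doubles $k$ and does not affect the asymptotic bound.

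In each iteration $t$, I would call an index $i$ \emph{heavy} if $\Phi_i(x_{t-1}) > C\, \Phi(x_{t-1})/m$ for a large absolute constant $C$, and denote the heavy set by $H_t$. Since $\sum_i \Phi_i(x_{t-1}) = \Phi(x_{t-1})$, a Markov-type averaging gives $|H_t| \le m/C$. Then set
\[
    U_t := \bigl\{ y \in \mathbb{R}^m : y \perp x_{t-1},\ \langle a_i, y\rangle = 0 \text{ for all } i \in H_t,\ y \perp \nabla \Phi(x_{t-1}) \bigr\},
\]
where $\nabla \Phi(x_{t-1}) = \lambda \sum_i \Phi_i(x_{t-1}) \hat{a}_i$. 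This imposes at most $|H_t| + 2 \le m/C + 2$ linear constraints, so $\dim(U_t) \ge \tfrac{4}{5} m$ for $C$ large enough, and $U_t$ is computable in polynomial time by standard linear algebra.

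For a unit $y_t \in U_t$ with sufficiently small step $\delta_t$, the second-order Taylor expansion of $\exp$ gives
\[
    \Phi(x_{t-1} + \delta_t y_t) - \Phi(x_{t-1}) \;\le\; \delta_t \bigl\langle \nabla \Phi(x_{t-1}), y_t\bigr\rangle + O\bigl((\lambda\delta_t)^2\bigr) \sum_{i \notin H_t} \Phi_i(x_{t-1}) \langle \hat{a}_i, y_t\rangle^2,
\]
where heavy indices drop out because $\langle a_i, y_t\rangle = 0$ for $i \in H_t$ freezes $\Phi_i$. The linear term vanishes by the gradient-orthogonality constraint built into $U_t$, and the quadratic term is at most $O((\lambda\delta_t)^2)\Phi(x_{t-1})$ since $\langle \hat{a}_i, y_t\rangle^2 \le 1$. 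With a step-size rule coupling $\delta_t$ to the current potential so that $\sum_t(\lambda\delta_t)^2 = O(1)$, this multiplicative growth telescopes to $\Phi(x_t) = O(\Phi(0)) = O(m)$ throughout the execution.

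To conclude, fix any index $i$. Either $i$ is light at the final iteration, giving $\Phi_i(x_T) \le C \Phi(x_T)/m = O(1)$, or $i$ joined $H_{t'}$ at some earlier iteration $t'$, in which case the freeze constraints for $s \ge t'$ leave $\Phi_i$ untouched and so $\Phi_i(x_T) = \Phi_i(x_{t'-1}) \le C \Phi(x_{t'-1})/m = O(1)$. Taking logs in the definition of $\Phi_i$ yields $\lambda \langle \hat{a}_i, x_T\rangle \le \lambda^2 + O(1)$, hence $|\langle a_i, x_T\rangle| = O(\|a_i\|_2(\lambda + 1/\lambda)) = O(\|a_i\|_2 \sqrt{\log(k/m)})$. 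The main obstacle is the step-size control needed to keep $\sum_t(\lambda\delta_t)^2 = O(1)$: the generic bound $\sum_t\delta_t^2 \le \|x_T\|_2^2 \le m$ is far too pessimistic, so $\delta_t$ must be shrunk adaptively whenever the potential grows, which is precisely the bookkeeping carried out in \cite{LRR17} and the reason the result is imported as a black box rather than reproved from scratch within the constant-step template of \autoref{lemma:deterministic-partial-coloring}.
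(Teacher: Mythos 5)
Your skeleton matches what the paper intends (the paper itself does not prove this theorem --- it imports it from \cite{LRR17} and only sketches the argument in a footnote), but the one step you flag as "the main obstacle" is a genuine gap, and the remedy you propose for it cannot work. Bounding the quadratic term by $\sum_{i\notin H_t}\Phi_i(x_{t-1})\langle \hat{a}_i,y_t\rangle^2 \le \Phi(x_{t-1})$ via $\langle\hat{a}_i,y_t\rangle^2\le 1$ gives a per-step growth factor of $1+O(\lambda^2\delta_t^2)$, and since the partial-coloring guarantee forces $\sum_t\delta_t^2=\norm{x_T}_2^2=\Omega(m)$ (you need $\Omega(m)$ coordinates to reach $\pm1$), the total growth is $e^{\Omega(\lambda^2 m)}$. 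Your proposed fix --- shrink $\delta_t$ adaptively so that $\sum_t(\lambda\delta_t)^2=O(1)$ --- is self-defeating: it forces $\norm{x_T}_2^2=O(1/\lambda^2)$, so the walk never makes progress and the output is not a partial coloring. No step-size schedule can reconcile $\sum_t\delta_t^2=\Omega(m)$ with $\sum_t\delta_t^2=O(1/\lambda^2)$; the loss must be recovered in the choice of $U_t$, not in $\delta_t$.

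The missing idea is the same eigenspace restriction used in \autoref{s:subspace} and \autoref{s:trace} for the matrix case. The quadratic term is $y^\top Q_t y$ for the PSD matrix $Q_t:=\sum_{i\notin H_t}\Phi_i(x_{t-1})\,\hat{a}_i\hat{a}_i^\top$, whose trace is at most $\Phi(x_{t-1})$; hence at most $m/C'$ of its eigenvalues exceed $C'\Phi(x_{t-1})/m$, and adding to $U_t$ the constraint that $y$ lie in the span of the bottom $(1-1/C')m$ eigenvectors (still leaving $\dim(U_t)\ge\frac45 m$ for suitable constants) yields $y^\top Q_t y\le C'\Phi(x_{t-1})/m$ for unit $y$. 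The per-step growth becomes $1+O(\lambda^2\delta_t^2/m)$, which with $\sum_t\delta_t^2\le m$ telescopes to $e^{O(\lambda^2)}$; this is then absorbed by taking the constant in $\lambda=c\sqrt{\log(k/m)}$ large enough so that $\Phi$ stays $O(m)$ throughout, after which your freezing argument closes the proof. (Two smaller slips there: an index $i$ that joins $H_{t'}$ satisfies $\Phi_i(x_{t'-1})>C\Phi(x_{t'-1})/m$, not $\le$, so you must argue via the last iteration at which $i$ was light plus a bounded one-step change; and since frozen indices must remain frozen for the "untouched" claim, the bound $|H_t|\le m/C$ needs the thresholds to be compared against $\Phi(0)=\Theta(m)$ rather than re-evaluated against the current $\Phi$.) With the eigenspace restriction in place, your argument becomes a faithful reconstruction of \cite{LRR17} and runs exactly parallel to the paper's own \autoref{l:quadratic}.
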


The most straightforward way to apply \autoref{theorem:LRR-main-thm} to satisfy the vector discrepancy constraints in~\eqref{eq:vector} is to define an $|E|$-dimensional vector $a_z$ for each $z \in \K$ where $a_z(e) := s_{t-1}(e) \cdot \inner{z}{b_e}^2$ for each $e \in E$,
so that the constraints in~\eqref{eq:vector} become $|\inner{a_z}{x_t}| \leq \frac{nf(n)}{m_t} \cdot z^\top L_G z$ for each $z \in \K$ so as to apply \autoref{theorem:LRR-main-thm}.
The problem with this approach is that the guarantee in \autoref{theorem:LRR-main-thm} could be much larger than our desired bound\footnote{
To see this, consider the following example where $G$ is the complete graph,
and $\K := \{b_{i,j}\}_{i,j \in V}$ with one constraint for each pair of vertices.
Initially $s_0(e)=1$ for each $e \in E$.
Then, setting the input vectors $\{a_{i,j}\}_{i,j \in V}$ as above and applying \autoref{theorem:LRR-main-thm} gives a partial coloring $x$ with discrepancy bound 
roughly $\norm{a_{i,j}}_2 = \sqrt{\sum\nolimits_e \inner{b_{i,j}}{b_e}^4} \approx \sqrt{n}$ 
when $G$ is the complete graph.
On the other hand, the desired bound is $\frac{nf(n)}{m_t} \cdot z^\top L_G z \approx f(n)$ since $m \approx n^2$ and $b_{i,j}^\top L_G b_{i,j} = 2n-2$ in the complete graph.
We are interested in the regime where $f(n) = \polylog(n)$ and so the bound returned by \autoref{theorem:LRR-main-thm} using this straightforward approach is too weak.
}.
We remark that this straightforward approach can be used to obtain the bound $|\inner{a}{x_t}| \leq \sqrt{\frac{nf(n)}{m_t}} \cdot z^\top L_G z$, and this implies a spectral sketch of size $O(n f(n) / \eps^2)$ but that does not improve the spectral sparsification result~\cite{BSS12}.
The discrepancy bound in the form of~\eqref{eq:vector} is crucial for the sparsifier size to have a dependency of $1/\eps$ rather than $1/\eps^2$.

\subsubsection*{Main Ideas}

We explain the main ideas roughly before we present the formal proof below.
The starting observation is that by restricting $x$ to the degree-preserving subspace that satisfies $\sum_{u:u\sim v} x(u,v) \cdot s(u,v) = 0$ for all $v \in V$, we can rewrite the left hand side of the discrepancy constraints in~\eqref{eq:vector} as 
\[
\Big| \sum\nolimits_{uv\in E} x(u,v) \cdot s(u,v) \cdot \inner{z}{b_e}^2 \Big|
= 2\Big| \sum\nolimits_{uv\in E} x(u,v) \cdot s(u,v) \cdot z(u) \cdot z(v) \Big|.
\]
Then the idea is that we only need to sparsify edges from the dense parts of the graph, and so we freeze those edge variables that have large $s(u,v)$ or incident on low degree vertices.
By doing so, we get a useful bound that $s(u,v)^2 \lesssim \frac{n^2}{m^2} \cdot d(u) \cdot d(v)$ where $d(v)$ is the degree of vertex $v$ in the input graph $G$.
This will imply that the $2$-norm of the (rewritten) discrepancy constraint can be bounded by 
\[
\sqrt{\sum\nolimits_{uv \in E} s(u,v)^2 \cdot z(u)^2 \cdot z(v)^2} 
\lesssim \frac{n}{m} z^\top D z
\leq \frac{n}{\lambda m} z^\top L z,
\]
where the last inequality\footnote{This is not precise for the simplicity of the presentation.  We will see the correct version below.} follows when the input graph $G$ is a $\lambda$-expander.
This implies that we can apply \autoref{theorem:LRR-main-thm} to construct spectral sketches for $\Omega(1)$-expander graphs with only $O\big(\frac{n}{\eps} \sqrt{\log \frac{k}{m}} \big)$ edges.
Finally, we use the standard expander decomposition technique to construct spectral sketches for general graphs with $O(n \log^{3.5} n / \eps)$ edges.

\subsubsection*{Restricted Subspaces}

With the main ideas presented, we describe our restricted subspaces precisely.
In the $\tau$-th iteration, we would like to find a partial coloring $x_\tau \in \R^m$ given the current reweighting $s := s_{\tau-1} \in \R^m$, where $m$ is the number of edges in the input graph $G$.

\begin{enumerate}
\item
The first subspace is the degree-preserving subspace
\[
U^1 := \Big\{x\in \mathbb{R}^m ~\big|~ \sum\nolimits_{u: u\sim v} x(u,v) \cdot s(u,v) = 0 \quad \forall v\in V\Big\},
\]
which ensures that the weighted degrees of $s_\tau$ are the same as that in $s_{\tau-1}$ by Step 2(c) of the algorithm.
\item
The second subspace is to restrict to the dense parts of the graph.
For each vertex $v$, let $d_s(v) := | \{ e \in \supp(s) \mid v \in e \}$ be the degree of $v$ in the support of $s$, and $d(v)$ be the degree of $v$ in the input unweighted graph $G$.
Let $E_v := \{ uv \in E \mid s(u,v) > 10 \cdot d(v) / d_s(v) \}$ be the set of high weight edges incident on $v$.
Define $E_0 := \cup_{v \in V} E_v$.
Also let $V_{\rm low} := \{v \mid d_s(v) \leq \frac{1}{10} m/n\}$ be set of vertices with low degree in the support of $s$.
And define $E_1$ to be the set of edges incident to some vertices in $V_{\rm low}$.
The second subspace will be used to freeze the edges in $E_0 \cup E_1$:
\[
U^2 = \{x\in \mathbb{R}^m \mid x(e) = 0 \quad \forall e\in E_0\cup E_1\}.
\]
Since we maintain that $\sum_{u: u \sim v} s(u,v) = d(v)$ for all $v \in V$,
we see that $|E_v| \leq \frac{1}{10} \cdot d_s(v)$ by an averaging argument,
and thus $|E_0| \leq \frac{1}{10} \sum_{v \in V} d_s(v) = \frac15 m$.
And $|E_1| \leq \frac{1}{10} m$ as there are at most $n$ vertices.
This implies that $\dim((U^2)^\perp) \leq \frac{3}{10} m$, the complement of the subspace $U^2$ has dimension at most $\frac{3}{10} m$.
\item
The third set of subspaces is for the discrepancy constraints.
For each $z \in \K$, we first define a shifted vector $\bar{z}$ as follows.
Let $c_z := \sum_v d(v) \cdot z(v) / \sum_v d(v)$ be the center of $z$ with respect to the degrees.
Define $\bar{z} = z - c_z \vec{\one}$.
We note that $\bar{z}$ is constructed so that it satisfies the property $\sum_v d(v) \cdot \bar{z}(v) = 0$, which will be used in the spectral argument for expander graphs.
Now, for each $z \in \K$, define a vector $a_z \in \R^m$ where
\[
        a_z(u,v) = \begin{cases}
            &s(u,v) \cdot \bar{z}(u) \cdot \bar{z}(v) \quad {\rm~if~} uv \in E_s\\
            &0 \quad \text{otherwise}
        \end{cases}
\]
where $E_s := \supp(s) \setminus (E_0 \cup E_1)$.
Then the third set of subspaces is defined as
\[
U^3_t := {\rm~the~subspace~used~in~} t{\rm-th~iteration~in~\autoref{theorem:LRR-main-thm}~when~vectors~} \{a_z\}_{z \in \K} {\rm~are~given~as~input}.
\]
Note that there is one subspace for each iteration $t$ of the partial coloring algorithm.
\end{enumerate}

The final set of restricted subspace is defined as $\mathcal H_t := U^0 \cap U^1 \cap U^3_t$, for each iteration $t$ of the partial coloring algorithm.
By \autoref{theorem:LRR-main-thm}, $\dim(U^3_t) \geq \frac45 m$, and so
\[
\dim(\mathcal{H}_t) 
\geq \dim(U^3_t) - \dim((U^0)^\perp) - \dim((U^1)^\perp)
\geq \frac{4m}{5} - n - \frac{3m}{10}  = \frac{m}{2} - n,
\]
which is large enough as long as $m \geq \Omega(n)$.

\subsubsection*{Spectral Sketches for Expander Graphs}

We are ready to construct spectral sketches for expander graphs.

\begin{theorem}\label{corollary:expander-spectral-sketch}
Let $G$ be a $\lambda$-expander on $n$ vertices. 
For any set of vectors $\K$ with $|\K| \geq n$, 
there is a deterministic polytime algorithm to construct a deterministic $\eps$-spectral-sketch with $O\Big(\frac{n\sqrt{\log{(|\K|/n)}}}{\lambda\eps}\Big)$ edges.
\end{theorem}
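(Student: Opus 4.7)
The plan is to apply \autoref{lemma:eps-sketch-algorithm} by producing, in each iteration $\tau$, a partial coloring from the deterministic discrepancy walk restricted to the subspace $\mathcal{H}_\tau := U^1 \cap U^2 \cap U^3_\tau$ already defined in the preceding paragraphs. Set the function as $f(n) = O(\sqrt{\log(|\K|/n)}/\lambda)$; the theorem's edge bound will then follow immediately from \autoref{lemma:eps-sketch-algorithm}. The dimension bookkeeping is routine: restricted to the ambient space $\R^{m_\tau}$ of coordinates in $\supp(s_{\tau-1})$, $U^1$ cuts at most $n$ dimensions, $U^2$ cuts at most $\frac{3m_\tau}{10}$ (from $|E_0|\leq \frac{m_\tau}{5}$ and $|E_1|\leq \frac{m_\tau}{10}$ as derived in the setup), and $\dim(U^3_\tau)\geq \frac{4m_\tau}{5}$ by \autoref{theorem:LRR-main-thm}, so $\dim(\mathcal{H}_\tau) \geq \frac{m_\tau}{2} - n > 0$ throughout the algorithm.

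The core reduction is to convert the discrepancy requirement \eqref{eq:vector} into a vector discrepancy statement amenable to \autoref{theorem:LRR-main-thm}. Expanding $\inner{z}{b_{uv}}^2 = z(u)^2 - 2z(u)z(v) + z(v)^2$ and using $x_\tau \in U^1$ (degree-preserving with respect to the weights $s:=s_{\tau-1}$), the squared-coordinate terms cancel, giving $\sum_e x_\tau(e) s(e) \inner{z}{b_e}^2 = -2 \sum_{uv} x_\tau(uv)s(uv)z(u)z(v)$. The same degree-preserving property lets us replace $z$ by the shifted vector $\bar{z} = z - c_z \vec{\one}$ without altering the sum, and $x_\tau \in U^2$ restricts the sum to $E_s$. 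Thus the left side of \eqref{eq:vector} equals $2|\inner{a_z}{x_\tau}|$ with the vectors $\{a_z\}_{z \in \K}$ as defined in $U^3_\tau$.

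The main obstacle, and the only place where the expander hypothesis enters, is bounding $\norm{a_z}_2$. For $uv \in E_s$, the edge is not in $E_0$, hence $s(u,v) \leq 10\, d(u)/d_s(u)$ and $s(u,v) \leq 10\, d(v)/d_s(v)$; moreover, neither endpoint is in $V_{\rm low}$, so $d_s(u), d_s(v) \geq m_\tau/(10n)$. Multiplying these gives $s(u,v)^2 \lesssim n^2 d(u)d(v)/m_\tau^2$. Summing over $E_s$ and then relaxing to all pairs so that the sum factorizes,
\[
\norm{a_z}_2^2 \;=\; \sum_{uv \in E_s} s(u,v)^2 \bar{z}(u)^2 \bar{z}(v)^2 \;\lesssim\; \frac{n^2}{m_\tau^2}\Big(\sum_v d(v)\bar{z}(v)^2\Big)^2 \;=\; \frac{n^2}{m_\tau^2}(\bar{z}^\top D \bar{z})^2.
\]
The shift $c_z := \sum_v d(v)z(v)/\sum_v d(v)$ was chosen precisely so that $\bar{z}^\top D \vec{\one}=0$, hence $D^{1/2}\bar{z} \perp D^{1/2}\vec{\one}$, and the $\lambda$-expander hypothesis $\lambda_2(\mathcal{L}_G) \geq \lambda$ yields $\bar{z}^\top D\bar{z} \leq \frac{1}{\lambda}\bar{z}^\top L\bar{z} = \frac{1}{\lambda} z^\top L z$. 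Therefore $\norm{a_z}_2 \lesssim \tfrac{n}{\lambda m_\tau}\, z^\top L z$.

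The conclusion is direct: \autoref{theorem:LRR-main-thm} applied with inputs $\{a_z\}_{z\in \K}$ (padded to size $\geq m_\tau$ by zero vectors if necessary) produces a unit update direction $y_\tau$ in $U^3_\tau$ (and hence in $\mathcal{H}_\tau$ by the dimension count) along which the potential-based walk satisfies $|\inner{a_z}{x_\tau}| \lesssim \norm{a_z}_2 \sqrt{\log(|\K|/m_\tau)} \lesssim \tfrac{n\sqrt{\log(|\K|/n)}}{\lambda m_\tau}\, z^\top L z$ for every $z \in \K$, exactly the partial-coloring requirement in Step~2(a) of the spectral sketch algorithm with $f(n) = O(\sqrt{\log(|\K|/n)}/\lambda)$. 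Plugging into \autoref{lemma:eps-sketch-algorithm} yields a deterministic polynomial time $\eps$-spectral sketch of size $O(n\sqrt{\log(|\K|/n)}/(\lambda \eps))$, completing the proof.
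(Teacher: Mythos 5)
Your proposal is correct and follows essentially the same route as the paper: the same $f(n)=O(\sqrt{\log(|\K|/n)}/\lambda)$, the same restricted subspaces $U^1\cap U^2\cap U^3_\tau$ with the same dimension count, the same degree-preservation trick to rewrite \eqref{eq:vector} as $2|\inner{a_z}{x_\tau}|$, the same bound $s(u,v)^2\lesssim n^2 d(u)d(v)/m_\tau^2$ leading to $\norm{a_z}_2\lesssim \frac{n}{m_\tau}\bar z^\top D\bar z$, and the same Courant--Fischer step (which you state with the correct inequality direction, $\bar z^\top D\bar z\le \frac{1}{\lambda}z^\top L_G z$) before invoking \autoref{theorem:LRR-main-thm} and \autoref{lemma:eps-sketch-algorithm}.
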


\begin{proof}
We implement the partial coloring subroutine required by Step 2(a) of the deterministic graphical sketch algorithm with $f(n) = \frac{1}{\lambda} \sqrt{\log \frac{|\K|}{m}}$.
Suppose we are in the $\tau$-th iteration of the deterministic graphical sketch algorithm.
We let $s := s_{\tau-1}$ be the current reweighting, and let $x := x_{\tau}$ be the partial coloring that we would like to find.
For the partial coloring subroutine, we run a deterministic discrepancy walk algorithm (e.g.~the one in~\cite{LRR17}) with the restricted subspace $\mathcal{H}_t$ in the $t$-th iteration, and use the output of the algorithm as our partial coloring $x$ for this $\tau$-th iteration.
The main task is to check the discrepancy requirement in~\eqref{eq:vector} is satisfied.
For each $z \in \K$, since $x \in U^1$,
the left hand side of~\eqref{eq:vector} can be rewritten as
\begin{eqnarray*}
& & \sum_{e\in E} x(e) \cdot s(e) \cdot \inner{b_e}{z}^2 \\
& = & \sum_{uv \in E} x(u,v) \cdot s(u,v) \cdot (z(u)-  z(v) - c_z + c_z)^2\\
& = & \sum_{v\in V}(z(v)-c_z)^2\sum_{u:u\sim v} x(u,v) \cdot s(u,v) -  2\sum_{uv \in E} x(u,v) \cdot s(u,v) \cdot (z(u)-c_z) \cdot (z(v)-c_z)\\
& = & -2\sum_{uv\in E} x(u,v) \cdot s(u,v) \cdot \bar{z}(v) \cdot \bar{z}(v).
\end{eqnarray*}
Since the update direction $y_t$ in the partial coloring subroutine is in $U_t$ for all $t$, it follows that $x$ satisfies
\[
\bigg| \sum_{e\in E} x(e) \cdot s(e) \cdot \inner{z}{b_e}^2 \bigg| 
= 2\bigg| \sum_{uv\in E} x(u,v) \cdot s(u,v) \cdot \bar{z}(v) \cdot \bar{z}(v) \bigg|
= 2|\inner{a_z}{x}|
\lesssim \norm{a_z}_2 \sqrt{\log \frac{|\K|}{m}},
\]
where the second equality is by the definition of $a_z$ and $U^2$
and the inequality is by \autoref{theorem:LRR-main-thm}.

It remains to bound $\norm{a_z}_2$.
Note that for each edge $uv \in E_s$, 
it is not in $E_u$ or $E_v$ and so $s(u,v) \lesssim \min\big\{\frac{d(v)}{d_s(v)}, \frac{d(u)}{d_s(u)}\big\}$. 
Also, for each edge $uv \in E_s$, it is not in $E_1$ and so $d_s(u),d_s(v)\gtrsim \frac{m}{n}$. 
These imply that
\[
s(u,v)^2 
\lesssim \min\Big\{\frac{d(v)}{d_s(v)}, \frac{d(u)}{d_s(u)}\Big\}^2 
\leq \frac{d(u) \cdot d(v)}{d_s(u) \cdot d_s(v)} \lesssim \frac{n^2}{m^2} \cdot d(u) \cdot d(v).
\]
Then it follows from this and the definition of $a_z$ that
\[
\|a_w\|_2 
= \sqrt{\sum_{(u,v)\in E_s} s(u,v)^2 \cdot \bar{z}(u)^2 \cdot \bar{z}(v)^2} 
\lesssim \frac{n}{m}\sqrt{\sum_{(u,v)\in E_s} d(u) \cdot d(v) \cdot \bar{z}(u)^2 \cdot \bar{z}(v)^2}
\leq \frac{n}{m} \cdot \bar{z}^\top D \bar{z},
\]
where the last inequality holds as $(\bar{z}^\top D \bar{z})^2 = \sum_{u,v \in V} d(u) \cdot d(v) \cdot \bar{z}(u)^2 \cdot \bar{z}(v)^2$.
Finally, the Courant-Fischer theorem characterizes the second eigenvalue of the normalized Laplacian as 
\[
\lambda_2(\L_G) = \min_{x: \sum_v d(v) \cdot x(v) = 0} \frac{x^\top L_G x}{x^\top D x}.
\]
Since $\bar{z}$ was constructed to satisfy $\sum_v d(v) \cdot \bar{z}(v) = 0$,
it follows from Courant-Fischer that $\bar{z}^\top D \bar{z} \geq \lambda_2(\L_G) \cdot \bar{z}^\top L_G \bar{z} \geq  \lambda \cdot \bar{z}^\top L_G \bar{z} = \lambda \cdot z^\top L_G z$ as $\bar{z}$ is just a shift of $z$.
Putting together, we conclude that
\[
\bigg| \sum_{e\in E} x(e) \cdot s(e) \cdot \inner{z}{b_e}^2 \bigg|
\lesssim \norm{a_z}_2 \sqrt{\log \frac{|\K|}{m}}
\leq \frac{n}{m} \cdot \bar{z}^\top D \bar{z} \cdot \sqrt{\log \frac{|\K|}{m}} 
\leq \frac{n}{\lambda m} \cdot z^\top L_G z \cdot \sqrt{\log \frac{|\K|}{m}}. 
\]
Therefore, we have implemented a partial coloring algorithm required by Step 2(a) of the deterministic graphical sketch algorithm with $f(n) = \frac{1}{\lambda} \sqrt{\log \frac{|\K|}{m}}$, 
and thus the theorem follows from \autoref{lemma:eps-sketch-algorithm}.
\end{proof}

\subsubsection*{Proof of \autoref{theorem: deterministic-eps-sketch}}

To construct graphical spectral sketches for arbitrary unweighted undirected graphs, we once again use an expander-decomposition as was done in~\cite{CGP+23}.
By \autoref{fact: expander-decomp}, every graph $G$ can be decomposed into $G_1, \ldots, G_k$ such that each $G_i$ is an $\Omega(1 / \log^2{n})$-expander and each vertex is contained in at most $O( \log n)$ subgraphs. 
For each $G_i$, we apply \autoref{corollary:expander-spectral-sketch} to compute a deterministic $\eps$-graphical spectral sketch $\Tilde{G}_i$ of $G_i$, with 
$O\big(n_i \log^2{n}\sqrt{\log\frac{|\K|}{n}} / \eps\big)$ edges when $n_i := |V(G_i)|$.
Note that $\Tilde{G} := \sum_i \Tilde{G}_i$ is an $\eps$-graphical spectral sketch of $G$.
As each vertex in $G$ is contained in at most $\log{n}$ subgraphs, it follows that $\sum_{i=1}^k n_i = O(n \log n)$ and thus the total number of edges in $\Tilde{G}$ is $ O\big(n \log^3{n}\sqrt{\log\frac{|\K|}{n}} / \eps\big) $.

\subsection{Effective Resistance Sparsifiers} \label{s:resistance}

An interesting application of the graphical spectral sketch in \cite{CGP+23} was to construct $O( n \polylog n / \eps)$-sized effective resistance sparsifiers, answering an open question from \cite{ACK+16}.  
They proved the following useful property that relates sketches of Laplacian quadratic forms to sketches of its pseudo-inverse.

\begin{lemma}[{\cite[Lemma 6.8]{CGP+23}}] \label{lem:reff-sparsifier-conds}
Let $G = (V,E)$ be an undirected graph. 
Suppose $\Tilde{G}$ is a reweighted subgraph of $G$ satisfying
\begin{enumerate}
      \item $\Tilde{G}$ is am $O(\sqrt{\eps})$-spectral sparsifier of $G$,
      \item $(1 - \eps) \cdot b_{i,j}^\top L^{\dagger}_G b_{i,j} \leq b_{i,j}^\top L^\dagger_G L_{\Tilde{G}} L^{\dagger}_G b_{i,j} \leq (1+ \eps) \cdot b_{i,j}^\top L^{\dagger}_G b_{i,j}$ for all $i, j \in V$.
  \end{enumerate}
Then $(1 - \eps) \cdot b_{i,j}^\top L^{\dagger}_G b_{i,j} \leq b_{i,j}^\top L^{\dagger}_{\Tilde{G}} b_{i,j} \leq (1 + \eps) \cdot b_{i,j}^\top L^{\dagger}_G b_{i,j}$ for all $i, j \in V$.
 \end{lemma}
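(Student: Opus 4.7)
The plan is to compare $b^\top L_{\Tilde G}^\dagger b$ with $b^\top L_G^\dagger b$ through a second-order Neumann expansion whose linear-in-perturbation term is exactly the quantity controlled by hypothesis~(2), and whose quadratic remainder is absorbed by the coarse spectral approximation in hypothesis~(1). Every calculation takes place on $\vec{\one}^\perp$, the common image of $L_G^\dagger$ and $L_{\Tilde G}^\dagger$, which contains every test vector $b = b_{i,j}$; on this subspace both $L_G$ and $L_{\Tilde G}$ are strictly positive definite, so $L_G^{\dagger/2}$ and $L_{\Tilde G}^{\dagger/2}$ are well defined and behave like ordinary inverses.

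First I would diagonalize the approximation by introducing the relative matrix $K := L_G^{\dagger/2} L_{\Tilde G} L_G^{\dagger/2}$. Hypothesis~(1) is equivalent to $\norm{K - I}_{\rm op} \le c\sqrt{\eps}$ for some absolute constant $c$, so setting $Q := K - I$ the truncated Neumann identity
\[
K^{-1} \;=\; I \;-\; Q \;+\; Q^2 K^{-1}
\]
conjugates by $L_G^{\dagger/2}$ to
\[
L_{\Tilde G}^\dagger \;=\; L_G^\dagger \;-\; L_G^{\dagger/2} Q L_G^{\dagger/2} \;+\; L_G^{\dagger/2} Q^2 K^{-1} L_G^{\dagger/2}.
\]
A direct substitution shows $L_G^{\dagger/2} Q L_G^{\dagger/2} = L_G^\dagger L_{\Tilde G} L_G^\dagger - L_G^\dagger$, so the linear term contributes exactly $b^\top L_G^\dagger b - b^\top L_G^\dagger L_{\Tilde G} L_G^\dagger b$ to the quadratic form on $b$, which hypothesis~(2) bounds in absolute value by $\eps \cdot b^\top L_G^\dagger b$.

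For the quadratic remainder I would use that $Q$ and $K$ commute (both are functions of the same symmetric $K$), so $Q^2 K^{-1}$ is positive semidefinite; combining the scalar bounds $\norm{Q}_{\rm op}^2 \le c^2 \eps$ and $\norm{K^{-1}}_{\rm op} \le (1 - c\sqrt{\eps})^{-1}$ then yields $Q^2 K^{-1} \preccurlyeq O(\eps) \cdot I$ on $\vec{\one}^\perp$. Conjugating by $L_G^{\dagger/2}$ gives $L_G^{\dagger/2} Q^2 K^{-1} L_G^{\dagger/2} \preccurlyeq O(\eps) \cdot L_G^\dagger$, so the remainder contributes at most $O(\eps) \cdot b^\top L_G^\dagger b$. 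Summing the two bounds and rescaling $\eps$ by an absolute constant at the very start yields the claimed $(1 \pm \eps)$ approximation of $b^\top L_G^\dagger b$ by $b^\top L_{\Tilde G}^\dagger b$.

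The main obstacle I anticipate is more bookkeeping than genuine technical difficulty: one must verify the pseudoinverse calculus on $\vec{\one}^\perp$ (in particular that $K^{-1}$ exists there, which follows from $\Tilde G$ being connected, itself a consequence of (1) for small $\eps$), and track carefully how the ``$\dagger/2$''-conjugation interacts with both hypotheses. Once those conventions are fixed, the whole argument is just the two short manipulations above and does not require anything beyond the spectral calculus already used throughout the paper.
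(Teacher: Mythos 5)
The paper does not prove this lemma at all: it is imported verbatim as a black box from \cite[Lemma 6.8]{CGP+23}, so there is no in-paper argument to compare yours against. Your proof is correct and self-contained. The exact Neumann identity $K^{-1} = I - Q + Q^2K^{-1}$ for $K = L_G^{\dagger/2}L_{\Tilde G}L_G^{\dagger/2}$ does hold on $\vec{\one}^\perp$ (one checks $(I+Q)(I-Q+Q^2K^{-1}) = I$ directly), the conjugation $L_G^{\dagger/2}K^{-1}L_G^{\dagger/2} = L_{\Tilde G}^\dagger$ is legitimate because $\ker L_{\Tilde G} = \ker L_G$ under hypothesis (1), the identification of the linear term with $L_G^\dagger L_{\Tilde G}L_G^\dagger - L_G^\dagger$ is exactly what hypothesis (2) controls, and the commuting-functional-calculus bound $0 \preccurlyeq Q^2K^{-1} \preccurlyeq c^2\eps(1-c\sqrt{\eps})^{-1}\cdot I$ correctly absorbs the remainder into $O(\eps)\cdot b^\top L_G^\dagger b$. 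This is the same second-order perturbative mechanism used in \cite{CGP+23}. The only caveat is the constant bookkeeping you already flag: with an unspecified constant in the $O(\sqrt{\eps})$ of hypothesis (1), the conclusion can only be $(1\pm O(\eps))$ rather than literally $(1\pm\eps)$; this is an imprecision inherited from how the lemma is restated here, not a gap in your argument, though note that ``rescaling $\eps$ at the start'' also strengthens hypothesis (2), so the honest statement of what you prove is the $(1\pm O(\eps))$ version, which is all the paper ever uses.
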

 
In other words, they proved that the problem of finding an effective resistance sparsifer can be reduced to finding a sparsifier that is simultaneously a $O(\sqrt{\eps})$-spectral sparsifier (which requires only $O(n/\eps)$ edges by \cite{BSS12}) and an $O(\eps)$-graphical spectral sketch with respect to the vectors $\{L^\dagger b_{i,j}\}_{i,j \in V}$.
They used the short cycle decomposition technique and expander decomposition to construct effective resistance sparsifiers with $O( n \polylog n / \eps )$ edges, but it is at least $\Omega( n \log^{16} n / \eps)$ even assuming optimal short cycle decomposition and optimal expander decomposition.

In this subsection, we show that the deterministic discrepancy framework can be used to construct sparser effective resistance sparsifiers and prove \autoref{cor:resistance-sparsifier}.
As was done in~\cite{CGP+23}, we first construct sparsifiers on expander graphs and then apply expander decomposition.

\begin{framed}{\textbf{Deterministic Effective Resistance Sparsification}} 

\textbf{Input:} a $\lambda$-expander $G = (V,E)$, and a target accuracy parameter $\eps$.

\textbf{Output:} $\Tilde{G} = (V, \Tilde{E})$ that is an $\eps$-effective resistance sparsifer of $G$ with $|\Tilde{E}| \lesssim n \sqrt{\log n} / (\lambda \eps)$. 

\begin{enumerate}
\item Initialize $s_0(e) = 1$ for $e\in E$. Let $t = 1$.
\item \textbf{While} $m_t := |\supp(s_{t-1})| > cn \sqrt{\log n} /(\lambda \eps)$ for some fixed constant $c$ \textbf{do}
\begin{enumerate}
\item Apply the matrix partial coloring algorithm with inputs $\{A_e\}_{e \in E}$ where $A_e = L_G^{\dagger/2}b_eb_e^\top L_G^{\dagger/2}$ and
vector constraints $\K = \{L_G^{\dagger}b_{i,j}\}_{i,j \in V}$.
Find a partial coloring $x_t: E \to [-1,1]$ such that
  \begin{enumerate}
  \item the matrix discrepancy is 
   \[\norm{\sum\nolimits_e x(e) \cdot s_{t-1}(e) \cdot L_G^{\dagger/2}b_eb_e^\top L_G^{\dagger/2}}_{\rm op} \lesssim \sqrt{\frac{n}{m}},\]
  \item the vector discrepancy is
   \[\sum\nolimits_{e}x(e) \cdot s_{t-1}(e) \cdot \big(b_{i,j}^\top L_G^{\dagger}b_e\big)^2 \lesssim \frac{n\sqrt{\log{n}}}{\lambda m}\cdot b_{i,j}^\top L^{\dagger}b_{i,j} \quad \forall i,j \in V,\]
  \item $x_t$ is degree-preserving such that $\sum_{u:u\sim v} x_t(u,v) \cdot s_{t-1}(u,v) = 0$ for all $v \in V$,
  \item $| \{i \in \supp(s_{t-1}) \mid x_t(i) = \pm 1\}| = \Omega(m_t)$ and $x_t(i) = 0$ for all $i \not\in \supp(s_{t-1})$.
  \end{enumerate}
\item If there are more $x_t(i) = 1$ than $x_t(i) = -1$ then update $x_t \gets - x_t$.
\item Update $s_{t}(e)\gets s_{t}(e) = s_{t-1}(e) \cdot (1+x_t(e))$ for all $e \in E$.
\item $t\gets t+1$
\end{enumerate}
\item \textbf{Return:} $\Tilde{G}$ with edges weight $s_T(e)$ for $e\in E$ where $T$ is the last iteration.
\end{enumerate}
\end{framed}

We prove the existence of a nearly-linear sized effective resistance sparsifers for $\Omega(1)$-expander graphs.

\begin{theorem} \label{t:Reff-expander}
Given a $\lambda$-expander graph $G$, the deterministic effective resistance sparsification algorithm always returns an $\eps$-effective resistance sparsifier with $O(\frac{n\sqrt{\log{n}}}{\lambda\eps})$ edges in deterministic polynomial time.
\end{theorem}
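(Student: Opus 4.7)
The plan is to reduce the theorem to \autoref{lem:reff-sparsifier-conds} by showing the output $\tilde G$ is simultaneously an $O(\sqrt{\eps})$-spectral sparsifier of $G$ and satisfies the second hypothesis of that lemma within a factor of $1 \pm \eps$. The main technical task is to implement the partial coloring subroutine in Step 2(a), producing in deterministic polynomial time a coloring that meets all four discrepancy, degree, and support conditions.

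To implement the subroutine, I would combine the matrix discrepancy walk of \autoref{s:partial-coloring} with the vector discrepancy walk of \autoref{theorem:LRR-main-thm} into a single walk whose update direction controls both potentials at once. Following \autoref{s:graphical}, I would first freeze every edge in the dense-parts complement $E_0\cup E_1$, leaving $m' \geq 7m/10$ active coordinates. Within this $m'$-dimensional space, the matrix framework of \autoref{s:subspace} furnishes a ``good'' direction subspace of dimension at least $m'/4 - O(1)$ (on which the matrix potential increases by at most the bound in \autoref{l:quadratic}), while \autoref{theorem:LRR-main-thm} furnishes another ``good'' subspace of dimension at least $4m'/5$ (on which each vector potential is controlled). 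Their intersection has dimension at least $m'/20 - O(1)$, and further intersecting with the codimension-$n$ degree-preserving subspace keeps the dimension $\Omega(m')$ as long as $m' = \Omega(n)$, which is guaranteed by the outer loop's stopping condition $m_t > cn\sqrt{\log n}/(\lambda\eps)$.

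With the partial coloring available, the matrix discrepancy (i) comes out of \autoref{l:quadratic} exactly as in \autoref{theorem:sparsification-alg}, and the vector discrepancy (ii) follows from the calculation in \autoref{corollary:expander-spectral-sketch}: the degree-preserving subspace lets us replace each $z = L_G^\dagger b_{ij}$ by its degree-centered shift $\bar z$, the dense-parts subspace gives $s(u,v)^2 \lesssim (n/m)^2 d(u) d(v)$, and Courant--Fischer with $\lambda_2(\L_G) \geq \lambda$ gives $\bar z^\top D \bar z \geq \lambda \cdot z^\top L_G z$. By sign-flipping (Step 2(b)) and the $\Omega(m_t)$-coordinate property, $m_t$ decays geometrically, so the outer loop terminates after $O(\log m)$ iterations in polynomial time.

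A final telescoping sum over $t = 1, \ldots, T$ then yields
\begin{align*}
\Bignorm{L_G^{\dagger/2}(L_{\tilde G} - L_G) L_G^{\dagger/2}}_{\rm op}
&\lesssim \sum_t \sqrt{n/m_t} \lesssim \sqrt{n/m_T} \lesssim \sqrt{\lambda\eps/\sqrt{\log n}} \leq O(\sqrt{\eps}), \\
\Bigabs{b_{ij}^\top L_G^\dagger (L_{\tilde G} - L_G) L_G^\dagger b_{ij}}
&\lesssim \sum_t \frac{n\sqrt{\log n}}{\lambda m_t} \cdot b_{ij}^\top L_G^\dagger b_{ij}
\lesssim \eps \cdot b_{ij}^\top L_G^\dagger b_{ij},
\end{align*}
where both geometric sums are dominated by their last terms and $m_T \lesssim n\sqrt{\log n}/(\lambda\eps)$. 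The first line verifies the first hypothesis, and the second line the second hypothesis, of \autoref{lem:reff-sparsifier-conds}, so applying that lemma yields the claimed $\eps$-effective resistance sparsifier. The main obstacle is implementing the partial coloring subroutine: showing that the combined ``good'' subspace for the matrix and vector potentials remains positive-dimensional after intersecting with the degree-preserving and dense-parts constraints, which is why freezing $E_0 \cup E_1$ up front and then carefully counting codimensions is essential.
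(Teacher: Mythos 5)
Your proposal is correct and follows essentially the same route as the paper: implement Step 2(a) by intersecting the ``good'' subspace from the matrix discrepancy walk of \autoref{s:partial-coloring} with the subspace from \autoref{theorem:LRR-main-thm} (together with the degree-preserving and dense-parts restrictions from \autoref{s:graphical}), verify by codimension counting that the intersection is large whenever $m_t$ exceeds the stopping threshold, and then telescope both discrepancy bounds to check the two hypotheses of \autoref{lem:reff-sparsifier-conds}. The only differences are cosmetic bookkeeping (you freeze $E_0\cup E_1$ up front and use the unadjusted $m/4$ bound, while the paper folds those constraints into $V_t$ and inflates the matrix subspace to $\frac23 m_t$), neither of which affects correctness.
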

\begin{proof}
The main task is to show that the partial coloring in Step 2(a) can be implemented in deterministic polynomial time.
Suppose we are in the $\tau$-th iteration in the while loop.
Let $t \in [T]$ denote the iteration of a particular run of the deterministic walk partial coloring algorithm in Step 2(a).
In the proof of \autoref{lemma:deterministic-partial-coloring} for matrix discrepancy, we showed that there exists a subspace $U_{t}$ of dimension at least $\frac23 m_{t}$ such that if $y_{t}$ is picked from $U_{t}$ for all iterations $t$, then the matrix discrepancy requirement in Step 2(a)(i) is satisfied at the end\footnote{In the proof of \autoref{lemma:deterministic-partial-coloring}, we only showed that $\dim(U_t) \geq \frac14 m$, but this can be adjusted to have dimension at least $\frac23 m$ (or any arbitrary constant times $m$) by making the eigenspace in $U^3$ larger by a constant factor which would only increase the second-order term by a constant.}. 
In the proof of \autoref{corollary:expander-spectral-sketch} for spectral sketches, we showed that under the same framework, there exists a subspace $V_{t}$ of dimension at least $\frac12 m_{t} - n$ such that as long as $y_{t}$ is picked from $V_{t}$ for all iterations $t$, then the vector discrepancy requirement in Step 2(a)(ii) and the degree-preserving constraints in Step 2(a)(iii) are satisfied at the end. 
Therefore, by choosing $y_{t} \in U_{t}\cap V_{t}$, which still has large enough dimension as long as $m_t > 10n$ say, to incorporate the standard partial coloring constraints in Step 2(a)(iv).
    Finally, by using the same arguments as in  \autoref{theorem:sparsification-alg} and \autoref{lemma:eps-sketch-algorithm}, we see that at the end of the algorithm, our graph $\Tilde{G}$ has $O(\frac{n\sqrt{\log{n}}}{\lambda\eps})$ edges and satisfy that it is an $O(\sqrt{\eps})$-spectral sparsifer as well as an $O(\eps)$-spectral sketches with respect to $\K$, and hence $\Tilde{G}$ is an $\eps$ effective resistance sparsifier of $G$ by \autoref{lem:reff-sparsifier-conds}.
\end{proof}

Finally, we obtain \autoref{cor:resistance-sparsifier} using \autoref{t:Reff-expander} and the expander decomposition in \autoref{fact: expander-decomp}.
The proof is the same as in the proof of \autoref{theorem: deterministic-eps-sketch} and is omitted.

\section*{Concluding Remarks}

Building on recent works~\cite{RR20,PV23},
we developed a unified algorithmic framework for both discrepancy minimization and spectral sparsification, by combining the potential functions from spectral sparsification~\cite{BSS12,AZLO15} and the partial coloring and perturbation updates from discrepancy minimization~\cite{Spe85,Ban10,LM15,LRR17}.
We demonstrate this framework by showing simpler and improved constructions for various spectral sparsification problems.
The analysis is self-contained and elementary and is considerably simpler than that in~\cite{RR20}, and even in the standard setting is more intuitive and arguably simpler than that in~\cite{BSS12,AZLO15}.

Together with the results in~\cite{PV23}, this framework recovers best known results in many settings in discrepancy minimization and spectral sparsification, but not the most advanced ones such as the matrix Spencer problem~\cite{BJM23} and the Weaver's discrepancy problem~\cite{MSS15}.
It is thus an interesting and important open direction to extend this framework to recover these results.
One concrete and intermediate problem in this direction is to recover the result in~\cite{STZ24} for Eulerian sparsifiers, which uses the result in~\cite{BJM23} for matrix Spencer as a black box.
Other related problems are to recover some results in approximation algorithms~\cite{AO15,LZ22a,LZ22b}, which built on the results for spectral sparsification~\cite{AZLO15,AZLSW21} and for the Kadison-Singer problem~\cite{MSS15,AO14,KLS20}.

Another open question is to design fast deterministic algorithms for spectral sparsification and discrepancy minimization, which has been open even in the standard settings~\cite{BSS12,Ban10}.

\section*{Acknowledgements}

We thank Valentino Dante Tjowasi for collaborations at the initial stage of this work.

\bibliographystyle{alpha}
\bibliography{references}

\newcommand{\etalchar}[1]{$^{#1}$}
\begin{thebibliography}{AZLSW21}

\bibitem[ACK{\etalchar{+}}16]{ACK+16}
Alexandr Andoni, Jiecao Chen, Robert Krauthgamer, Bo~Qin, David~P. Woodruff, and Qin Zhang.
\newblock On sketching quadratic forms.
\newblock In {\em I{TCS}'16---{P}roceedings of the 2016 {ACM} {C}onference on {I}nnovations in {T}heoretical {C}omputer {S}cience}, pages 311--319, 2016.

\bibitem[AKM{\etalchar{+}}20]{AKMPSV20}
AmirMahdi Ahmadinejad, Jonathan~A. Kelner, Jack Murtagh, John Peebles, Aaron Sidford, and Salil~P. Vadhan.
\newblock High-precision estimation of random walks in small space.
\newblock In {\em 61st {IEEE} Annual Symposium on Foundations of Computer Science, {FOCS} 2020}, pages 1295--1306, 2020.

\bibitem[AOG14]{AO14}
Nima Anari and Shayan Oveis~Gharan.
\newblock The {K}adison-{S}inger problem for strongly {R}ayleigh measures and applications to asymmetric {TSP}.
\newblock {\em arXiv preprint arXiv:1412.1143}, 2014.

\bibitem[AOG15]{AO15}
Nima Anari and Shayan Oveis~Gharan.
\newblock Effective-resistance-reducing flows, spectrally thin trees, and asymmetric {TSP}.
\newblock In {\em Proceedings of the 56th Annual Symposium on Foundations of Computer Science}, FOCS '15, pages 20--39, Washington, DC, USA, 2015. IEEE.

\bibitem[APP{\etalchar{+}}23]{APPSV23}
AmirMahdi Ahmadinejad, John Peebles, Edward Pyne, Aaron Sidford, and Salil~P. Vadhan.
\newblock Singular value approximation and sparsifying random walks on directed graphs.
\newblock In {\em 64th {IEEE} Annual Symposium on Foundations of Computer Science, {FOCS} 2023}, pages 846--854, 2023.

\bibitem[AZLO15]{AZLO15}
Zeyuan Allen-Zhu, Zhenyu Liao, and Lorenzo Orecchia.
\newblock Spectral sparsification and regret minimization beyond matrix multiplicative updates [extended abstract].
\newblock In {\em S{TOC}'15---{P}roceedings of the 2015 {ACM} {S}ymposium on {T}heory of {C}omputing}, pages 237--245. ACM, New York, 2015.

\bibitem[AZLSW21]{AZLSW21}
Zeyuan Allen-Zhu, Yuanzhi Li, Aarti Singh, and Yining Wang.
\newblock Near-optimal discrete optimization for experimental design: a regret minimization approach.
\newblock {\em Math. Program.}, 186(1-2, Ser. A):439--478, 2021.

\bibitem[Ban10]{Ban10}
Nikhil Bansal.
\newblock Constructive algorithms for discrepancy minimization.
\newblock In {\em Proceedings of the 51th Annual {IEEE} Symposium on Foundations of Computer Science}, FOCS '10, pages 3--10. {IEEE} Computer Society, 2010.

\bibitem[BBvH23]{BBvH23}
Afonso~S. Bandeira, March~T. Boedihardjo, and Ramon van Handel.
\newblock Matrix concentration inequalities and free probability.
\newblock {\em Invent. Math.}, 234(1):419--487, 2023.

\bibitem[BJM23]{BJM23}
Nikhil Bansal, Haotian Jiang, and Raghu Meka.
\newblock Resolving matrix {S}pencer conjecture up to poly-logarithmic rank.
\newblock In {\em S{TOC}'23---{P}roceedings of the 55th {A}nnual {ACM} {S}ymposium on {T}heory of {C}omputing}, pages 1814--1819, 2023.

\bibitem[BK15]{BK15}
Andr\'{a}s~A. Bencz\'{u}r and David~R. Karger.
\newblock Randomized approximation schemes for cuts and flows in capacitated graphs.
\newblock {\em SIAM J. Comput.}, 44(2):290--319, 2015.

\bibitem[BLV22]{BLV22}
Nikhil Bansal, Aditi Laddha, and Santosh~S. Vempala.
\newblock A unified approach to discrepancy minimization.
\newblock In {\em Approximation, Randomization, and Combinatorial Optimization. Algorithms and Techniques, {APPROX/RANDOM} 2022}, volume 245 of {\em LIPIcs}, pages 1:1--1:22. Schloss Dagstuhl - Leibniz-Zentrum f{\"{u}}r Informatik, 2022.

\bibitem[BSS12]{BSS12}
Joshua Batson, Daniel~A. Spielman, and Nikhil Srivastava.
\newblock Twice-{R}amanujan sparsifiers.
\newblock {\em SIAM J. Comput.}, 41(6):1704--1721, 2012.

\bibitem[CCL{\etalchar{+}}15]{CCLPT15}
Dehua Cheng, Yu~Cheng, Yan Liu, Richard Peng, and Shang{-}Hua Teng.
\newblock Spectral sparsification of random-walk matrix polynomials.
\newblock {\em CoRR}, abs/1502.03496, 2015.

\bibitem[CGP{\etalchar{+}}23]{CGP+23}
Timothy Chu, Yu~Gao, Richard Peng, Sushant Sachdeva, Saurabh Sawlani, and Junxing Wang.
\newblock Graph sparsification, spectral sketches, and faster resistance computation via short cycle decompositions.
\newblock {\em SIAM J. Comput.}, 52(6):FOCS18--85--FOCS18--157, 2023.

\bibitem[Cha00]{Chazelle}
Bernard Chazelle.
\newblock {\em The discrepancy method}.
\newblock Cambridge University Press, Cambridge, 2000.
\newblock Randomness and complexity.

\bibitem[CKP{\etalchar{+}}17]{CKPRSV17}
Michael~B. Cohen, Jonathan Kelner, John Peebles, Richard Peng, Anup~B. Rao, Aaron Sidford, and Adrian Vladu.
\newblock Almost-linear-time algorithms for {M}arkov chains and new spectral primitives for directed graphs.
\newblock In {\em S{TOC}'17---{P}roceedings of the 49th {A}nnual {ACM} {SIGACT} {S}ymposium on {T}heory of {C}omputing}, pages 410--419. ACM, New York, 2017.

\bibitem[JRT24]{JRT24}
Arun Jambulapati, Victor Reis, and Kevin Tian.
\newblock Linear-sized sparsifiers via near-linear time discrepancy theory.
\newblock In {\em Proceedings of the 2024 {A}nnual {ACM}-{SIAM} {S}ymposium on {D}iscrete {A}lgorithms ({SODA})}, pages 5169--5208, 2024.

\bibitem[JS18]{JS18}
Arun Jambulapati and Aaron Sidford.
\newblock Efficient {$\tilde{O}(n/\epsilon)$} spectral sketches for the {L}aplacian and its pseudoinverse.
\newblock In {\em Proceedings of the {T}wenty-{N}inth {A}nnual {ACM}-{SIAM} {S}ymposium on {D}iscrete {A}lgorithms ({SODA})}, pages 2487--2503. SIAM, Philadelphia, PA, 2018.

\bibitem[Kar99]{Kar99}
David~R. Karger.
\newblock Random sampling in cut, flow, and network design problems.
\newblock {\em Math. Oper. Res.}, 24(2):383--413, 1999.

\bibitem[KLS20]{KLS20}
Rasmus Kyng, Kyle Luh, and Zhao Song.
\newblock Four deviations suffice for rank 1 matrices.
\newblock {\em Adv. Math.}, 375:107366, 17, 2020.

\bibitem[KVV04]{KVV04}
Ravi Kannan, Santosh Vempala, and Adrian Vetta.
\newblock On clusterings: good, bad and spectral.
\newblock {\em J. ACM}, 51(3):497--515, 2004.

\bibitem[LM15]{LM15}
Shachar Lovett and Raghu Meka.
\newblock Constructive discrepancy minimization by walking on the edges.
\newblock {\em SIAM J. Comput.}, 44(5):1573--1582, 2015.

\bibitem[LRR17]{LRR17}
Avi Levy, Harishchandra Ramadas, and Thomas Rothvoss.
\newblock Deterministic discrepancy minimization via the multiplicative weight update method.
\newblock In {\em Integer Programming and Combinatorial Optimization (IPCO)}, pages 380--391, 2017.

\bibitem[LZ22a]{LZ22b}
Lap~Chi Lau and Hong Zhou.
\newblock A local search framework for experimental design.
\newblock {\em SIAM J. Comput.}, 51(4):900--951, 2022.

\bibitem[LZ22b]{LZ22a}
Lap~Chi Lau and Hong Zhou.
\newblock A spectral approach to network design.
\newblock {\em SIAM J. Comput.}, 51(4):1018--1064, 2022.

\bibitem[Mat99]{Matousek}
Jiri Matousek.
\newblock {\em Geometric discrepancy}, volume~18 of {\em Algorithms and Combinatorics}.
\newblock Springer-Verlag, Berlin, 1999.
\newblock An illustrated guide.

\bibitem[Mek14]{Mek14}
Raghu Meka.
\newblock Blog post: Discrepancy and beating the union bound, 2014.
\newblock URL: \url{https://windowsontheory.org/2014/02/07/discrepancy-and-beating-the-union-bound/}. Last visited on 2024/06/11.

\bibitem[MSS15]{MSS15}
Adam~W. Marcus, Daniel~A. Spielman, and Nikhil Srivastava.
\newblock Interlacing families {II}: {M}ixed characteristic polynomials and the {K}adison-{S}inger problem.
\newblock {\em Ann. of Math. (2)}, 182(1):327--350, 2015.

\bibitem[PV23]{PV23}
Lucas Pesenti and Adrian Vladu.
\newblock Discrepancy minimization via regularization.
\newblock In {\em Proceedings of the 2023 {A}nnual {ACM}-{SIAM} {S}ymposium on {D}iscrete {A}lgorithms ({SODA})}, pages 1734--1758. SIAM, Philadelphia, PA, 2023.

\bibitem[Rot17]{Rot17}
Thomas Rothvoss.
\newblock Constructive discrepancy minimization for convex sets.
\newblock {\em SIAM J. Comput.}, 46:224--234, 2017.

\bibitem[RR20]{RR20}
Victor Reis and Thomas Rothvoss.
\newblock Linear size sparsifier and the geometry of the operator norm ball.
\newblock In {\em Proceedings of the 2020 {ACM}-{SIAM} {S}ymposium on {D}iscrete {A}lgorithms ({SODA})}, pages 2337--2348. SIAM, Philadelphia, PA, 2020.

\bibitem[Spe85]{Spe85}
Joel Spencer.
\newblock Six standard deviations suffice.
\newblock {\em Trans. Amer. Math. Soc.}, 289(2):679--706, 1985.

\bibitem[SS11]{SS11}
Daniel~A. Spielman and Nikhil Srivastava.
\newblock Graph sparsification by effective resistances.
\newblock {\em SIAM J. Comput.}, 40(6):1913--1926, 2011.

\bibitem[ST11]{ST11}
Daniel~A. Spielman and Shang-Hua Teng.
\newblock Spectral sparsification of graphs.
\newblock {\em SIAM J. Comput.}, 40(4):981--1025, 2011.

\bibitem[ST14]{ST14}
Daniel~A. Spielman and Shang-Hua Teng.
\newblock Nearly linear time algorithms for preconditioning and solving symmetric, diagonally dominant linear systems.
\newblock {\em SIAM J. Matrix Anal. Appl.}, 35(3):835--885, 2014.

\bibitem[STZ24]{STZ24}
Sushant Sachdeva, Anvith Thudi, and Yibin Zhao.
\newblock Better sparsifiers for directed {E}ulerian graphs.
\newblock In {\em 51st {I}nternational {C}olloquium on {A}utomata, {L}anguages, and {P}rogramming ({ICALP})}, volume 297 of {\em LIPIcs. Leibniz Int. Proc. Inform.}, pages Art. No. 119, 20, 2024.

\bibitem[Wea04]{Weaver}
Nik Weaver.
\newblock The {K}adison-{S}inger problem in discrepancy theory.
\newblock {\em Discrete Math.}, 278(1-3):227--239, 2004.

\bibitem[Zou12]{Zou12}
Anastasios Zouzias.
\newblock A matrix hyperbolic cosine algorithm and applications.
\newblock In {\em Automata, Languages, and Programming - 39th International Colloquium, {ICALP} 2012, Proceedings, Part {I}}, volume 7391 of {\em Lecture Notes in Computer Science}, pages 846--858. Springer, 2012.

\end{thebibliography}

\appendix

\section{Simple Proof of \autoref{l:potential}} \label{a:omitted}

We first compute the optimizer $M$.
Since $\Delta_m$ is a compact set and the regularizer $\phi(M)=-2\tr(M^\frac12)$ is strongly convex, the optimizer of \eqref{eq:potential} is attained and uniquely defined. 
Moreover, as the gradient of the $\ell_{1/2}$-regularizer is $\nabla \phi(M) = M^{-1/2}$, which blows up when $M$ is singular, the optimizer of \eqref{eq:potential} stays in the interior of $\Delta_m$. 
Therefore, we can apply the KKT condition to \eqref{eq:potential} without the constraint $M \succcurlyeq 0$ (so only the constraint $\tr(M) = \inner{M}{I} = 1$), and obtain a closed-form characterization of the unique optimizer
\[ M = (u_x I_n - \eta A(x))^{-2}, \]
where $u_x$ is the unique value such that $M$ is a density matrix (note that $u_x$ is a function of $x$).
We can then compute the closed-form characterization of the potential function. Let $A(x) = \sum_{i=1}^n \lambda_i v_i v_i^\top$ be its eigen-decomposition. 
Then $M = \sum_{i=1}^n (u_x - \eta \lambda_i)^{-2} v_i v_i^\top$, and so
\begin{align*}
\Phi(x) 
= \frac{1}{\eta} \tr(M^{\frac12}) + \langle A(x), M \rangle + \frac{1}{\eta}\tr(M^{\frac12}) 
= \frac{1}{\eta} \tr(M^{\frac12}) + \sum_{i=1}^n \left( \lambda_i (u_x - \eta \lambda_i)^{-2} + \frac{(u_x - \eta \lambda_i)^{-1}}{\eta} \right).
\end{align*}
Rearranging the terms and using $\sum_{i=1}^n (u_x - \eta \lambda_i)^{-2} = 1$ as $M$ is a density matrix, we obtain that
\begin{align} \label{eq:Phi}
    \Phi(x) = \frac{1}{\eta} \tr(M^{1/2}) + \frac{u_x}{\eta} \cdot \sum_{i=1}^d (u_x - \eta \lambda_i)^{-2} = \frac{1}{\eta} \tr((u_x I_n - \eta A(x))^{-1}) + \frac{u_x}{\eta}. 
\end{align}
Note that this is basically the same as the trace inverse potential function in~\cite{BSS12}.

To bound the potential increase, the following claim was implicitly proved in~\cite{AZLO15} using mirror descent and Bregman divergence. 
Here we provide a simpler proof using elementary convexity arguments.

\begin{claim} \label{cl:trace-inverse}
If $\norm{(u_x I_n - \eta A(x))^{-1} \cdot \eta A(y)}_{\rm op} < 1$, then it holds that
\[
\Phi(x+y) - \Phi(x) \leq \frac{1}{\eta} \left( \tr\big((u_x I_n - \eta A(x+y))^{-1}\big) - \tr\big((u_x I_n - \eta A(x))^{-1}\big) \right).
\]
\end{claim}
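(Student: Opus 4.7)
The plan is to exploit the variational characterization of $\Phi$ that was just derived in \eqref{eq:Phi}. Reading off from that derivation, one has
\[
\Phi(z) \;=\; \min_{v \,:\, vI_n \,\succ\, \eta A(z)} F_z(v), \qquad F_z(v) \;:=\; \frac{1}{\eta}\tr\!\big((vI_n - \eta A(z))^{-1}\big) + \frac{v}{\eta},
\]
with the unique minimizer occurring at $v = u_z$. Indeed, $F_z$ is strictly convex in $v$ on the feasible interval, and the stationarity condition $\partial_v F_z(v) = -\frac{1}{\eta}\tr((vI_n - \eta A(z))^{-2}) + \frac{1}{\eta} = 0$ is precisely the density matrix normalization $\tr((vI_n - \eta A(z))^{-2}) = 1$ that characterizes $u_z$. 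Hence $\Phi(z) = F_z(u_z)$, and for any other $v$ satisfying $vI_n \succ \eta A(z)$ we have $\Phi(z) \leq F_z(v)$.

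With this in hand, the proof amounts to a single suboptimality step: plug $v = u_x$ (the optimal scalar for $x$) into the variational formula for $\Phi(x+y)$. Before doing so I need to verify admissibility, i.e., $u_x I_n - \eta A(x+y) \succ 0$. Write $P := u_x I_n - \eta A(x) \succ 0$ and $Q := \eta A(y)$, so the target is $P - Q \succ 0$. The matrix $P^{-1}Q$ is similar to the symmetric matrix $P^{-1/2} Q P^{-1/2}$ via conjugation by $P^{1/2}$, hence they share the same spectrum. Since the operator norm of a symmetric matrix equals its spectral radius,
\[
\norm{P^{-1/2} Q P^{-1/2}}_{\rm op} \;=\; \rho(P^{-1} Q) \;\leq\; \norm{P^{-1} Q}_{\rm op} \;<\; 1,
\]
where the final inequality is the hypothesis of the claim. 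This gives $I_n - P^{-1/2}QP^{-1/2} \succ 0$, and therefore $P - Q \succ 0$ after conjugating by $P^{1/2}$.

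With admissibility in hand, the inequality is immediate:
\[
\Phi(x+y) \;\leq\; F_{x+y}(u_x) \;=\; \frac{1}{\eta}\tr\!\big((u_x I_n - \eta A(x+y))^{-1}\big) + \frac{u_x}{\eta},
\]
while $\Phi(x) = F_x(u_x) = \frac{1}{\eta}\tr((u_x I_n - \eta A(x))^{-1}) + \frac{u_x}{\eta}$ by \eqref{eq:Phi}. Subtracting, the $u_x/\eta$ terms cancel and one recovers the stated bound on $\Phi(x+y) - \Phi(x)$. There is no real obstacle here; the only point that deserves a line of care is the admissibility check, since $P^{-1}Q$ is generally not symmetric and so its operator norm and spectral radius differ, but the inequality $\rho(P^{-1}Q) \leq \norm{P^{-1}Q}_{\rm op}$ is enough for our purpose. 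This is really the envelope-theorem view of $\Phi$ as a pointwise minimum of the jointly convex family $\{F_z(v)\}$, executed by direct substitution without invoking mirror descent or Bregman divergence.
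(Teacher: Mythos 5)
Your proof is correct and is essentially the paper's argument in a slightly repackaged form: the paper's appendix also works from the closed form \eqref{eq:Phi}, establishes convexity of $u \mapsto \tr\big((uI_n - \eta A(x+y))^{-1}\big)$ on the relevant interval, and uses the normalization $\tr\big((u_{x+y}I_n-\eta A(x+y))^{-2}\big)=1$ (i.e.\ $f'(u_{x+y})=-1$) to derive exactly your suboptimality inequality $F_{x+y}(u_{x+y})\leq F_{x+y}(u_x)$ via the first-order convexity bound. Your admissibility check that $u_xI_n - \eta A(x+y)\succ 0$ is in fact spelled out more carefully than in the paper, which asserts the implication $\norm{P^{-1}Q}_{\rm op}<1\Rightarrow Q\preccurlyeq P$ without the similarity/spectral-radius argument.
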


\begin{proof}
By \eqref{eq:Phi}, the potential increase is
\begin{align} \label{eq:change}
    \Phi(x + y) - \Phi(x) = \frac{1}{\eta} \Big( \underbrace{\tr\big( (u_{x+y} I_n - \eta A(x+y))^{-1} \big)}_{(*)} - \tr\big( (u_x I_n - \eta A(x))^{-1} \big) \Big) + \frac{u_{x+y} - u_x}{\eta}.
\end{align}

Given $\eta$ and $x$ and $y$, consider the univariate function
\[
    f(u) = \tr\big( (u I_n - \eta A(x+y))^{-1} \big).
\]
Note that the $(*)$ term is equal to $f(u_{x+y})$.
We will use the convexity of $f$ to bound the $(*)$ term.

First, we show that $f(u)$ is convex within the interval between $u_x$ and $u_{x+y}$.
Let $\beta_1 \leq \cdots \leq \beta_n$ be the eigenvalues of $\eta A(x+y)$. 
When $u > \beta_n = \lambda_{\max}(\eta A(x+y))$, we can check that $f(u) = \sum_{i=1}^n (u-\beta_i)^{-1}$ is convex in $u$ (say by the second derivative test).
So, we just need to show that both $u_{x+y}$ and $u_x$ are greater than $\lambda_{\max}(\eta A(x+y))$ to establish the convexity within the interval between them.
For the former, since $(u_{x+y} I_n - \eta A(x+y))^{-2}$ is positive definite, it follows immediately that $u_{x+y} > \lambda_{\max}(\eta A(x+y))$.
For the latter, the assumption $\norm{(u_x I_n - \eta A(x))^{-1} \cdot \eta A(y)}_{\rm op} < 1$ implies that $\eta A(y) \preccurlyeq u_x I_n - \eta A(x)$, and thus $u_x > \lambda_{\max}(\eta A(x+y))$.
Therefore, $f(u)$ is convex within the interval between $u_x$ and $u_{x+y}$.
This implies that
\[
f(u_{x+y}) + f'(u_{x+y}) (u_x - u_{x+y}) \leq f(u_x),
\]
or equivalently,
\begin{align} \label{eq:convexity}
\tr\big( (u_{x+y} I_n - \eta A(x+y))^{-1} \big) \leq \tr\big( (u_x I_n - \eta A(x+y))^{-1} \big) - f'(u_{x+y}) \cdot (u_x - u_{x+y}).
\end{align}
It remains to compute $f'(u_{x+y})$.
The derivative of $f(u)$ when $u > \lambda_{\max}(\eta A(x+y))$ is 
\begin{align*}
    f'(u) & = \tr(\partial_u (u I_n - \eta A(x+y))^{-1}) 
     = - \tr\left( (u I_n - \eta A(x+y))^{-2} \right).
\end{align*}
Since $(u_{x+y} I_n - \eta A(x+y))^{-2}$ is a density matrix, when $f'(u)$ is evaluated at $u_{x+y}$, we have
\begin{align*}
    f'(u_{x+y}) = - \tr\left( (u_{x+y} I_n - \eta A(x+y))^{-2} \right) = -1.
\end{align*}
Plugging it into \eqref{eq:convexity}, it follows that
\begin{align*}
\tr\big( (u_{x+y} I_n - \eta A(x+y))^{-1} \big) \leq \tr\big( (u_x I_n - \eta A(x+y))^{-1} \big) + (u_x - u_{x+y}).
\end{align*}
Combining with \eqref{eq:change}, we conclude that
\[
    \Phi(x + y) - \Phi(x) \leq \frac{1}{\eta} \left( \tr\big( (u_x I_n - \eta A(x+y))^{-1} \big) - \tr\big( (u_x I_n - \eta A(x))^{-1} \big) \right). \qedhere
\]
\end{proof}

We now apply \autoref{l:RR2} (from~\cite{RR20}) to finish the proof.
Set $A := u_x I_n - \eta A(x) = M^{-\frac12} \succ 0$ and $B:=A(y)$.
The assumption $\norm{M^{\frac12} \cdot \eta A(y)}_{\rm op} \leq \frac12$ in \autoref{l:potential} translates to the assumption $\norm{\eta A^{-1} B}_{\rm op} \leq \frac12$ in \autoref{l:RR2}.
So, we can apply \autoref{l:RR2} to conclude that there is a value $c \in [-2,2]$ so that
\[
\tr((u_x I_n - \eta A(x+y))^{-1}) - \tr((u_x I_n - \eta A(x))^{-1}) = \eta \tr(M A(y)) + c \eta^2 \tr(M^{\frac12} A(y) M^{\frac12} A(y) M^{\frac12}).
\]
Combining with \autoref{cl:trace-inverse}, we proved \autoref{l:potential}.

\end{document}